\newif\ifacm \acmtrue
\acmfalse

\ifacm
\documentclass[acmsmall,screen, review, manuscript]{acmart}
\else
\documentclass[a4paper,DIV=15,11pt]{scrartcl}
\fi

\ifacm
\AtBeginDocument{%
  }

\setcopyright{cc}
\copyrightyear{2025}
\acmYear{2025}
\acmDOI{XXXXXXX.XXXXXXX}

\acmConference[ICFP '25]{International Conference on Functional Programming}{October 12--18, 2025}{Singapore}

\acmISBN{978-1-4503-XXXX-X/2025/00}
\fi

\usepackage{amsmath}
\unless\ifacm \usepackage{amssymb,amsthm} \fi
\usepackage{mleftright}

\DeclareEmphSequence{\itshape, \normalfont\sffamily\bfseries, \itshape} 

\usepackage{tikz}
\tikzstyle{bullet}=[circle,fill=black,minimum size=0.1cm,draw,inner sep=0cm]
\usetikzlibrary{trees}
\usetikzlibrary{shapes.geometric}
\usetikzlibrary{positioning}
\usetikzlibrary{tikzmark,calc,topaths}

\unless\ifacm
\usepackage[hidelinks]{hyperref}
\hypersetup{
colorlinks=true,
citecolor=red,
linkcolor=blue,
urlcolor=blue,
}
\fi

\ifacm
\usepackage[nameinlink, noabbrev]{cleveref}
\else
\usepackage[nameinlink, noabbrev, poorman]{cleveref}
\fi

\crefname{section}{Section}{Sections}
\crefname{subsection}{Section}{Sections}
\crefname{appendix}{Appendix}{Appendices}
\crefname{figure}{Figure}{Figures}
\crefname{table}{Table}{Tables}
\crefname{equation}{}{}
\creflabelformat{equation}{#2(#1)#3}

\usepackage{bussproofs, bussproofs-extra}
\newenvironment*{inlineprooftree}{\medskip}{\DisplayProof}%

\ifacm
\theoremstyle{acmplain}%
\newtheorem{theorem}{Theorem}[section]
\crefname{theorem}{Theorem}{Theorems}
\newtheorem{lemma}[theorem]{Lemma}
\crefname{lemma}{Lemma}{Lemmata}
\newtheorem{proposition}[theorem]{Proposition}
\crefname{proposition}{Proposition}{Propositions}
\newtheorem{cor}[theorem]{Corollary}
\crefname{cor}{Corollary}{Corollaries}
\newtheorem{fact}[theorem]{Fact}
\crefname{fact}{Fact}{Facts}

\theoremstyle{acmdefinition}
\newtheorem{definition}[theorem]{Definition}
\crefname{definition}{Definition}{Definitions}
\newtheorem{example}[theorem]{Example}
\crefname{example}{Example}{Examples}

\else
\theoremstyle{plain}%
\newtheorem{theorem}{Theorem}[section]
\newtheorem{lemma}[theorem]{Lemma}
\newtheorem{proposition}[theorem]{Proposition}

\theoremstyle{definition}
\newtheorem{definition}[theorem]{Definition}
\newtheorem{example}[theorem]{Example}

\theoremstyle{remark}

\newtheorem{fact}[theorem]{Fact}
\crefname{fact}{Fact}{Facts}

\crefname{theorem}{Theorem}{Theorems}
\crefname{lemma}{Lemma}{Lemmata}
\crefname{proposition}{Proposition}{Propositions}
\crefname{cor}{Corollary}{Corollaries}

\crefname{definition}{Definition}{Definitions}
\crefname{example}{Example}{Examples}
\crefname{question}{Question}{Questions}
\fi

\newcommand{\proofcase}[1]{\noindent\textbf{Case}{#1}{.}}

\newcommand{\relmiddle}[1]{\mathrel{}\middle#1\mathrel{}}
\newcommand{\mathsetintension}[2]{%
\mleft\{ #1 \relmiddle{|} #2 \mright\}%
}
\newcommand{\mathsetextension}[1]{%
\mleft\{ #1 \mright\}%
}

\newcommand{\mathvect}[1]{\overrightarrow{#1}}

\newcommand{\mathof}[2]{\mathop{#1}\mleft(#2\mright)}

\newcommand{\mathtuple}[1]{\mleft( #1 \mright)}
\newcommand{\mathsequence}[2]{\mleft( #1  \mright)_{#2}}

\newcommand{\mathfunctiondef}[3]{#1\colon #2\to #3}

\renewcommand{\iff}{\mathrel{\Leftrightarrow}}

\newcommand{\mathnat}{\mathbb{N}}




\newcommand{\mathcoloneqq}{\mathrel{::=}}
\newcommand{\mathemptyword}{\varepsilon}

\newcommand{\mathsyof}[2]{\mathop{#1}\mleft(#2\mright)}

\newcommand{\mathsubstbox}[2]{\mathop{#1}\mleft[#2\mright]}
\newcommand{\mathsubst}[2]{#1\mapsto#2}


\newcommand{\mathVar}{\mathrm{Var}}
\newcommand{\mathVarof}[1]{\mathof{\mathrm{\mathVar}}{#1}}

\newcommand{\cmd}[1]{\mathtt{#1}}

\newcommand{\assign}[2]{{#1}\mathrel{:=}{#2}}
\newcommand{\ifelse}[3]{\cmd{if}\;{#1}\;\cmd{then}\;{#2}\;\cmd{else}\;{#3}}
\newcommand{\while}[2]{\cmd{while}\;{#1}\;\cmd{do}\;{#2}\;\cmd{od}}
\newcommand{\config}[2]{\langle {#1}, {#2} \rangle}
\newcommand{\true}{\top}
\newcommand{\false}{\bot}
\newcommand{\sem}[1]{[\![#1]\!]}
\newcommand{\cOR}[2]{\cmd{either}\;{#1}\;\cmd{or}\;{#2}\;\cmd{ro}}

\newcommand{\evaluation}{\mathrel{\longrightarrow}}

\newcommand{\rulename}[1]{(\textsc{#1})}
\newcommand{\mathFVof}[1]{\mathof{\mathrm{FV}}{#1}}

\newcommand{\mathprooffig}[1]{\mathcal{#1}}


\newcommand{\mathquantifier}[1]{\mathcal{#1}}

\newcommand{\triple}[3]{\langle #1 \rangle \,#2\,\langle #3 \rangle}


\newcommand{\IL}{\ensuremath{\mathtt{RHL}}}
\newcommand{\PIL}{\ensuremath{\mathtt{PRHL}}}
\newcommand{\CPIL}{\ensuremath{\mathtt{CPRHL}}}

\newcommand{\htriple}[3]{\mbox{$\{#1\}\,#2\,\{#3\}$}}
\newcommand{\ihtriple}[3]{\mbox{$[#1]\,#2\,[#3]$}}

\newcommand{\tihtriple}[3]{\mleft[\mleft[#1\mright]\,#2\,\mleft[#3\mright]\mright]}
\newcommand{\thtriple}[3]{\mleft\{ \mleft\{ #1 \mright\}\,#2\, \mleft\{ #3 \mright\} \mright\}}

\newcommand{\axiomILrule}{\rulename{Axiom\ensuremath{{}_{\text{\IL}}}}}

\newcommand{\consILrule}{\rulename{Cons\ensuremath{{}_{\text{\IL}}}}}
\newcommand{\seqILrule}{\rulename{Seq\ensuremath{{}_{\text{\IL}}}}}

\newcommand{\assignPILrule}{\rulename{\ensuremath{{:=}_{\text{\PIL}}}}}
\newcommand{\orPILrule}{\rulename{Or\ensuremath{{}_{\text{\PIL}}}}}
\newcommand{\whilePILrule}{\rulename{While\ensuremath{{}_{\text{\PIL}}}}}

\newcommand{\assignCPILrule}{\rulename{\ensuremath{:=_{\text{\CPIL}}}}}
\newcommand{\orCPILrule}{\rulename{Or\ensuremath{{}_{\text{\CPIL}}}}}
\newcommand{\whileCPILrule}{\rulename{While\ensuremath{{}_{\text{\CPIL}}}}}

\newcommand{\mathweakestpresy}{\mathrm{wpr}}
\newcommand{\mathweakestpre}[2]{\mathof{\mathweakestpresy}{#1, #2}}
\newcommand{\mathsetweakestpresy}{\mathbf{WPR}}
\newcommand{\mathsetweakestpre}[2]{\mathof{\mathsetweakestpresy}{#1, #2}}

\newcommand{\mathweakestlibpresy}{\mathrm{wlpr}}
\newcommand{\mathweakestlibpre}[2]{\mathof{\mathweakestlibpresy}{#1, #2}}

\newcommand{\mathstrongestpostsy}{\mathrm{spo}}
\newcommand{\mathstrongestpost}[2]{\mathof{\mathstrongestpostsy}{#1, #2}}

\newcommand{\mathstrongestlibpostsy}{\mathrm{slpo}}
\newcommand{\mathstrongestlibpost}[2]{\mathof{\mathstrongestlibpostsy}{#1, #2}}






\crefname{enumi}{}{}
\crefname{enumii}{}{}

\title{Proof systems for partial incorrectness logic (partial reverse Hoare logic)}
\author{Yukihiro Oda \unless\ifacm \thanks{Tohoku University, yukihiro3socrates6hilbert [at] gmail.com} \fi}

\ifacm
\email{}

\acmArticleType{Research}

\received{???}

\begin{CCSXML}
<ccs2012>
 <concept>
  <concept_id>00000000.0000000.0000000</concept_id>
  <concept_desc>Do Not Use This Code, Generate the Correct Terms for Your Paper</concept_desc>
  <concept_significance>500</concept_significance>
 </concept>
 <concept>
  <concept_id>00000000.00000000.00000000</concept_id>
  <concept_desc>Do Not Use This Code, Generate the Correct Terms for Your Paper</concept_desc>
  <concept_significance>300</concept_significance>
 </concept>
 <concept>
  <concept_id>00000000.00000000.00000000</concept_id>
  <concept_desc>Do Not Use This Code, Generate the Correct Terms for Your Paper</concept_desc>
  <concept_significance>100</concept_significance>
 </concept>
 <concept>
  <concept_id>00000000.00000000.00000000</concept_id>
  <concept_desc>Do Not Use This Code, Generate the Correct Terms for Your Paper</concept_desc>
  <concept_significance>100</concept_significance>
 </concept>
</ccs2012>
\end{CCSXML}

\ccsdesc[500]{Do Not Use This Code~Generate the Correct Terms for Your Paper}
\ccsdesc[300]{Do Not Use This Code~Generate the Correct Terms for Your Paper}
\ccsdesc{Do Not Use This Code~Generate the Correct Terms for Your Paper}
\ccsdesc[100]{Do Not Use This Code~Generate the Correct Terms for Your Paper}
\fi

\begin{document}
\ifacm
\begin{abstract}
 \emph{Partial incorrectness logic} (\emph{partial reverse Hoare logic}) has
 recently been introduced as a new Hoare-style logic that over-approximates 
 the weakest pre-conditions of a program and a post-condition.
 It is expected to verify systems  
 where the final state must guarantee its initial state, 
 such as \emph{authentication}, \emph{secure communication tools} and \emph{digital signatures}.
 However, the logic has only been given semantics.
 This paper defines two proof systems 
 for partial incorrectness logic (partial reverse Hoare logic):
 \emph{ordinary} and \emph{cyclic proof systems}.
 They are sound and relatively complete.
 The relative completeness of our ordinary proof system is proved
 by showing that the weakest pre-condition of a while loop and a post-condition is 
 its loop invariant.
 The relative completeness of our cyclic proof system is also proved 
 by providing a way to transform any cyclic proof into an ordinary proof.
\end{abstract}

\ifacm
\keywords{partial incorrectness logic, Hoare logic, incorrectness logic, reverse Hoare logic, cyclic proofs}
\fi

\fi

\maketitle

\unless\ifacm

\fi

\section{Introduction}
\label{sec:intro}
\emph{Hoare-style logics}, such as \emph{partial Hoare logic} \cite{Hoare1969}, 
\emph{total Hoare logic} \cite{Manna1974}, and 
\emph{incorrectness logic}\cite{OHearn2019}, 
also known as \emph{reverse Hoare logic}\cite{Vries2011}, 
are popular logical methods for proving the correctness of programs or for finding bugs,
statically.
They guarantee the corresponding property of a program $C$ 
using triples $\triple{P}{C}{Q}$, 
where $C$ is a program, $P$ is a pre-condition of $C$, and $Q$ is a post-condition of $C$.

Partial Hoare logic \cite{Hoare1969} is the first of all Hoare-style logics and 
guarantees $\htriple{P}{C}{Q}$ its \emph{partial correctness}, i.e. for any state $\sigma$,
if $C$ terminates from $\sigma$ and $P$ holds in $\sigma$, 
then $Q$ holds in the final state.
Partial Hoare logic does not guarantee the termination of a program. 
Total Hoare logic \cite{Manna1974} is an extension of partial Hoare logic
that guarantees termination.

The partial correctness of $\htriple{P}{C}{Q}$ can be restated as follows:
the post-condition $Q$ \emph{over-approximates} 
the \emph{strongest post-condition} of $P$ and $C$, i.e. 
the set of final states in which $C$ terminates from a state in which $P$ holds.
In contrast, incorrectness logic \cite{OHearn2019}, also known as reverse Hoare logic \cite{Vries2011}, guarantees $\tihtriple{P}{C}{Q}$ that
the post-condition $Q$ \emph{under-approximates} the strongest post-condition of $P$ and $C$.
This logic is a method for proving the existence of bugs in $C$ 
rather than proving correctness \cite{OHearn2019,Raad2020}.
This is why the logic is called ``incorrectness'' logic.

L.~Zhang and B.~L.~Kaminski \cite{Zhang2022} defined \emph{partial incorrectness logic}.
It was found by investigating the relation 
between Hoare-style logics and \emph{predicate transformers}, i.e.
\emph{weakest pre-condition}, \emph{weakest liberal pre-condition}, 
strongest post-condition, and \emph{strongest liberal post-condition}.
\cref{fig:triple-precondition-and-postcondition} summarises their results \cite{Zhang2022}.
The logic guarantees $\ihtriple{P}{C}{Q}$ the following:
if $Q$ holds in a state in which $C$ terminates from a state $\sigma$,
then $P$ holds in $\sigma$.

The logic was named ``partial incorrectness logic'', perhaps because of
the relation with total Hoare logic.
It is like the relation between partial Hoare logic and 
incorrectness logic (reverse Hoare logic).
Actually, incorrectness logic (reverse Hoare logic) is a total logic 
because it requires the termination of a target program \cite{Vries2011}.
Because the logic is not so, it is a partial logic.
However, as we will see later, the logic is to prove ``correctness'',
not to find bugs.
Therefore, in this paper, we mainly refer to it as \emph{partial reverse Hoare logic}.

\begin{table*}[t]
 \centering
 \begin{tabular}[t]{|c|c|c|}
  \hline
  Logic & Pre-condition &  Post-condition \\
  \hline
  \hline
  Total Hoare logic $\thtriple{P}{C}{Q}$ & $P \Rightarrow \mathweakestpre{C}{Q} $ &  None \\
  Partial Hoare logic $\htriple{P}{C}{Q}$ & $P \Rightarrow \mathweakestlibpre{C}{Q} $ & $\mathstrongestpost{P}{C} \Rightarrow Q$   \\
   Partial incorrectness logic (Partial reverse Hoare logic) $\ihtriple{P}{C}{Q}$ & $\mathweakestpre{C}{Q} \Rightarrow  P$ &  $Q \Rightarrow \mathstrongestlibpost{P}{C}$ \\
  (Total) incorrectness logic  (reverse Hoare logic)  $\tihtriple{P}{C}{Q}$ & None  & $Q \Rightarrow \mathstrongestpost{P}{C}$  \\
  \hline
 \end{tabular}

 \begin{align*}
 \sigma \models \mathweakestpre{C}{Q} 
  &\iff \exists \sigma' \mleft( \config{C}{\sigma} \to  \config{\mathemptyword}{\sigma'} \land {\sigma' \models Q} \mright) &\text{(Weakest pre-condition)} \\
 \sigma \models \mathweakestlibpre{C}{Q} 
  &\iff \forall \sigma' \mleft( \config{C}{\sigma} \to  \config{\mathemptyword}{\sigma'} \Rightarrow {\sigma' \models Q} \mright) &\text{(Weakest liberal pre-condition)} \\
 \sigma' \models \mathstrongestpost{P}{C}
  &\iff \exists \sigma \mleft( \config{C}{\sigma} \to  \config{\mathemptyword}{\sigma'} \land {\sigma \models P} \mright) &\text{(Strongest post-condition)} \\
  \sigma' \models \mathstrongestlibpost{P}{C}
  &\iff \forall \sigma \mleft( \config{C}{\sigma} \to  \config{\mathemptyword}{\sigma'} \Rightarrow {\sigma \models P} \mright) &\text{(Strongest liberal post-condition)}
 \end{align*}

 \caption{Triples and Pre-condition/post-condition (Similar to the table on \cite[p.87:20]{Zhang2022})}
 \label{fig:triple-precondition-and-postcondition} 
\end{table*}

To see the usefulness of partial reverse Hoare logic,
we describe the case of a password-based authentication system 
given by L.~Zhang and B.~L.~Kaminski \cite{Zhang2022}.
Consider a program $C_{\text{Auth}}$ for a password-based authentication system that
takes as input ``username'', ``password'', and so on, and then outputs ``approved'' 
if the user is identified and ``rejected'' otherwise.
When we check this program using Hoare-style logic, we naively think the following triple:
\begin{equation} 
 \triple{\cmd{username}= x, \cmd{password} = y, \dots \text{ are correct}}{C_{\text{Auth}}}{\text{``approved''}}. \label{eq:example}
\end{equation}

If \cref{eq:example} were proved in partial Hoare logic, 
then the program would be guaranteed the following: 
the program outputs ``approved'' if the inputs are correct.
But, this is wrong: it does not guarantee that the wrong user will be rejected.
Someone might think that we just need to show the following triple:
\begin{equation}
  \triple{\cmd{username}= x, \cmd{password} = y, \dots \text{ are not correct}}{C_{\text{Auth}}}{\text{``rejected''}}. \label{eq:exampleneg}
\end{equation}
But this means that we have to show two triples, which takes time and effort.
Moreover, this is over-engineering: 
this analysis consequently guarantees that the program outputs ``approved'' 
if and only if the inputs are correct, even if unexpected errors occur.
This problem occurs in total Hoare logic.

If \cref{eq:example} were proved in incorrectness logic (reverse Hoare logic),
then the program would be guaranteed that 
the inputs could be correct if the program outputs ``approved''.
It guarantees nothing. 
Someone might think that, considering \cref{eq:exampleneg}, all is well.
But they are wrong.
If \cref{eq:exampleneg} were proved in reverse Hoare logic,
then it would only guarantee that
the inputs could not be correct if the program outputs ``rejected''.
This is not what we want to guarantee.
From the view of ``incorrectness'', we may have to prove the following triple,
which means that there are some bugs in $C_{\text{Auth}}$:
\[
   \triple{\cmd{username}= x, \cmd{password} = y, \dots \text{ are not correct}}{C_{\text{Auth}}}{\text{``approved''}}. \label{eq:example-neg}
\]
On the other hand, this means that in this case, incorrectness logic can find bugs
but cannot guarantee ``correctness''.

Now, suppose that \cref{eq:example} is proved in partial reverse Hoare logic.
In this case, the inputs must be correct if the program outputs ``approved''.
\emph{This is what we want to guarantee for $C_{\text{Auth}}$}.
It also solved the problem of over-engineering problem in partial Hoare logic.
This result allows the situation where the program outputs ``rejected''
if the inputs are correct, but an unexpected error occurs.
In this example, ``incorrectness'' is not essential. 
When we prove \cref{eq:example},
we do not find any bugs in $C_{\text{Auth}}$, but
we show the ``reverse-correctness'' of $C_{\text{Auth}}$.
Then, in our opinion, the name ``partial incorrectness logic'' is not appropriate.
We would like to call the logic ``partial reverse Hoare logic''.

As we see in the above case, partial reverse Hoare logic is useful for verifying systems 
in which the final state must guarantee its initial state, 
such as authentication, secure communication tools, and digital signatures.
L.~Verscht, Ā.~Wáng and B.~L.~Kaminski \cite{Verscht2025P} also give some useful cases of
partial reverse Hoare logic.

\subsection{Our contribution}
Our main contribution is to define two proof systems 
for partial incorrectness logic (partial reverse Hoare logic):
\emph{ordinary} and \emph{cyclic proof systems}.
These systems, which are sound and relatively complete, 
will have practical applications in
software verification for secure systems.
While L.\ Zhang and B.\ L.\ Kaminski \cite{Zhang2022} defined 
the semantics of partial incorrectness logic, they did not give its proof system.
Therefore, our systems are the first for partial reverse Hoare logic, 
opening up new possibilities for practical use and further research in the field.

In our ordinary proof system, where every proof figure is a finite tree,
the rule for the while loop is the dual of the corresponding rule 
in ``partial'' Hoare logic, not in ``total'' Hoare logic.
We note that the semantics of partial reverse Hoare logic is the dual of total Hoare logic,
so there is a twist.
This twist is very interesting, but we do not know why.

\emph{Cyclic proof systems} are proof systems that allow cycles in proof figures.
When the rule for the while loop is applied in our ordinary proof system,
we have to find a good loop invariant, just as in partial Hoare logic, 
which is challenged \cite{Furia2014}.
In contrast, we do not have to find loop invariants
when the while loop rule is applied in our cyclic proof system.
Hence, cyclic proofs have an advantage in proof search.

We give outlines for proofs of the relative completeness.
We show the relative completeness of our ordinary proof system
by showing that the weakest pre-condition predicate of 
a while loop and a post-condition is its loop invariant.
We also prove that of our cyclic proof system 
by giving a way to transform any cyclic proof into an ordinary proof.

\subsection{Related work}
We present related work.

The results in Hoare logic are too numerous to be presented here. 
However,  
there is a detailed survey of Hoare logic by K.~R.~Apt and E.~Olderog \cite{Apt2O19}. 
One of the most important recent extensions of Hoare logic is \emph{separation logic}, 
which is used to reason about pointer structures \cite{OHearn2001,Reynolds2002}.
It is applied in \emph{Infer}\cite{Calcagno2011}, used by Meta, 
\emph{Prusti}\cite{Astrauskas2022}, a verifier for Rust, and
\emph{Iris}\cite{Jung2015}, implemented and verified in Coq, and so on.

Incorrectness logic \cite{OHearn2019}, reverse Hoare logic \cite{Vries2011},
is used mainly to find bugs.
It has been extended by separation logic 
to \emph{incorrectness separation logic} \cite{Raad2020} and
\emph{concurrent incorrectness separation logic} \cite{Raad2O22}.
Y. Lee and K. Nakazawa \cite{Lee2024} showed 
the relative completeness of incorrectness separation logic.

As mentioned above,
partial incorrectness logic (partial reverse Hoare logic) was found 
by investigating the relation 
between Hoare-style logics and predicate transformers \cite{Zhang2022}.
L.~Verscht and B.~L.~Kaminski \cite{Verscht2025T} investigated 
this relation in more detail.
L.~Verscht, Ā.~Wáng and B.~L.~Kaminski \cite{Verscht2025P} introduced
partial incorrectness logic from the point of view of predicate transformers
and discussed some useful cases.

Non-well-founded proof systems are a type of proof system 
that allows a proof figure to contain infinite paths.
Cyclic, or circular, proof systems are a type of non-well-founded proof system.
It allows a proof figure to contain cycles.
R.~N.~S.~Rowe \cite{Rowes-list} summarises the extensive list of
academic work on cyclic and non-well-founded proof theory.
Cyclic proofs are defined for some logics or theories 
to reason about inductive or recursive structures, 
such as modal $\mu$-calculus \cite{Afshari2017}, 
G\"{o}del-L\"{o}b provability logic \cite{Shamkanov2014}, 
first-order logic with inductive definitions \cite{BrotherstonS2011,Berardi2019,Oda2024}, 
arithmetic\cite{Simpson2017, Das2019}, separation logic \cite{Brotherston2008,BrotherstonA2011,Tatsuta2019,Kimura2020,Saotome2020,Saotome2024}.
Cyclic proofs are also useful for software verification
because of their finiteness,
for example, abduction \cite{Brotherston2014}, 
termination of pointer programs \cite{Brotherston2008,Rowe2017}, 
temporal property verification\cite{Tellez2020},
solving horn clauses \cite{Unno2017}, 
model checking\cite{Tsukada2022}, and
decision procedures for symbolic heaps \cite{Brotherston2012, Chu2015, Ta2016, Ta2018, Tatsuta2019}.

\subsection{Outline of this paper}
We outline the rest of this paper.
\cref{sec:syntax_semantics} introduces 
the syntax and operational semantics of our non-deterministic target language.
In \cref{sec:partial-reverse-Hoare-logic},
we define an ordinary proof system for partial reverse Hoare logic and
show its soundness and relative completeness.
\cref{sec:cyclic-proofs-PRHL} gives cyclic proofs and 
shows that their provability is the same as that of our ordinary proof system.
\cref{sec:conc} concludes.
\section{Programs and assertions}
\label{sec:syntax_semantics}
This section introduces the syntax and semantics of our  non-deterministic target languages.

Let $\mathnat$ be the whole of natural numbers, that is $\mathsetextension{0, 1, \dots}$, 
and $\mathVar$ be an infinite set of \emph{variables}. \emph{Expressions} $E$, \emph{Boolean conditions} $B$, \emph{programs} $C$, and \emph{Assertions} $P$ are defined as the following grammar:
\begin{align*}
 E &\mathcoloneqq  {
 {x} \mid {n} \mid \mathsyof{f}{E, \dots, E} 
 }, \\
 B &\mathcoloneqq  { 
 \mathsyof{Q}{E, \dots, E} \mid {E = E} \mid 
 {E \leq E}  \mid {\lnot B}  \mid {B \land B} \mid {B \lor B}
 },  \\
 C &\mathcoloneqq {\mathemptyword} \mid {C'} \\
 C' &\mathcoloneqq  {\assign{x}{E}} \mid {C';C'} \mid {\while{B}{C}} \mid  {\cOR{C}{C}}, \\
 {P} &\mathcoloneqq {{B} \mid {\lnot P} \mid {P \lor P} \mid {P \land P} \mid {P \to P}  \mid {\exists x(P)} \mid {\forall x(P)} } 
\end{align*}
where $\mathemptyword$ denotes the empty string, and
$x$, $n$, $f$ and $Q$ range over 
$\mathVar$, $\mathnat$, the set of functions $\mathnat \to \mathnat$ and the set of predicates or relations on $\mathnat$, respectively.

For simplicity, 
we restrict control flow statements to only ${\while{B}{C}}$ and ${\cOR{C_{0}}{C_{1}}}$.
However, many control flow statements can be simulated in our language.
For example, $\ifelse{B}{C_{0}}{C_{1}}$ can be simulated by 
$\assign{x}{0};\while{B \land x=0}{C_{0};\assign{x}{1}};\while{\lnot B \land x=0}{C_{1};\assign{x}{1}}$ 
with some fresh variable $x$.

In each occurrence of the form ${\exists x(P)}$ and ${\forall x(P)}$, we say that 
the occurrence $x$ is \emph{binding} with \emph{scope $P$}.
We say that an occurrence of a variable is \emph{bound} 
if it is a binding occurrence of the variable.
An occurrence of a variable is called \emph{free} if it is not bound.
As usual, 
we assume that $\alpha$-conversions (renaming of bound variables) are implicitly applied
in order that bound variables are always different from each other
and from free variables. 
We write $\mathFVof{P}$ for the set of free variables occurring in an assertion $P$.
We write $\mathVarof{E}$, $\mathVarof{B}$, and $\mathVarof{C}$, for the set of variables 
occurring in expression $E$, Boolean condition $B$, and program $C$.
As usual, we write $E_{0} \neq E_{1}$ for $\lnot(E_{0} = E_{1})$ and
$E_{0} > E_{1}$ for $\lnot(E_{0} \leq E_{1})$.
We write $\mathsubstbox{E}{\mathsubst{x}{E'}}$ and $\mathsubstbox{B}{\mathsubst{x}{E'}}$ to denote the substitution of expression $E'$ for variable $x$ in expression $E$ and Boolean condition $B$, respectively.
We write ${\mathsubstbox{P}{\mathsubst{x_{0}}{E_{0}}, \dots, \mathsubst{x_{n}}{E_{n}}}}$
for the assertion obtained 
by replacing all the free occurrences of $x_{0}, \ldots, x_{n}$ in $P$
with $E_{0}, \ldots, E_{n}$.
For ${\mathquantifier{Q}_{1}, \dots, \mathquantifier{Q}_{n}}\in{\mathsetextension{\exists, \forall}}$,
we abbreviate $\mathquantifier{Q}_{1} x_{1} (\dots (\mathquantifier{Q}_{n} x_{n} (P)) \dots)$ to $\mathquantifier{Q}_{1} x_{1} \dots \mathquantifier{Q}_{n} x_{n} (P)$.
For programs $C_{0}$ and $C_{1}$,
we write $C_{0};C_{1}$ for $C_{i}$ if ${C_{1-i}}\equiv{\mathemptyword}$ and
$C_{0};C_{1}$ otherwise.

A \emph{(program) state} is defined as a function $\mathfunctiondef{\sigma}{\mathVar}{\mathnat}$.  
We define the semantics $\sem{E}\sigma \in \mathnat$ and $\sem{B}\sigma \in \{\true,\false\}$ 
of expression $E$ and Boolean condition $B$ in state $\sigma$ in the usual way: 
\begin{align*}
 \sem{n}\sigma &= n \quad \text{ for a natural number } n, \\
 \sem{x}\sigma &= \mathof{\sigma}{x} \quad \text{ for } x\in\mathVar, \\
 \sem{\mathsyof{f}{E_{0}, \dots, E_{n}}}\sigma &= \mathsyof{f}{\sem{E_{0}}\sigma, \dots, \sem{E_{n}}\sigma}, \\
 \sem{\mathsyof{Q}{E_{0}, \dots, E_{n}}}\sigma=\true &\iff  \mathtuple{\sem{E_{0}}\sigma, \dots, \sem{E_{n}}\sigma}\in Q, \\
 \sem{E = E'}\sigma=\true &\iff {\sem{E}\sigma}={\sem{E'}\sigma}, \\
 \sem{E \leq E'}\sigma=\true &\iff \sem{E}\sigma\leq\sem{E'}\sigma,  \\
 \sem{\lnot B}\sigma=\true &\iff {\sem{B}\sigma}=\false, \\
 \sem{B \land B'}\sigma=\true &\iff {\sem{B}\sigma}=\true \text{ and } {\sem{B'}\sigma}=\true, \\
 \sem{B \lor B'}\sigma=\true &\iff {\sem{B}\sigma}=\true \text{ or } {\sem{B'}\sigma}=\true.
\end{align*}
We write $\mathsubstbox{\sigma}{\mathsubst{x}{E}}$ for the state defined 
as $\sigma$ on all variables except $x$, with $\mathof{\mathsubstbox{\sigma}{\mathsubst{x}{E}}}{x} = \sem{E}\sigma$.  

\begin{lemma}
\label[lemma]{lem:expr_subst}
For all expressions $E$ and $E'$, Boolean expressions $B$, program states $\sigma$ and variables $x$, the following statements hold:
 \begin{align*}
  \sem{\mathsubstbox{E}{\mathsubst{x}{E'}}}\sigma &=  \sem{E}(\mathsubstbox{\sigma}{\mathsubst{x}{\sem{E'}\sigma}}), \text{ and } \\
  \sem{\mathsubstbox{B}{\mathsubst{x}{E}}}\sigma &\iff  \sem{B}(\mathsubstbox{\sigma}{\mathsubst{x}{\sem{E}\sigma}}). 
 \end{align*}
\end{lemma}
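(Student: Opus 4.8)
The plan is to prove both statements simultaneously by structural induction on the expression $E$ (for the first claim) and on the Boolean condition $B$ (for the second claim), since the Boolean case $B \mathcoloneqq \mathsyof{Q}{E,\dots,E} \mid E=E \mid E\leq E$ depends on the expression case. So I would first establish the expression statement completely, then use it to handle the Boolean statement.

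\textbf{Expression case.} First I would do induction on the structure of $E$. The base cases are $E \equiv n$ and $E \equiv x'$ for some variable $x'$. For $E \equiv n$, both sides reduce to $n$ by the clause $\sem{n}\sigma = n$, regardless of the substitution. For $E \equiv x'$, I would split on whether $x' \equiv x$: if $x' \equiv x$, then $\mathsubstbox{x'}{\mathsubst{x}{E'}} \equiv E'$, so the left side is $\sem{E'}\sigma$, while the right side is $\mathof{\mathsubstbox{\sigma}{\mathsubst{x}{\sem{E'}\sigma}}}{x} = \sem{E'}\sigma$ by definition of the updated state; if $x' \cneq x$, then the substitution does nothing, so the left side is $\mathof{\sigma}{x'}$, and the right side is $\mathof{\mathsubstbox{\sigma}{\mathsubst{x}{\sem{E'}\sigma}}}{x'} = \mathof{\sigma}{x'}$ since the update only affects $x$. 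The inductive step is $E \equiv \mathsyof{f}{E_0,\dots,E_n}$: substitution distributes over the arguments, and by the semantic clause for $f$ together with the induction hypothesis applied to each $E_i$, both sides equal $\mathsyof{f}{\sem{E_0}(\mathsubstbox{\sigma}{\mathsubst{x}{\sem{E'}\sigma}}),\dots,\sem{E_n}(\mathsubstbox{\sigma}{\mathsubst{x}{\sem{E'}\sigma}})}$.

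\textbf{Boolean case.} Next I would do induction on the structure of $B$, now freely using the already-proved expression statement. For atomic formulas $\mathsyof{Q}{E_0,\dots,E_n}$, $E_0 = E_1$, and $E_0 \leq E_1$: substitution distributes over the component expressions, and by the corresponding semantic clause the truth value is determined by the values $\sem{E_i}\sigma$; applying the expression statement to each $E_i$ rewrites each such value as $\sem{E_i}(\mathsubstbox{\sigma}{\mathsubst{x}{\sem{E}\sigma}})$, which yields exactly the right-hand side. The propositional cases $\lnot B_0$, $B_0 \land B_1$, $B_0 \lor B_1$ follow immediately: substitution commutes with the connective, the semantic clause for the connective is compositional, and the induction hypotheses on the immediate subformulas close the case.

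\textbf{Main obstacle.} I do not expect a serious obstacle here — this is a standard substitution lemma. The only point requiring a little care is the variable base case $E \equiv x'$, where one must correctly handle the case split $x' \equiv x$ versus $x' \cneq x$ against the definition of $\mathsubstbox{\sigma}{\mathsubst{x}{\cdot}}$; getting the bookkeeping right there is the crux, and everything else is routine compositionality. (One should also note the mild abuse in the statement: the second displayed equation uses $\iff$ between the two occurrences of $\sem{\cdot}\sigma$, which range over $\{\true,\false\}$ and should be read as equality of truth values.)
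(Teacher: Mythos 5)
Your proposal is correct and matches the paper's argument, which simply states that the lemma is proved by structural induction on $E$ and on $B$, respectively; your detailed case analysis (including the case split on $x' \equiv x$ versus $x' \cneq x$ in the variable base case and the use of the expression statement in the atomic Boolean cases) is exactly the intended elaboration of that induction.
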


\begin{proof}
 By structural induction on $E$ and $B$, respectively.
\end{proof}

Satisfaction of an assertion $P$ by a state $\sigma$, written $\sigma \models P$, is defined inductively as follows:
\begin{align*}
 \sigma \models B & \iff  \sem{B}\sigma = \top, \\
 \sigma \models  \lnot P_{0} & \iff  \sigma \not\models P_{0}, \\
 \sigma \models  P_{0} \lor P_{1} & \iff  \sigma \models P_{0} \text{ or } \sigma \models P_{1}, \\
 \sigma \models  P_{0} \land P_{1} & \iff  \sigma \models P_{0} \text{ and } \sigma \models P_{1}, \\
 \sigma \models  P_{0} \to P_{1} & \iff  \sigma \not\models P_{0} \text{ or } \sigma \models P_{1}, \\
 \sigma \models  \exists x(P_{0}) & \iff \mathsubstbox{\sigma}{\mathsubst{x}{c}} \models P_{0} \text{ for some } c, \\
 \sigma \models  \forall x(P_{0}) & \iff  \mathsubstbox{\sigma}{\mathsubst{x}{c}} \models P_{0} \text{ for all } c.
\end{align*}

For assertions $P$ and $Q$, we write ${P}\models{Q}$ 
if ${\sigma}\models{P}$ implies ${\sigma}\models{Q}$ for any state $\sigma$.
For an assertion $P$, we write $\models{P}$ 
if ${\sigma}\models{P}$ holds for any state $\sigma$.
\begin{lemma}[Substitution]
\label[lemma]{lem:assert_subst}
For all assertions $P$, program states $\sigma$, expressions $E$ and variables $x$,
\[
\sigma \models \mathsubstbox{P}{\mathsubst{x}{E}} \text{ if and only if } \sigma[x \mapsto \sem{E}\sigma] \models P.
\]
\end{lemma}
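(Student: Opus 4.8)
The plan is to prove the equivalence by structural induction on the assertion $P$. For the base case $P = B$, a Boolean condition, the claim is literally the second equivalence of \cref{lem:expr_subst}: by definition $\sigma \models B$ iff $\sem{B}\sigma = \top$, and $\mathsubstbox{P}{\mathsubst{x}{E}}$ is $\mathsubstbox{B}{\mathsubst{x}{E}}$, so $\sigma \models \mathsubstbox{B}{\mathsubst{x}{E}}$ iff $\sem{\mathsubstbox{B}{\mathsubst{x}{E}}}\sigma = \top$ iff $\sem{B}(\sigma[x \mapsto \sem{E}\sigma]) = \top$ iff $\sigma[x \mapsto \sem{E}\sigma] \models B$. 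For the propositional connectives $\lnot$, $\lor$, $\land$, $\to$, substitution commutes with the connective syntactically (e.g.\ $\mathsubstbox{(P_0 \to P_1)}{\mathsubst{x}{E}}$ is $\mathsubstbox{P_0}{\mathsubst{x}{E}} \to \mathsubstbox{P_1}{\mathsubst{x}{E}}$) and the semantic clause for each connective mentions only satisfaction of the immediate subformulas; so the equivalence follows at once from the induction hypotheses for those subformulas.

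The quantifier cases $P = \exists x'(P_0)$ and $P = \forall x'(P_0)$ are the crux. Here I would first invoke the standing $\alpha$-conversion convention, which lets us assume $x'$ is distinct from $x$ and does not occur in $E$, so that $x' \notin \mathVarof{E}$ and $\mathsubstbox{(\exists x'(P_0))}{\mathsubst{x}{E}}$ is $\exists x'(\mathsubstbox{P_0}{\mathsubst{x}{E}})$. The key state identity I need is, for every $c$,
\[
  \sigma[x' \mapsto c][x \mapsto \sem{E}(\sigma[x' \mapsto c])]
    = \sigma[x \mapsto \sem{E}\sigma][x' \mapsto c].
\]
Since $x' \notin \mathVarof{E}$ and $\sem{E}\sigma$ depends only on the values $\sigma$ assigns to variables in $\mathVarof{E}$, we have $\sem{E}(\sigma[x' \mapsto c]) = \sem{E}\sigma$; and since $x \neq x'$, the two updates commute, so both sides equal $\sigma[x \mapsto \sem{E}\sigma][x' \mapsto c]$. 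Given this identity, I apply the induction hypothesis to $P_0$ with the state $\sigma[x' \mapsto c]$ (legitimate, as the hypothesis quantifies over all states): $\sigma[x' \mapsto c] \models \mathsubstbox{P_0}{\mathsubst{x}{E}}$ iff $\sigma[x' \mapsto c][x \mapsto \sem{E}(\sigma[x' \mapsto c])] \models P_0$ iff $\sigma[x \mapsto \sem{E}\sigma][x' \mapsto c] \models P_0$. Quantifying $c$ existentially (resp.\ universally) then gives $\sigma \models \mathsubstbox{(\exists x'(P_0))}{\mathsubst{x}{E}}$ iff $\sigma[x \mapsto \sem{E}\sigma] \models \exists x'(P_0)$, and symmetrically for $\forall$.

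I expect the main obstacle to be essentially bookkeeping rather than mathematical depth: making the $\alpha$-conversion convention carry the weight of capture-avoidance, and establishing the small auxiliary facts — that $\sem{E}\sigma$ depends only on $\mathVarof{E}$, and that state updates on distinct variables commute — which are implicit in the excerpt but not isolated as separate lemmata. A naive structural induction would founder precisely at the quantifier step if these points are glossed over, so I would take care to state the commutation identity explicitly before invoking the induction hypothesis.
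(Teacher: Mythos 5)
Your proof is correct and follows essentially the same route as the paper, which only sketches the argument as structural induction (on $P$ for one direction, on $\mathsubstbox{P}{\mathsubst{x}{E}}$ for the other); your single induction establishing the biconditional, with the explicit capture-avoidance convention, the identity $\sigma[x'\mapsto c][x\mapsto\sem{E}(\sigma[x'\mapsto c])]=\sigma[x\mapsto\sem{E}\sigma][x'\mapsto c]$, and the appeal to \cref{lem:expr_subst} in the base case, is exactly the standard filling-in of that sketch. No gaps.
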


\begin{proof}[Proof sketch]
 The `if' part: By structural induction on $P$. 

 The `only if' part: By structural induction on $\mathsubstbox{P}{\mathsubst{x}{E}}$. 
\end{proof}

\begin{figure*}[t]
 \centering
 \begin{align*}
  \config{\assign{x}{E}}{\sigma} &\evaluation  \config{\mathemptyword}{\mathsubstbox{\sigma}{\mathsubst{x}{\sem{E}\sigma}}} &&  &(\cmd{assign}) \\
  \config{\while{B}{C}}{\sigma} &\evaluation \config{C;\while{B}{C}}{\sigma}  &&\text{if } \sem{B}\sigma = \true &\quad (\cmd{while}\ 1) \\
  \config{\while{B}{C}}{\sigma} &\evaluation \config{\mathemptyword}{\sigma}  &&\text{if } \sem{B}\sigma = \false &(\cmd{while}\ 2) \\
  \config{C_{0};C_{1}}{\sigma} &\evaluation \config{C'_{0};C_{1}}{\sigma'} &&\text{if } \config{C_{0}}{\sigma} \evaluation \config{C'_{0}}{\sigma'}  &(\cmd{seq}) \\
\config{\cOR{C_{0}}{C_{1}}}{\sigma} &\evaluation \config{C_{i}}{\sigma} &&\text{for } i=0, 1  &(\cmd{or}\ i) 
 \end{align*}
 \caption{Small-step semantics of programs} \label{fig:prog_sem}
\end{figure*}

A \emph{(program) configuration} is defined as a pair $\config{C}{\sigma}$,
where $C$ and $\sigma$ are a program and a state, respectively.
In \cref{fig:prog_sem}, we define the operational semantics of our programs 
by giving the small-step relation $\evaluation$ on configurations.  
An \emph{execution} (of $C$) is defined as a possibly infinite sequence of configurations $\mathsequence{\config{C_{i}}{\sigma_{i}}}{i \geq 0}$ with $C_{0} = C$ such that $\config{C_{i}}{\sigma_{i}} \evaluation \config{C_{i}}{\sigma_{i+1}}$ for all $i \geq 0$.
For a finite execution $\mathsequence{\config{C_{i}}{\sigma_{i}}}{0 \leq i \leq n}$,
the \emph{length of the finite execution} is defined as $n$.
 We write $\evaluation^{n}$ for an $n$-step execution.
We also sometimes write $\evaluation^{*}$ 
for the reflexive-transitive closure of $\evaluation$. 


\begin{lemma}
 \label[lemma]{lem:property-config}
 The following statements hold:
 \begin{enumerate}
  \item For a program $C$, states $\sigma$, $\sigma'$,
	and a variable ${z}\notin{\mathVarof{C}}$,
	if ${\config{{C}}{\sigma}} \evaluation^{*} {\config{\mathemptyword}{\sigma'}}$ holds,
	then ${\mathof{\sigma}{z}}={\mathof{\sigma'}{z}}$ holds.
	\label{item:not-in-lem-property-config}
  \item For a program $C$, states $\sigma$, $\sigma'$,
	and a variable ${z}\notin{\mathVarof{C}}$,
	if ${\config{{C}}{\mathsubstbox{\sigma}{\mathsubst{z}{a}}}} \evaluation^{*} {\config{\mathemptyword}{\mathsubstbox{\sigma'}{\mathsubst{z}{a}}}}$ holds,
	then ${\config{{C}}{\mathsubstbox{\sigma}{\mathsubst{z}{\sem{z}\sigma'}}}} \evaluation^{*} {\config{\mathemptyword}{\sigma'}}$ holds.
	\label{item:not-in-lem-property-config2}
  \item For programs $C_{0}$ and $C_{1}$, and states $\sigma$ and $\sigma'$,
	 ${\config{C_{0};C_{1}}{\sigma}} \evaluation^{*} {\config{\mathemptyword}{\sigma'}}$ 
	holds if and only if there exists $\sigma''$ such that
	${\config{C_{0}}{\sigma}} \evaluation^{*} {\config{\mathemptyword}{\sigma''}}$ and
	${\config{C_{1}}{\sigma''}} \evaluation^{*} {\config{\mathemptyword}{\sigma'}}$ hold.
	\label{item:sequencing-lem-property-config}
  \item Let $B$ be a Boolean condition, and $C$ be a program. 
	For states $\sigma$ and $\sigma'$,
	${\config{\while{B}{C}}{\sigma}} \evaluation^{*} {\config{\mathemptyword}{\sigma'}}$ holds
	if and only if
	there exist states $\sigma_{0}, \dots, \sigma_{k}$ 
	such that ${\sigma_{0}}\equiv{\sigma}$, ${\sigma_{k}}\equiv{\sigma'}$ and
	${\sigma_{k}}\models{\lnot B}$ hold, and
	$k>0$ implies that
	${\config{C}{\sigma_{i}}} \evaluation^{*} {\config{\mathemptyword}{\sigma_{i+1}}}$ and
	${\sigma_{i}}\models{B}$
	for each $i=0, \dots, k-1$.
	\label{item:while-lem-property-config}
 \end{enumerate}
\end{lemma}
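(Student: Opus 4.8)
The plan is to prove the four items in order, letting (4) build on (3) and letting (1) and (2) rest on the same analysis of single steps. The uniform tool is induction on the length $n$ of an execution $\config{C}{\sigma} \evaluation^{n} \config{\mathemptyword}{\sigma'}$ together with a case split on which rule of \cref{fig:prog_sem} fires first. Two elementary facts about a step $\config{D}{\tau} \evaluation \config{D'}{\tau'}$ carry most of the load: $\mathVarof{D'} \subseteq \mathVarof{D}$, and the only rule that alters the state is $(\cmd{assign})$, which changes only the assigned variable, a member of $\mathVarof{D}$. Item (1) then follows by iterating these facts along the execution: when $z \notin \mathVarof{C}$, no step touches $z$, so $\mathof{\sigma}{z} = \mathof{\sigma'}{z}$.

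For item (2) I would first prove a one-step renaming statement: if $\config{D}{\tau} \evaluation \config{D'}{\tau'}$ and $z \notin \mathVarof{D}$, then $\config{D}{\tau[z \mapsto b]} \evaluation \config{D'}{\tau'[z \mapsto b]}$ for every $b \in \mathnat$ (and still $z \notin \mathVarof{D'}$). The five cases are routine: in $(\cmd{assign})$ one uses $z \neq x$, so the two updates of $\tau$ commute, together with the insensitivity of $\sem{E}\tau$ to $\tau(z)$ when $z \notin \mathVarof{E}$; in $(\cmd{while}\ 1)$ and $(\cmd{while}\ 2)$ the guard value $\sem{B}\tau$ is similarly insensitive to $z$; in $(\cmd{seq})$ one invokes the induction hypothesis on the premise (using $\mathVarof{C_{0}} \subseteq \mathVarof{C_{0};C_{1}}$); and $(\cmd{or}\ i)$ fixes the state. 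Iterating gives that $z \notin \mathVarof{C}$ and $\config{C}{\tau} \evaluation^{*} \config{\mathemptyword}{\tau'}$ imply $\config{C}{\tau[z \mapsto b]} \evaluation^{*} \config{\mathemptyword}{\tau'[z \mapsto b]}$. Applying this to the hypothesis of item (2) with $\tau := \sigma[z \mapsto a]$, $\tau' := \sigma'[z \mapsto a]$, $b := \sem{z}\sigma'$, and simplifying $\sigma[z \mapsto a][z \mapsto \sem{z}\sigma'] = \sigma[z \mapsto \sem{z}\sigma']$ and $\sigma'[z \mapsto a][z \mapsto \sem{z}\sigma'] = \sigma'$, gives exactly $\config{C}{\sigma[z \mapsto \sem{z}\sigma']} \evaluation^{*} \config{\mathemptyword}{\sigma'}$.

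For item (3), the ``if'' direction is an induction on the length of $\config{C_{0}}{\sigma} \evaluation^{*} \config{\mathemptyword}{\sigma''}$ in which each step is promoted by $(\cmd{seq})$, yielding $\config{C_{0};C_{1}}{\sigma} \evaluation^{*} \config{C_{1}}{\sigma''}$ (recall $\mathemptyword;C_{1} \equiv C_{1}$), to which the execution of $C_{1}$ is appended. The ``only if'' direction is an induction on the length of $\config{C_{0};C_{1}}{\sigma} \evaluation^{*} \config{\mathemptyword}{\sigma'}$ with a split on the first step: once the cases where $C_{0}$ or $C_{1}$ is empty are dispatched via the convention, the first step must be an instance of $(\cmd{seq})$, producing $\config{C_{0}}{\sigma} \evaluation \config{C_{0}'}{\sigma_{1}}$ and a strictly shorter $\config{C_{0}';C_{1}}{\sigma_{1}} \evaluation^{*} \config{\mathemptyword}{\sigma'}$ to which the induction hypothesis applies. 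Item (4) then follows by the same method with item (3) used at the unfolding step: for ``if'', induct on $k$ --- $k = 0$ uses $(\cmd{while}\ 2)$ and $k > 0$ uses $(\cmd{while}\ 1)$ followed by item (3) to glue $\config{C}{\sigma_{0}} \evaluation^{*} \config{\mathemptyword}{\sigma_{1}}$ to the execution from the induction hypothesis; for ``only if'', induct on the execution length and split on the first step --- $(\cmd{while}\ 2)$ forces a length-one execution with $\sigma' \equiv \sigma \models \lnot B$, so $k := 0$ works, whereas $(\cmd{while}\ 1)$ unfolds to $\config{C;\while{B}{C}}{\sigma}$, item (3) extracts an intermediate $\tau$ with $\config{C}{\sigma} \evaluation^{*} \config{\mathemptyword}{\tau}$ and a strictly shorter $\config{\while{B}{C}}{\tau} \evaluation^{*} \config{\mathemptyword}{\sigma'}$, and applying the induction hypothesis to the latter and prepending $\sigma$ (reindexing by $\sigma_{0} := \sigma$, $\sigma_{i+1} := \tau_{i}$, $k := m+1$) gives the required sequence.

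None of this is deep, so the work is bookkeeping. The point to watch is that the freshness invariant $z \notin \mathVarof{\cdot}$ genuinely survives every rule --- crucially $\mathVarof{C;\while{B}{C}} = \mathVarof{\while{B}{C}}$ for $(\cmd{while}\ 1)$ and $\mathVarof{C_{0}';C_{1}} \subseteq \mathVarof{C_{0};C_{1}}$ for $(\cmd{seq})$ --- and that the empty-program convention $\mathemptyword;C_{1} \equiv C_{1}$ is applied consistently, so that the $(\cmd{seq})$-based inductions close cleanly at the step where the left component becomes empty. I also use silently that no configuration $\config{\mathemptyword}{\cdot}$ can take a step, since every rule of \cref{fig:prog_sem} has a nonempty program on the left; this is what makes the $(\cmd{while}\ 2)$ case of item (4) terminate.
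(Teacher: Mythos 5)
Your proposal is correct and follows essentially the same route as the paper, which only sketches this lemma: induction on execution length with a case split on the first step for items (1)--(3) and the `only if' half of (4), and induction on $k$ for the `if' half of (4). The only addition is the explicit one-step renaming lemma (quantified over arbitrary $b$) used to drive the induction for item (2), which is a natural strengthening of the paper's ``induction on the length'' sketch rather than a different argument.
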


\begin{proof}[Proof(Sketch)] We  give the outline of proof of each statement.

 \noindent \cref{item:not-in-lem-property-config}
 By induction on the length of ${\config{{C}}{\sigma}} \evaluation^{*} {\config{\mathemptyword}{\sigma'}}$.

 \noindent \cref{item:not-in-lem-property-config2}
 By induction on the length of ${\config{{C}}{\mathsubstbox{\sigma}{\mathsubst{z}{a}}}} \evaluation^{*} {\config{\mathemptyword}{\mathsubstbox{\sigma'}{\mathsubst{z}{a}}}}$.


 \noindent \cref{item:sequencing-lem-property-config}
 By induction on the length of 
 ${\config{C_{0};C_{1}}{\sigma}} \evaluation^{*} {\config{\mathemptyword}{\sigma'}}$.

 \noindent \cref{item:while-lem-property-config}
 \noindent The `if' part: By induction on $k$.

 \noindent The `only if' part: 
 By induction on the length of
 ${\config{\while{B}{C}}{\sigma}} \evaluation^{*} {\config{\mathemptyword}{\sigma'}}$.

\end{proof}
\section{An ordinary proof system for partial incorrectness logic (partial reverse Hoare logic)}
\label{sec:partial-reverse-Hoare-logic}
This section introduces \emph{partial incorrectness logic} 
(\emph{partial reverse Hoare logic}) and its ordinary proof system.
Our proof system is similar to partial Hoare logic, 
except for the composition rule and the rule for $\while{B}{C}$.
Interestingly, 
although the semantics of partial reverse Hoare logic is the dual of ``total'' Hoare logic,
the rule for $\while{B}{C}$ is the dual of the corresponding rule 
in ``partial'' Hoare logic, not in total Hoare logic.

We write \emph{partial reverse Hoare triples} as $\ihtriple{P}{C}{Q}$, 
where $C$ is a program, and $P$ and $Q$ are assertions. 
Partial reverse Hoare triples are the same as \emph{partial incorrectness triples}
\cite{Zhang2022,Verscht2025T}.
As we said in \cref{sec:intro}, 
``incorrectness'' is not essential for partial reverse Hoare logic.
That is why we do not use the term ``partial incorrectness triples''.

\begin{definition}
\label[definition]{def:reverse-correctness}
A partial reverse Hoare triple $\ihtriple{P}{C}{Q}$ is said to be \emph{valid} if, 
for all states $\sigma'$ with $\sigma' \models Q$,
the following state holds:
for any state $\sigma$,
if $\config{C}{\sigma} \evaluation^{*} \config{\mathemptyword}{\sigma'}$ holds,
then ${\sigma}\models{P}$.
\end{definition}
This definition is equivalent to Definition 6.4 in \cite{Zhang2022}, 
which is shown in \cite[p.87:20]{Zhang2022}. 
To see the equivalence, 
we describe the relationship between the validity of partial reverse Hoare triples
and the weakest pre-condition.
\begin{definition}[Weakest pre-condition]
 \label[definition]{definition:weakest-pre-condition}
 For an assertion $Q$ and a program $C$,
 we define a set of states $\mathsetweakestpre{C}{Q}$ by:
 \[
 \mathsetweakestpre{C}{Q}=\mathsetintension{\sigma}{
 \begin{aligned}
 &\text{There exists a state } \sigma' \text{ such that } \\
 &{\config{C}{\sigma} \evaluation^{*} \config{\mathemptyword}{\sigma'}} \text{ and } 
 {{\sigma'} \models {Q}} \text{ hold} 
 \end{aligned}
 }.
 \]
\end{definition}

Intuitively, $\sigma\in\mathsetweakestpre{C}{Q}$ holds if and only if
$C$ terminates from $\sigma$ and the final state satisfies $Q$.
Then, we see the following statement, 
which means that \cref{def:reverse-correctness} is equivalent to 
Definition 6.4 in \cite{Zhang2022}.
\begin{proposition}
 \label[proposition]{prop:weakestness-set}
 $\ihtriple{P}{C}{Q}$ is valid
 if and only if 
 ${\mathsetweakestpre{C}{Q}}\subseteq{\mathsetintension{\sigma}{{\sigma}\models{P}}}$ 
 holds.
\end{proposition}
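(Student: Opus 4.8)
The plan is to unfold both definitions and observe that the equivalence is a purely first-order logical rearrangement; there is essentially no mathematical content beyond careful bookkeeping of quantifiers.

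First I would rewrite the validity of $\ihtriple{P}{C}{Q}$, as given by \cref{def:reverse-correctness}, into a more uniform shape. Spelling out that definition, validity says: for all $\sigma'$ with $\sigma' \models Q$ and all $\sigma$, if $\config{C}{\sigma} \evaluation^{*} \config{\mathemptyword}{\sigma'}$ then $\sigma \models P$. Using that $A \Rightarrow (B \Rightarrow D)$ is equivalent to $(A \wedge B) \Rightarrow D$, this is equivalent to
\[
\forall \sigma' \, \forall \sigma \, \mleft( \mleft( \sigma' \models Q \wedge \config{C}{\sigma} \evaluation^{*} \config{\mathemptyword}{\sigma'} \mright) \Rightarrow \sigma \models P \mright).
\]
Then I would swap the two leading universal quantifiers (always sound) to obtain the same formula with $\forall \sigma$ outermost, and push $\forall \sigma'$ inside the implication. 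Since $\sigma'$ does not occur free in $\sigma \models P$, the sub-formula $\forall \sigma' \mleft( \mleft( \sigma' \models Q \wedge \config{C}{\sigma} \evaluation^{*} \config{\mathemptyword}{\sigma'} \mright) \Rightarrow \sigma \models P \mright)$ is equivalent to $\mleft( \exists \sigma' \mleft( \config{C}{\sigma} \evaluation^{*} \config{\mathemptyword}{\sigma'} \wedge \sigma' \models Q \mright) \mright) \Rightarrow \sigma \models P$. By \cref{definition:weakest-pre-condition}, the antecedent of this last implication is precisely the statement $\sigma \in \mathsetweakestpre{C}{Q}$. Hence the whole formula is equivalent to $\forall \sigma \mleft( \sigma \in \mathsetweakestpre{C}{Q} \Rightarrow \sigma \models P \mright)$, which is exactly $\mathsetweakestpre{C}{Q} \subseteq \mathsetintension{\sigma}{\sigma \models P}$.

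Since every rewriting step above is an equivalence, the biconditional follows immediately; if one prefers a more elementary presentation, I would instead verify the two implications ``$\Rightarrow$'' and ``$\Leftarrow$'' separately, in each case chasing a witness state $\sigma'$ for the execution $\config{C}{\sigma} \evaluation^{*} \config{\mathemptyword}{\sigma'}$ supplied by the definition of $\mathsetweakestpre{C}{Q}$ or by the hypothesis. The only point that requires a moment of care is the conversion of $\forall \sigma'\,(\psi(\sigma') \Rightarrow \chi)$ into $(\exists \sigma'\,\psi(\sigma')) \Rightarrow \chi$, which is valid precisely because $\sigma'$ is not free in $\chi = (\sigma \models P)$; beyond this I do not anticipate any genuine obstacle.
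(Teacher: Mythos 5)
Your proposal is correct and, at its core, does the same thing as the paper: both proofs are just an unfolding of \cref{def:reverse-correctness} and \cref{definition:weakest-pre-condition}, with the paper presenting the two implications by fixing states and chasing the witness $\sigma'$, while you package the identical quantifier bookkeeping as a single chain of equivalences (using that $\forall \sigma'(\psi(\sigma') \Rightarrow \chi)$ is $(\exists \sigma'\,\psi(\sigma')) \Rightarrow \chi$ when $\sigma'$ is not free in $\chi$). No gap; the difference is purely presentational.
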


\begin{proof}
 \noindent The `if' part:
 Assume 
 ${\mathsetweakestpre{C}{Q}}\subseteq{\mathsetintension{\sigma}{{\sigma}\models{P}}}$.
 Fix a state $\sigma'$ with ${\sigma'}\models{Q}$.
 Fix a state $\sigma$ with $\config{C}{\sigma} \evaluation^{*} \config{\mathemptyword}{\sigma'}$.
 Then, ${\sigma}\in{\mathsetweakestpre{C}{Q}}$ holds.
 Because of 
 ${\mathsetweakestpre{C}{Q}}\subseteq{\mathsetintension{\sigma}{{\sigma}\models{P}}}$,
 we have ${\sigma}\models{P}$.
 Thus, $\ihtriple{P}{C}{Q}$ is valid.

 \noindent The `only if' part:
 Assume that $\ihtriple{P}{C}{Q}$ is valid.
 Let ${\sigma}\in{\mathsetweakestpre{C}{Q}}$.
 Then, there exists a state $\sigma'$ such that
 ${\config{C}{\sigma} \evaluation^{*} \config{\mathemptyword}{\sigma'}}$ and
 ${{\sigma'} \models {Q}}$ hold.
 Since $\ihtriple{P}{C}{Q}$ is valid,
 ${\sigma}\models{P}$.
 Thus, ${\mathsetweakestpre{C}{Q}}\subseteq{\mathsetintension{\sigma}{{\sigma}\models{P}}}$
 holds.
\end{proof}

\begin{figure*}[t]
 \centering
 \begin{inlineprooftree}
  \AxiomC{}
  \RightLabel{\axiomILrule}
  \UnaryInfC{$\ihtriple{Q}{\mathemptyword}{Q}$}
 \end{inlineprooftree}
 \begin{inlineprooftree}
  \AxiomC{\phantom{$\ihtriple{\mathsubstbox{P}{\mathsubst{x}{E}}}{\assign{x}{E}}{P}$}}
  \RightLabel{\assignPILrule}
  \UnaryInfC{$\ihtriple{\mathsubstbox{Q}{\mathsubst{x}{E}}}{\assign{x}{E}}{Q}$}
 \end{inlineprooftree}
 \begin{inlineprooftree}
  \AxiomC{$\ihtriple{P}{C_{0}}{R}$}
  \AxiomC{$\ihtriple{R}{C_{1}}{Q}$}
  \RightLabel{\seqILrule}
  \BinaryInfC{$\ihtriple{P}{{C_{0}};{C_{1}}}{Q}$}
 \end{inlineprooftree}
 \begin{inlineprooftree}
  \AxiomC{$\ihtriple{P}{C}{Q}$}
  \LeftLabel{(${P}\models{P'}$, ${Q'}\models{Q}$)}
  \RightLabel{\consILrule}
  \UnaryInfC{$\ihtriple{P'}{C}{Q'}$}
 \end{inlineprooftree}
 \begin{inlineprooftree}
  \AxiomC{$\ihtriple{P}{C_{0}}{Q}$}
  \AxiomC{$\ihtriple{P}{C_{1}}{Q}$}
  \RightLabel{\orPILrule}
  \BinaryInfC{$\ihtriple{P}{\cOR{C_{0}}{C_{1}}}{Q}$}
 \end{inlineprooftree}
 \begin{inlineprooftree}
  \AxiomC{$\ihtriple{{B}\to{P}}{C}{P}$}
  \RightLabel{\whilePILrule}
  \UnaryInfC{$\ihtriple{P}{\while{B}{C}}{{\lnot B}\to{P}}$}
 \end{inlineprooftree}
 
 \caption{The rules for our ordinary proof system of partial incorrectness logic (partial reverse Hoare logic)}
 \label{fig:rules-for-ordinary-proof-of-par-incor}
\end{figure*}

\cref{fig:rules-for-ordinary-proof-of-par-incor} shows the inference rules for our proof system of \emph{partial reverse Hoare logic}.
We note two points about these rules.

Firstly, our rule for assignment is similar to that in Hoare logic, 
not in (total) reverse Hoare logic.
The naive translation of \assignPILrule
\begin{prooftree}
 \AxiomC{\phantom{$\ihtriple{\mathsubstbox{P}{\mathsubst{x}{E}}}{\assign{x}{E}}{P}$}}
 \UnaryInfC{$\tihtriple{\mathsubstbox{Q}{\mathsubst{x}{E}}}{\assign{x}{E}}{Q}$}
\end{prooftree}
is not sound and complete in (total) reverse Hoare logic 
(see \cite[p.159]{Vries2011} for details).
However, \assignPILrule\ is sound and complete in partial reverse Hoare logic,
as we show later.

Secondly, assertions in our rule for the while loop are 
the dual of these in partial Hoare logic.
The popular rule for the while loop in partial Hoare logic is as follows:
\begin{center} 
 \begin{inlineprooftree}
  \AxiomC{$\htriple{{B}\land{P}}{C}{P}$}
  \UnaryInfC{$\htriple{P}{\while{B}{C}}{{\lnot B}\land{P}}$}
 \end{inlineprooftree}.
\end{center}
We note that ${P\land B}\models{P'}$ is equivalent to ${P}\models{B\to P'}$.
Interestingly, \whilePILrule\ is sound and complete in partial reverse Hoare logic.
As we see in \cref{sec:intro}, 
the semantics of partial reverse Hoare logic is the dual of ``total'' Hoare logic.
However, the rule is not so; it is the dual of partial Hoare logic.
This fact is very interesting, but we do not understand why the twist exists.

We call $P$ in \whilePILrule\ a \emph{loop invariant of $\while{B}{C}$}.

We define a \emph{\PIL -proof} as a derivation tree constructed 
according to the proof rules in \cref{fig:rules-for-ordinary-proof-of-par-incor},
each leaf of which is a conclusion of either \axiomILrule\ or \assignPILrule.
If there is a \PIL -proof whose root is labelled by $\ihtriple{P}{C}{Q}$,
we say that $\ihtriple{P}{C}{Q}$ is \emph{provable in \PIL}.

\begin{example}
 Let $+$ addition operator. The following is a \PIL -proof:
 \begin{center}
  \begin{inlineprooftree}
   \AxiomC{}
   \RightLabel{\assignPILrule}
   \UnaryInfC{$\ihtriple{\top}{\assign{x}{x+i}}{\top}$}
   \RightLabel{\consILrule}
   \UnaryInfC{$\ihtriple{{i<5}\to{\top}}{\assign{x}{x+i}}{\top}$}

   \AxiomC{}
   \RightLabel{\assignPILrule}
   \UnaryInfC{$\ihtriple{\top}{\assign{i}{i+1}}{\top}$}
   \RightLabel{\consILrule}
   \UnaryInfC{$\ihtriple{{i<5}\to{\top}}{\assign{i}{i+1}}{\top}$}

   \RightLabel{\seqILrule}
   \BinaryInfC{$\ihtriple{{i<5}\to{\top}}{{\assign{x}{x+i}};{\assign{i}{i+1}}}{\top}$}
   \RightLabel{\whilePILrule}
   \UnaryInfC{$\ihtriple{\top}{\while{i<5}{{\assign{x}{x+i}};{\assign{i}{i+1}}}}{{\lnot (i < 5)}\to \top}$}
   \RightLabel{\consILrule}
   \UnaryInfC{$\ihtriple{\top}{\while{i<5}{{\assign{x}{x+i}};{\assign{i}{i+1}}}}{{x>0}\land{i\geq 5}}$}
  \end{inlineprooftree}.
 \end{center}
\end{example}

We show the soundness theorem.
\begin{proposition}[Soundness]
\label{prop:rev_hoare_sound}
If $\ihtriple{P}{C}{Q}$ is provable in \PIL, then it is valid.
\end{proposition}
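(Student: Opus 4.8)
The plan is to argue by induction on the structure of the $\PIL$-proof of $\ihtriple{P}{C}{Q}$, that is, on the last rule applied: assuming the premise triples are valid (induction hypothesis), I show the conclusion is valid by unfolding \cref{def:reverse-correctness}. Concretely, in each case I fix a state $\sigma'$ with $\sigma' \models Q$ and a state $\sigma$ with $\config{C}{\sigma} \evaluation^{*} \config{\mathemptyword}{\sigma'}$, and derive $\sigma \models P$.

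The base cases and the "structural" rules are routine. For \axiomILrule, the only execution of $\mathemptyword$ from $\sigma$ is the empty one, so $\sigma \equiv \sigma'$ and hence $\sigma \models Q$. For \assignPILrule, the execution consists of the single $(\cmd{assign})$ step, so $\sigma' \equiv \mathsubstbox{\sigma}{\mathsubst{x}{\sem{E}\sigma}}$, and the Substitution Lemma (\cref{lem:assert_subst}) turns $\sigma' \models Q$ into $\sigma \models \mathsubstbox{Q}{\mathsubst{x}{E}}$. For \consILrule, compose the hypotheses $P \models P'$ and $Q' \models Q$ with the induction hypothesis. For \orPILrule, the first step of the execution is some $(\cmd{or}\ i)$ step leaving the state fixed, so $\config{C_{i}}{\sigma} \evaluation^{*} \config{\mathemptyword}{\sigma'}$ and the induction hypothesis for that branch gives $\sigma \models P$. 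For \seqILrule, use \cref{lem:property-config}, \cref{item:sequencing-lem-property-config}, to split the execution at an intermediate state $\sigma''$: the hypothesis for $\ihtriple{R}{C_{1}}{Q}$ gives $\sigma'' \models R$, and then the hypothesis for $\ihtriple{P}{C_{0}}{R}$ gives $\sigma \models P$.

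The substantive case is \whilePILrule, with $C \equiv \while{B}{C_{0}}$, post-condition $Q \equiv {\lnot B}\to{P}$, and valid premise $\ihtriple{{B}\to{P}}{C_{0}}{P}$. Given $\sigma' \models {\lnot B}\to{P}$ and $\config{\while{B}{C_{0}}}{\sigma} \evaluation^{*} \config{\mathemptyword}{\sigma'}$, I apply \cref{lem:property-config}, \cref{item:while-lem-property-config}, to obtain states $\sigma_{0} \equiv \sigma, \dots, \sigma_{k} \equiv \sigma'$ with $\sigma_{k} \models \lnot B$ and, for each $i < k$, $\config{C_{0}}{\sigma_{i}} \evaluation^{*} \config{\mathemptyword}{\sigma_{i+1}}$ together with $\sigma_{i} \models B$. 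I then prove $\sigma_{i} \models P$ by downward induction on $i$ from $k$ to $0$. For $i = k$: from $\sigma_{k} \models \lnot B$ and $\sigma_{k} \models {\lnot B}\to{P}$ we get $\sigma_{k} \models P$. For the step, assuming $\sigma_{i+1} \models P$: validity of the premise applied to the execution $\config{C_{0}}{\sigma_{i}} \evaluation^{*} \config{\mathemptyword}{\sigma_{i+1}}$ yields $\sigma_{i} \models {B}\to{P}$, and $\sigma_{i} \models B$ then gives $\sigma_{i} \models P$. Instantiating at $i = 0$ gives $\sigma \models P$.

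The only delicate point is this while case: the induction must be run backwards along the loop unrollings, and the base step $i = k$ is precisely where the shape of the post-condition ${\lnot B}\to{P}$ (rather than just $P$) is used, mirroring the twist noted after \cref{fig:rules-for-ordinary-proof-of-par-incor}. Everything else is a direct unfolding of \cref{def:reverse-correctness} together with the structural facts about $\evaluation^{*}$ collected in \cref{lem:property-config}.
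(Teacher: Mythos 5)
Your proposal is correct and follows essentially the same route as the paper: rule-by-rule local soundness, with \cref{lem:assert_subst} for assignment, \cref{lem:property-config} for sequencing and the loop decomposition, and an induction along the loop unrollings in the \whilePILrule\ case that transfers $P$ backwards from the final state (where ${\lnot B}\to{P}$ is used) to the initial one. The only cosmetic difference is that the paper indexes the intermediate states in the reverse order ($\sigma_0\equiv\sigma'$, $\sigma_k\equiv\sigma$) so its inner induction runs upward on $i$, which is the same argument as your downward induction.
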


\begin{proof}
 It suffices to show the local soundness of each rule: 
 if all the premises are valid, then the conclusion is valid.

 \proofcase{\axiomILrule} Obvious.

 \proofcase{\assignPILrule}
 We show that $\ihtriple{\mathsubstbox{Q}{\mathsubst{x}{E}}}{\assign{x}{E}}{Q}$ is valid.
 
 Fix a state $\sigma'$ with ${\sigma'} \models {Q}$.
 Fix a state $\sigma$ with $\config{\assign{x}{E}}{\sigma} \evaluation^{*} \config{\mathemptyword}{\sigma'}$.
 Since $\config{\assign{x}{E}}{\sigma} \evaluation^{*} \config{\mathemptyword}{\sigma'}$ holds,
 we have ${\sigma'}\equiv{\mathsubstbox{\sigma}{\mathsubst{x}{\sem{E}\sigma}}}$.
 Because of ${\mathsubstbox{\sigma}{\mathsubst{x}{\sem{E}\sigma}}} \models {Q}$, 
 \cref{lem:assert_subst} implies ${\sigma}\models{\mathsubstbox{Q}{\mathsubst{x}{E}}}$.
 
 \proofcase{\seqILrule}
 Assume that $\ihtriple{P}{C_{0}}{R}$ and $\ihtriple{R}{C_{1}}{Q}$ are valid.
 We show that $\ihtriple{P}{{C_{0}};{C_{1}}}{Q}$ is valid.

 Fix a state $\sigma'$ with ${\sigma'}\models{Q}$.
 Fix a state $\sigma$ with $\config{{C_{0}};{C_{1}}}{\sigma} \evaluation^{*} \config{\mathemptyword}{\sigma'}$.
 By \cref{lem:property-config} \cref{item:sequencing-lem-property-config},
 there exists $\sigma''$ such that
 ${\config{C_{0}}{\sigma}} \evaluation^{*} {\config{\mathemptyword}{\sigma''}}$ and
 ${\config{C_{1}}{\sigma''}} \evaluation^{*} {\config{\mathemptyword}{\sigma'}}$ hold.
 Since $\ihtriple{R}{C_{1}}{Q}$ is valid,
 we have ${\sigma''} \models {R}$.
 Since $\ihtriple{P}{C_{0}}{R}$ is valid,
 we have ${\sigma} \models {P}$.
 
 \proofcase{\consILrule}
 Assume that $\ihtriple{P}{C}{Q}$ is valid, and
 both ${P}\models{P'}$ and ${Q'}\models{Q}$ hold.
 We show that $\ihtriple{P'}{C}{Q'}$ is valid.
 
 Fix a state $\sigma'$ with ${\sigma'}\models{Q'}$.
 Fix a state $\sigma$ with $\config{C}{\sigma} \evaluation^{*} \config{\mathemptyword}{\sigma'}$.
 Because of ${Q'}\models{Q}$, we have ${\sigma'}\models{Q}$.
 Since $\ihtriple{P}{C}{Q}$ is valid, ${\sigma} \models {P}$.
 Because of ${P}\models{P'}$, we have ${\sigma}\models{P'}$.

 \proofcase{\orPILrule}
 Assume that $\ihtriple{P}{C_{0}}{Q}$ and $\ihtriple{P}{C_{1}}{Q}$ are valid.
 We show that $\ihtriple{P}{\cOR{C_{0}}{C_{1}}}{Q}$ is valid.

 Fix a state $\sigma'$ with ${\sigma'}\models{Q'}$.
 Fix a state $\sigma$ with 
 ${\config{\cOR{C_{0}}{C_{1}}}{\sigma}} \evaluation^{*} {\config{\mathemptyword}{\sigma'}}$.
 Assume that
 ${\config{\cOR{C_{0}}{C_{1}}}{\sigma}} \evaluation {\config{C_{0}}{\sigma}} \evaluation^{*} {\config{\mathemptyword}{\sigma'}}$.
 Since ${\config{C_{0}}{\sigma}} \evaluation^{*} {\config{\mathemptyword}{\sigma'}}$ holds and
 $\ihtriple{P}{C_{0}}{Q}$ is valid,
 we have ${\sigma} \models {P}$.

 In the same way,
 we have ${\sigma} \models {P}$ 
 if ${\config{\cOR{C_{0}}{C_{1}}}{\sigma}} \evaluation {\config{C_{1}}{\sigma}} \evaluation^{*} {\config{\mathemptyword}{\sigma'}}$ holds.
 Thus, $\ihtriple{P}{\cOR{C_{0}}{C_{1}}}{Q}$ is valid.

 \proofcase{\whilePILrule}
 Assume that $\ihtriple{{B}\to{P}}{C}{P}$ is valid.
 We show that $\ihtriple{P}{\while{B}{C}}{{\lnot B}\to{P}}$ is valid.

 Fix a state $\sigma'$ with ${\sigma'}\models{{\lnot B}\to{P}}$.
 Fix a state $\sigma$ with $\config{\while{B}{C}}{\sigma} \evaluation^{*} \config{\mathemptyword}{\sigma'}$.
 By \cref{lem:property-config} \cref{item:while-lem-property-config},
 there exist states $\sigma_{0}, \dots, \sigma_{k}$ 
 such that ${\sigma_{k}}\equiv{\sigma}$, ${\sigma_{0}}\equiv{\sigma'}$, 
 ${\sigma_{0}}\models{\lnot B}$, 
 ${\config{C}{\sigma_{i}}} \evaluation^{*} {\config{\mathemptyword}{\sigma_{i-1}}}$, and
 ${\sigma_{i}}\models{B}$
 hold for each $i=1, \dots, k$.
 We show ${\sigma_{i}}\models{P}$ for each $i=0, \dots, k$.
 The proof progresses by induction on $i$.

 Assume $i=0$.
 Then, ${\sigma_{0}}\equiv{\sigma'}$ holds.
 Since ${\sigma_{0}}\models{\lnot B}$ and ${\sigma_{0}}\models{{\lnot B}\to{P}}$ hold,
 we have ${\sigma_{0}}\models{P}$.

 Assume $i>0$.
 Then, we have ${\config{C}{\sigma_{i}}} \evaluation^{*} {\config{\mathemptyword}{\sigma_{i-1}}}$.
 By induction hypothesis, we have ${\sigma_{i-1}}\models{P}$.
 Since $\ihtriple{{B}\to{P}}{C}{P}$ is valid, we have ${\sigma_{i}}\models{{B}\to{P}}$.
 Because of ${\sigma_{i}}\models{B}$, we see ${\sigma_{i}}\models{P}$.

 Then, we have ${\sigma_{k}}\models{P}$.
 Since ${\sigma_{k}}\equiv{\sigma}$, we have ${\sigma}\models{P}$.
\end{proof}

Our proof system is relatively complete
if the expressiveness of the assertion language is sufficient, 
as in other Hoare-style logics (see \cite{Cook1978,Apt1981,Winskel1993,Vries2011,OHearn2019,Apt2O19,Lee2024}).
We say that \emph{the language of assertions is $\mathsetweakestpresy$-expressive}
if the following statement holds:
for any assertion $Q$ and any program $C$,
there exists an assertion $P$ such that 
${\sigma}\in{\mathsetweakestpre{C}{Q}}$ holds if and only if
${\sigma}\models{P}$ holds.
If a language of assertions includes some arithmetic operators, 
the language of assertions is $\mathsetweakestpresy$-expressive.
We give how to construct a weakest pre-condition assertion $\mathweakestpre{C}{Q}$
with some arithmetic operators in \cref{app:wpr}.

\begin{theorem}[Relative completeness]
\label[theorem]{thm:rev_hoare_complete}
 If the language of assertions is $\mathsetweakestpresy$-expressive,
 then any valid partial reverse Hoare triple $\ihtriple{P}{C}{Q}$ is provable 
 in \PIL.
\end{theorem}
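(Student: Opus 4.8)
The plan is to adapt Cook's method for relative completeness of Hoare-style logics. By $\mathsetweakestpresy$-expressiveness, for every program $C$ and assertion $Q$ we may fix an assertion, written $\mathweakestpre{C}{Q}$, with ${\sigma}\models{\mathweakestpre{C}{Q}}$ iff ${\sigma}\in{\mathsetweakestpre{C}{Q}}$. The core of the argument is the lemma that $\ihtriple{\mathweakestpre{C}{Q}}{C}{Q}$ is provable in \PIL{} for all $C$ and $Q$. Granting this, the theorem follows immediately: if $\ihtriple{P}{C}{Q}$ is valid, then \cref{prop:weakestness-set} gives ${\mathsetweakestpre{C}{Q}}\subseteq{\mathsetintension{\sigma}{{\sigma}\models{P}}}$, i.e.\ ${\mathweakestpre{C}{Q}}\models{P}$, so one application of \consILrule{} (with ${\mathweakestpre{C}{Q}}\models{P}$ and the trivial ${Q}\models{Q}$) converts the provable triple $\ihtriple{\mathweakestpre{C}{Q}}{C}{Q}$ into $\ihtriple{P}{C}{Q}$; note that this is exactly the direction in which \consILrule{} acts, weakening the precondition rather than strengthening it.

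I would prove the lemma by structural induction on $C$. In each case the point is to check, from the operational semantics and \cref{lem:property-config}, that $\mathweakestpre{C}{Q}$ is provably equivalent to the appropriate combination of the weakest pre-conditions of the immediate subprograms, and then to feed that into the matching rule of \cref{fig:rules-for-ordinary-proof-of-par-incor} together with \consILrule{}. For $C\equiv\mathemptyword$, $\mathweakestpre{\mathemptyword}{Q}$ is equivalent to $Q$, so \axiomILrule{} suffices. For $C\equiv\assign{x}{E}$, \cref{lem:assert_subst} and the assignment rule give that $\mathweakestpre{\assign{x}{E}}{Q}$ is equivalent to $\mathsubstbox{Q}{\mathsubst{x}{E}}$, so \assignPILrule{} applies. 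For $C\equiv C_{0};C_{1}$, \cref{lem:property-config} \cref{item:sequencing-lem-property-config} shows $\mathweakestpre{C_{0};C_{1}}{Q}$ is equivalent to $\mathweakestpre{C_{0}}{\mathweakestpre{C_{1}}{Q}}$, so the two induction hypotheses plus \seqILrule{} (with $R\equiv\mathweakestpre{C_{1}}{Q}$) close the case. For $C\equiv\cOR{C_{0}}{C_{1}}$, $\mathweakestpre{\cOR{C_{0}}{C_{1}}}{Q}$ is equivalent to ${\mathweakestpre{C_{0}}{Q}}\lor{\mathweakestpre{C_{1}}{Q}}$; from the induction hypotheses $\ihtriple{\mathweakestpre{C_{i}}{Q}}{C_{i}}{Q}$, weaken each precondition to that disjunction via \consILrule{} and then apply \orPILrule{}.

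The main obstacle is the loop case $C\equiv\while{B}{C'}$, where --- as flagged in the introduction --- the right choice of loop invariant for \whilePILrule{} is $I\equiv\mathweakestpre{\while{B}{C'}}{Q}$ itself. Reading off \cref{lem:property-config} \cref{item:while-lem-property-config}, I would first establish two semantic facts: (i) ${\mathweakestpre{C'}{I}}\models{(B\to I)}$ --- if ${\sigma}\models{B}$ and $C'$ takes $\sigma$ to a state already in $I$, then prepending one loop unfolding $\config{\while{B}{C'}}{\sigma}\evaluation\config{C';\while{B}{C'}}{\sigma}\evaluation^{*}\dots$ witnesses ${\sigma}\in{I}$; and (ii) ${Q}\models{(\lnot B\to I)}$ --- any $\sigma'$ with ${\sigma'}\models{Q}$ and ${\sigma'}\models{\lnot B}$ lies in $I$, since the loop halts in $\sigma'$ in one step. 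Then: the induction hypothesis for $C'$ with post-condition $I$ gives $\ihtriple{\mathweakestpre{C'}{I}}{C'}{I}$; fact (i) with \consILrule{} upgrades it to $\ihtriple{B\to I}{C'}{I}$; \whilePILrule{} yields $\ihtriple{I}{\while{B}{C'}}{\lnot B\to I}$; and fact (ii) with \consILrule{} gives $\ihtriple{I}{\while{B}{C'}}{Q}$, which is $\ihtriple{\mathweakestpre{\while{B}{C'}}{Q}}{\while{B}{C'}}{Q}$. I expect fact (i) to be the only genuinely delicate verification: the semantics of the triple propagates information backwards along an execution whereas the invariant is maintained forwards, so one must unfold the loop in the correct direction; everything else is a routine unwinding of \cref{lem:property-config} and the substitution lemmas.
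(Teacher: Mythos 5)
Your proposal is correct and follows essentially the same route as the paper: a Cook-style argument via $\mathsetweakestpresy$-expressiveness, structural induction on $C$ using the characterisations of $\mathweakestpre{C}{Q}$ for each construct, and taking $\mathweakestpre{\while{B}{C'}}{Q}$ itself as the loop invariant for \whilePILrule, with one final application of \consILrule{} using ${\mathweakestpre{C}{Q}}\models{P}$. Your packaging of the induction as the lemma that $\ihtriple{\mathweakestpre{C}{Q}}{C}{Q}$ is provable, rather than inducting directly on ``every valid triple is provable'', is only a cosmetic reorganisation of the paper's proof.
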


In the remainder of this paper,
we assume that the language of assertions is $\mathsetweakestpresy$-expressive. 
For an assertion $Q$ and a program $C$,
we write ${\mathweakestpre{C}{Q}}$ for an assertion satisfying
the following condition:
${\sigma}\in{\mathsetweakestpre{C}{Q}}$ holds if and only if
${\sigma}\models{\mathweakestpre{C}{Q}}$ holds.
 
To show \cref{thm:rev_hoare_complete}, we show some lemmata.

\begin{lemma}
 \label[lemma]{lem:weakestness-assertion}
 $\ihtriple{P}{C}{Q}$ is valid
 if and only if 
 ${\mathweakestpre{C}{Q}}\models{P}$ holds.
\end{lemma}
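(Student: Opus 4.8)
The plan is to chain \cref{prop:weakestness-set} with the defining property of the weakest pre-condition assertion $\mathweakestpre{C}{Q}$, so that the lemma becomes a purely mechanical reformulation. By \cref{prop:weakestness-set}, the triple $\ihtriple{P}{C}{Q}$ is valid if and only if ${\mathsetweakestpre{C}{Q}}\subseteq{\mathsetintension{\sigma}{{\sigma}\models{P}}}$. Since we have assumed the language of assertions to be $\mathsetweakestpresy$-expressive, the assertion $\mathweakestpre{C}{Q}$ exists and satisfies ${\sigma}\in{\mathsetweakestpre{C}{Q}}$ if and only if ${\sigma}\models{\mathweakestpre{C}{Q}}$ for every state $\sigma$.

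First I would unfold the set inclusion ${\mathsetweakestpre{C}{Q}}\subseteq{\mathsetintension{\sigma}{{\sigma}\models{P}}}$ into its pointwise form: for every state $\sigma$, if ${\sigma}\in{\mathsetweakestpre{C}{Q}}$ then ${\sigma}\models{P}$. Next I would rewrite the antecedent ${\sigma}\in{\mathsetweakestpre{C}{Q}}$ using the defining property of $\mathweakestpre{C}{Q}$, yielding: for every state $\sigma$, if ${\sigma}\models{\mathweakestpre{C}{Q}}$ then ${\sigma}\models{P}$. By the definition of $\models$ between assertions, this last statement is exactly ${\mathweakestpre{C}{Q}}\models{P}$. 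Since each rewriting is an equivalence, both directions of the lemma are obtained at once.

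There is essentially no obstacle here: the argument is a direct translation between the set-theoretic formulation furnished by \cref{prop:weakestness-set} and the semantic-entailment formulation, mediated by $\mathsetweakestpresy$-expressiveness. The only point worth a little care is to present each step as an ``if and only if'' rather than a one-directional implication, so that the equivalence in the lemma statement is genuinely established in both directions.
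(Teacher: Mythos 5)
Your proposal is correct and matches the paper's argument in substance: both hinge on \cref{prop:weakestness-set} together with the defining property of the assertion $\mathweakestpre{C}{Q}$ guaranteed by $\mathsetweakestpresy$-expressiveness, translated into the pointwise meaning of entailment. The only cosmetic difference is that the paper re-unfolds the validity definition directly in the `if' direction instead of citing the proposition's equivalence as you do, which changes nothing mathematically.
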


\begin{proof}
 \noindent The `if' part:
 Assume ${\mathweakestpre{C}{Q}}\models{P}$.
 Fix $\sigma'$ with ${\sigma'}\models{Q}$.
 Fix a state $\sigma$ with $\config{C}{\sigma} \evaluation^{*} \config{\mathemptyword}{\sigma'}$.
 Then, ${\sigma}\in{\mathsetweakestpre{C}{Q}}$ holds.
 By the definition of $\mathweakestpre{C}{Q}$,
 ${\sigma}\models{\mathweakestpre{C}{Q}}$ holds.
 By assumption, ${\sigma}\models{P}$.
 Thus, $\ihtriple{P}{C}{Q}$ is valid.
 
 \noindent The `only if' part:
 Assume that $\ihtriple{P}{C}{Q}$ is valid.
 Let $\sigma$ with ${\sigma}\models{\mathweakestpre{C}{Q}}$.
 By the definition of $\mathweakestpre{C}{Q}$,
 ${\sigma}\in{\mathsetweakestpre{C}{Q}}$ holds.
 By \cref{prop:weakestness-set}, we have
 ${\mathsetweakestpre{C}{Q}}\subseteq{\mathsetintension{\sigma}{{\sigma}\models{P}}}$.
 Then, ${\sigma}\models{P}$ holds.
 Thus,  ${\mathweakestpre{C}{Q}}\models{P}$ holds.
\end{proof}

\begin{lemma}
 \label[lemma]{lem:thm_rev_hoare_complete}
 $\ihtriple{\mathweakestpre{C}{Q}}{C}{Q}$ is valid.
\end{lemma}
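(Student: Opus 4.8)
The plan is to obtain this as an immediate consequence of \cref{lem:weakestness-assertion}. That lemma states that $\ihtriple{P}{C}{Q}$ is valid if and only if ${\mathweakestpre{C}{Q}}\models{P}$. Instantiating the schematic pre-condition $P$ with the assertion $\mathweakestpre{C}{Q}$ itself, the right-hand side becomes ${\mathweakestpre{C}{Q}}\models{\mathweakestpre{C}{Q}}$, which holds trivially since $\models$ is reflexive. Hence $\ihtriple{\mathweakestpre{C}{Q}}{C}{Q}$ is valid.

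Alternatively, and equivalently, one can argue directly from the definitions, which is perhaps more transparent. I would fix an arbitrary state $\sigma'$ with $\sigma'\models Q$ and an arbitrary state $\sigma$ with $\config{C}{\sigma}\evaluation^{*}\config{\mathemptyword}{\sigma'}$. Then $\sigma'$ itself witnesses membership of $\sigma$ in $\mathsetweakestpre{C}{Q}$ via \cref{definition:weakest-pre-condition}, so $\sigma\in\mathsetweakestpre{C}{Q}$; by the defining property of the assertion $\mathweakestpre{C}{Q}$ (available because the language of assertions is assumed $\mathsetweakestpresy$-expressive from this point on), this yields $\sigma\models\mathweakestpre{C}{Q}$, which is exactly what \cref{def:reverse-correctness} demands for validity of $\ihtriple{\mathweakestpre{C}{Q}}{C}{Q}$.

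There is no real obstacle here: the statement is essentially a repackaging of the definition of the weakest pre-condition through the equivalences already established in \cref{prop:weakestness-set} and \cref{lem:weakestness-assertion}. The only point requiring mild care is that the blanket $\mathsetweakestpresy$-expressiveness assumption is in force, so that the assertion $\mathweakestpre{C}{Q}$ exists and satisfies ${\sigma}\in{\mathsetweakestpre{C}{Q}}$ iff ${\sigma}\models{\mathweakestpre{C}{Q}}$; this lemma is precisely the step that licenses using $\mathweakestpre{C}{Q}$ as the pre-condition in the completeness argument that follows.
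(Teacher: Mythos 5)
Your proposal is correct and matches the paper's own argument: the paper proves this lemma simply by appeal to \cref{lem:weakestness-assertion}, which is exactly your first paragraph (instantiating $P$ with $\mathweakestpre{C}{Q}$ and using reflexivity of $\models$). The direct unfolding of the definitions you give as an alternative is also fine, but it is just the content of that lemma re-expanded.
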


\begin{proof}
 By \cref{lem:weakestness-assertion}.
\end{proof}

\begin{lemma}
 \label[lemma]{lem:weakest-pre-predicate}
 Following statements hold:
 \begin{enumerate}
  \item ${\sigma}\models{\mathweakestpre{\mathemptyword}{Q}}$ holds
	if and only if ${\sigma}\models{Q}$ holds.
	\label{item:emptyword-lem-weakest-post-predicate}
  \item ${\sigma}\models{\mathweakestpre{\assign{x}{E}}{Q}}$ holds
	if and only if 
	${\sigma}\models{\mathsubstbox{Q}{\mathsubst{x}{E}}}$ holds.
	\label{item:assignment-lem-weakgest-post-predicate}
  \item ${\sigma}\models{\mathweakestpre{C_{0};C_{1}}{Q}}$ holds
	if and only if 
	${\sigma}\models{\mathweakestpre{C_{0}}{\mathweakestpre{C_{1}}{Q}}}$ holds.
	\label{item:cons-lem-strongest-post-predicate}
  \item ${\sigma}\models{\mathweakestpre{\cOR{C_{0}}{C_{1}}}{Q}}$ holds
	if and only if 
	${\sigma}\models{{\mathweakestpre{C_{0}}{Q}}\lor{\mathweakestpre{C_{1}}{Q}}}$ holds.
	\label{item:cor-lem-strongest-post-predicate}
 \end{enumerate}
\end{lemma}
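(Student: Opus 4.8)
The plan is to prove each of the four equivalences by unfolding \cref{definition:weakest-pre-condition} and doing a small case analysis on the shape of the executions involved; no new induction is needed beyond what is already packaged in \cref{lem:property-config}. Throughout I use the defining property of the assertion $\mathweakestpre{C}{Q}$ under the standing $\mathsetweakestpresy$-expressiveness assumption, namely that $\sigma \models \mathweakestpre{C}{Q}$ holds iff $\sigma \in \mathsetweakestpre{C}{Q}$ holds, so in each part it suffices to characterise membership in the set $\mathsetweakestpre{C}{Q}$. (This lemma is meant to feed the completeness argument, where $\ihtriple{\mathweakestpre{C}{Q}}{C}{Q}$ is shown provable by induction on $C$.)

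For \cref{item:emptyword-lem-weakest-post-predicate}: the configuration $\config{\mathemptyword}{\sigma}$ admits no transition, so the only $\sigma'$ with $\config{\mathemptyword}{\sigma} \evaluation^{*} \config{\mathemptyword}{\sigma'}$ is $\sigma$ itself; hence $\sigma \in \mathsetweakestpre{\mathemptyword}{Q}$ iff $\sigma \models Q$. For \cref{item:assignment-lem-weakgest-post-predicate}: the only transition from $\config{\assign{x}{E}}{\sigma}$ is the one given by rule $(\cmd{assign})$, landing in $\config{\mathemptyword}{\mathsubstbox{\sigma}{\mathsubst{x}{\sem{E}\sigma}}}$; so $\sigma \in \mathsetweakestpre{\assign{x}{E}}{Q}$ iff $\mathsubstbox{\sigma}{\mathsubst{x}{\sem{E}\sigma}} \models Q$, and one application of \cref{lem:assert_subst} rewrites this as $\sigma \models \mathsubstbox{Q}{\mathsubst{x}{E}}$.

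For \cref{item:cons-lem-strongest-post-predicate}: I invoke \cref{lem:property-config} \cref{item:sequencing-lem-property-config}, which gives that $\config{C_{0};C_{1}}{\sigma} \evaluation^{*} \config{\mathemptyword}{\sigma'}$ holds iff there is an intermediate $\sigma''$ with $\config{C_{0}}{\sigma} \evaluation^{*} \config{\mathemptyword}{\sigma''}$ and $\config{C_{1}}{\sigma''} \evaluation^{*} \config{\mathemptyword}{\sigma'}$. Threading a witness $\sigma'$ for $Q$ through, the conjunction of ``$\config{C_{1}}{\sigma''} \evaluation^{*} \config{\mathemptyword}{\sigma'}$'' and ``$\sigma' \models Q$'' is exactly ``$\sigma'' \in \mathsetweakestpre{C_{1}}{Q}$'', i.e.\ ``$\sigma'' \models \mathweakestpre{C_{1}}{Q}$''; so membership of $\sigma$ in $\mathsetweakestpre{C_{0};C_{1}}{Q}$ unfolds to $\sigma \in \mathsetweakestpre{C_{0}}{\mathweakestpre{C_{1}}{Q}}$, which is the claim. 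For \cref{item:cor-lem-strongest-post-predicate}: every execution from $\config{\cOR{C_{0}}{C_{1}}}{\sigma}$ must begin with one of the two transitions $(\cmd{or}\ 0)$, $(\cmd{or}\ 1)$ of \cref{fig:prog_sem} and thereafter is an execution of $C_{0}$ or of $C_{1}$ started from $\sigma$; conversely any such $C_{i}$-execution extends by prepending the corresponding or-step. Hence $\config{\cOR{C_{0}}{C_{1}}}{\sigma} \evaluation^{*} \config{\mathemptyword}{\sigma'}$ iff $\config{C_{0}}{\sigma} \evaluation^{*} \config{\mathemptyword}{\sigma'}$ or $\config{C_{1}}{\sigma} \evaluation^{*} \config{\mathemptyword}{\sigma'}$, so $\mathsetweakestpre{\cOR{C_{0}}{C_{1}}}{Q} = \mathsetweakestpre{C_{0}}{Q} \cup \mathsetweakestpre{C_{1}}{Q}$, giving the disjunction.

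I do not expect a genuine obstacle here: the lemma is deliberately a repackaging of routine semantic unfoldings. The only points needing a little care are (i) keeping straight that $\mathweakestpre{C_{1}}{Q}$ denotes an arbitrary but fixed assertion with the correct extension (legitimate by $\mathsetweakestpresy$-expressiveness), so that $\mathweakestpre{C_{0}}{\mathweakestpre{C_{1}}{Q}}$ is again a well-formed assertion in \cref{item:cons-lem-strongest-post-predicate}, and (ii) justifying in \cref{item:cor-lem-strongest-post-predicate} that no transition other than the two $(\cmd{or}\ i)$ steps is available from $\config{\cOR{C_{0}}{C_{1}}}{\sigma}$, which is immediate from \cref{fig:prog_sem}.
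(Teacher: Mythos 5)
Your proposal is correct and follows essentially the same route as the paper: unfold the defining property of $\mathweakestpre{C}{Q}$ via $\mathsetweakestpre{C}{Q}$, use \cref{lem:assert_subst} for the assignment case, \cref{lem:property-config} \cref{item:sequencing-lem-property-config} for sequencing, and a case analysis on the two $(\cmd{or}\ i)$ transitions for choice. If anything, your explicit remarks that $\config{\mathemptyword}{\sigma}$ admits no transition and that only the $(\cmd{or}\ i)$ steps are available are slightly more careful than the paper's write-up, which leaves these points implicit.
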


\begin{proof}
 We show each statement.

 \noindent \cref{item:emptyword-lem-weakest-post-predicate} 
 \noindent The `if' part: Assume ${\sigma}\models{Q}$.
 Because of ${\config{\mathemptyword}{\sigma} \evaluation^{*} \config{\mathemptyword}{\sigma}}$,
 we have ${\sigma}\models{\mathweakestpre{\mathemptyword}{Q}}$.

 \noindent The `only if' part:
 Assume ${\sigma}\models{\mathweakestpre{\mathemptyword}{Q}}$.
 By definition of ${\mathweakestpre{\mathemptyword}{Q}}$, we have
 ${\config{\mathemptyword}{\sigma} \evaluation^{*} \config{\mathemptyword}{\sigma}}$ and
 ${{\sigma} \models {Q}}$.
 Then, we see ${{\sigma} \models {Q}}$.

 \noindent \cref{item:assignment-lem-weakgest-post-predicate}
 \noindent The `if' part: Assume ${\sigma}\models{\mathsubstbox{Q}{\mathsubst{x}{E}}}$.
 Then, we have 
 ${\config{\assign{x}{E}}{\sigma} \evaluation \config{\mathemptyword}{\mathsubstbox{\sigma}{\mathsubst{x}{\sem{E}\sigma}}}}$.
 Because of ${\sigma}\models{\mathsubstbox{Q}{\mathsubst{x}{E}}}$,
 we have ${{\mathsubstbox{\sigma}{\mathsubst{x}{\sem{E}\sigma}}} \models {Q}}$.
 Thus, ${\sigma}\models{\mathweakestpre{C}{Q}}$ holds.

 \noindent The `only if' part: Assume ${\sigma}\models{\mathweakestpre{\assign{x}{E}}{Q}}$.
 Then, we have
 ${\config{\assign{x}{E}}{\sigma} \evaluation \config{\mathemptyword}{\mathsubstbox{\sigma}{\mathsubst{x}{\sem{E}\sigma}}}}$ and
 ${{\mathsubstbox{\sigma}{\mathsubst{x}{\sem{E}\sigma}}} \models {Q}}$.
 By \cref{lem:assert_subst},
 ${\sigma} \models {\mathsubstbox{Q}{\mathsubst{x}{E}}}$ holds.

 \noindent \cref{item:cons-lem-strongest-post-predicate}
 \noindent The `if' part:
 Assume ${\sigma}\models{\mathweakestpre{C_{0}}{\mathweakestpre{C_{1}}{Q}}}$.
 Hence, there exists $\sigma''$ such that 
 ${\config{C_{0}}{\sigma}} \evaluation^{*} {\config{\mathemptyword}{\sigma''}}$ and 
 ${\sigma''}\models{\mathweakestpre{C_{1}}{Q}}$ hold.
 Therefore, there exists a state $\sigma'$ such that
 ${\config{C_{1}}{\sigma''}} \evaluation^{*} {\config{\mathemptyword}{\sigma'}}$ and
 ${{\sigma'} \models {Q}}$.
 By \cref{lem:property-config} \cref{item:sequencing-lem-property-config},
 ${\config{C_{0}; C_{1}}{\sigma} \evaluation^{*} \config{\mathemptyword}{\sigma'}}$ holds.
 Thus, ${\sigma}\in{\mathsetweakestpre{C}{Q}}$ holds.

 \noindent The `only if' part:
 Assume ${\sigma}\models{\mathweakestpre{C_{0};C_{1}}{Q}}$.
 Then, there exists a state $\sigma'$ such that
 ${\config{C_{0}; C_{1}}{\sigma} \evaluation^{*} \config{\mathemptyword}{\sigma'}}$ and
 ${{\sigma'} \models {Q}}$ hold.
 By \cref{lem:property-config} \cref{item:sequencing-lem-property-config},
 there exists $\sigma''$ such that 
 ${\config{C_{0}}{\sigma}} \evaluation^{*} {\config{\mathemptyword}{\sigma''}}$ and 
 ${\config{C_{1}}{\sigma''}} \evaluation^{*} {\config{\mathemptyword}{\sigma'}}$ hold.
 Since ${\config{C_{1}}{\sigma''}} \evaluation^{*} {\config{\mathemptyword}{\sigma'}}$ and
 ${{\sigma'} \models {Q}}$ hold,
 we have ${\sigma''}\in{\mathsetweakestpre{C_{1}}{Q}}$.
 Then, we have ${\sigma''}\models{\mathweakestpre{C_{1}}{Q}}$.
 Because of ${\config{C_{0}}{\sigma}} \evaluation^{*} {\config{\mathemptyword}{\sigma''}}$,
 we see ${\sigma}\in{\mathsetweakestpre{C_{0}}{\mathweakestpre{C_{1}}{Q}}}$.
 Then, we have ${\sigma}\models{\mathweakestpre{C_{0}}{\mathweakestpre{C_{1}}{Q}}}$.

 \noindent \cref{item:cor-lem-strongest-post-predicate}
 \noindent The `if' part:
 Assume ${\sigma}\models{{\mathweakestpre{C_{0}}{Q}}\lor{\mathweakestpre{C_{1}}{Q}}}$.
 Then, either ${\sigma}\models{\mathweakestpre{C_{0}}{Q}}$ or 
 ${\sigma}\models{\mathweakestpre{C_{1}}{Q}}$ holds.

 Assume ${\sigma}\models{\mathweakestpre{C_{0}}{Q}}$.
 Then, there exists a state $\sigma'$ such that
 $\config{C_{0}}{\sigma} \evaluation^{*} \config{\mathemptyword}{\sigma'}$ and
 ${{\sigma'} \models {Q}}$ hold.
 Because $\config{\cOR{C_{0}}{C_{1}}}{\sigma} \evaluation \config{C_{0}}{\sigma}$ holds,
 we have $\config{\cOR{C_{0}}{C_{1}}}{\sigma} \evaluation^{*} \config{\mathemptyword}{\sigma'}$.
 Hence, we have ${\sigma}\models{\mathweakestpre{\cOR{C_{0}}{C_{1}}}{Q}}$.

 In the similar way, we have ${\sigma}\models{\mathweakestpre{\cOR{C_{0}}{C_{1}}}{Q}}$
 if ${\sigma}\models{\mathweakestpre{C_{1}}{Q}}$ holds.

 \noindent The `only if' part:
 Assume ${\sigma}\models{\mathweakestpre{\cOR{C_{0}}{C_{1}}}{Q}}$.
 Then, there exists a state $\sigma'$ such that
 $\config{\cOR{C_{0}}{C_{1}}}{\sigma} \evaluation^{*} \config{\mathemptyword}{\sigma'}$ and
 ${{\sigma'} \models {Q}}$ hold.
 We see either 
 $\config{\cOR{C_{0}}{C_{1}}}{\sigma} \evaluation \config{C_{0}}{\sigma} \evaluation^{*} \config{\mathemptyword}{\sigma'}$
 or 
 $\config{\cOR{C_{0}}{C_{1}}}{\sigma} \evaluation \config{C_{1}}{\sigma} \evaluation^{*} \config{\mathemptyword}{\sigma'}$ 
 holds.
 Hence, we have either ${\sigma}\models{\mathweakestpre{C_{0}}{Q}}$ or 
 ${\sigma}\models{\mathweakestpre{C_{1}}{Q}}$.
 Thus, ${\sigma}\models{{\mathweakestpre{C_{0}}{Q}}\lor{\mathweakestpre{C_{1}}{Q}}}$ holds.

\end{proof}

Now, we show \cref{thm:rev_hoare_complete}.

\begin{proof}[Proof of \cref{thm:rev_hoare_complete}]
 Assume that $\ihtriple{P}{C}{Q}$ is valid.
 We show $\ihtriple{P}{C}{Q}$ is provable in \PIL.
 The proof is by induction on construction of $C$.

 \proofcase{(${C}\equiv{\mathemptyword}$)} 
 Since $\ihtriple{P}{C}{Q}$ is valid,
 \cref{lem:weakestness-assertion} and 
 \cref{lem:weakest-pre-predicate} \cref{item:emptyword-lem-weakest-post-predicate} 
 imply that ${Q}\models{P}$.
 We have a proof of $\ihtriple{P}{C}{Q}$ as follows:
 \begin{center}
  \begin{inlineprooftree}
   \AxiomC{}
   \RightLabel{\axiomILrule}
   \UnaryInfC{$\ihtriple{Q}{\varepsilon}{Q}$}
   \LeftLabel{(${Q}\models{P}$)}
   \RightLabel{\consILrule}
   \UnaryInfC{$\ihtriple{P}{C}{Q}$}
  \end{inlineprooftree}.
 \end{center}

 \proofcase{(${C}\equiv{\assign{x}{E}}$)} 
 Since $\ihtriple{P}{C}{Q}$ is valid,
 \cref{lem:weakestness-assertion} and 
 \cref{lem:weakest-pre-predicate} \cref{item:assignment-lem-weakgest-post-predicate}
 imply that ${\mathsubstbox{Q}{\mathsubst{x}{E}}}\models{P}$.
 We have a proof of $\ihtriple{P}{C}{Q}$ as follows:
 \begin{center}
  \begin{inlineprooftree}
  \AxiomC{\phantom{$\ihtriple{\mathsubstbox{Q}{\mathsubst{x}{E}}}{\assign{x}{E}}{Q}$}}
  \RightLabel{\assignPILrule}
  \UnaryInfC{$\ihtriple{\mathsubstbox{Q}{\mathsubst{x}{E}}}{\assign{x}{E}}{Q}$}
  \RightLabel{\consILrule}
  \UnaryInfC{$\ihtriple{P}{C}{Q}$}
  \end{inlineprooftree}.
 \end{center}

 \proofcase{(${C}\equiv{C_{0};C_{1}}$)} 
 Since $\ihtriple{P}{C}{Q}$ is valid,
 \cref{lem:weakestness-assertion} and
 \cref{lem:weakest-pre-predicate} \cref{item:cons-lem-strongest-post-predicate}
 imply that ${\mathweakestpre{C_{0}}{\mathweakestpre{C_{1}}{Q}}}\models{P}$.
 By \cref{lem:thm_rev_hoare_complete}, \linebreak[3]
 $\ihtriple{\mathweakestpre{C_{1}}{Q}}{C_{1}}{Q}$ is valid. \linebreak[3]
 By induction hypothesis, \linebreak[3]
 $\ihtriple{\mathweakestpre{C_{1}}{Q}}{C_{1}}{Q}$ is provable. \linebreak[3]
 By \cref{lem:thm_rev_hoare_complete}, \linebreak[3]
 $\ihtriple{\mathweakestpre{C_{0}}{\mathweakestpre{C_{1}}{Q}}}{C_{0}}{\mathweakestpre{C_{1}}{Q}}$ is valid. \linebreak[3]
 By \linebreak[3] induction \linebreak[3] hypothesis, \linebreak[3]
 $\ihtriple{\mathweakestpre{C_{0}}{\mathweakestpre{C_{1}}{Q}}}{C_{0}}{\mathweakestpre{C_{1}}{Q}}$ is provable.

 Let $\pi_{0}$ be a proof of 
 $\ihtriple{\mathweakestpre{C_{0}}{\mathweakestpre{C_{1}}{Q}}}{C_{0}}{\mathweakestpre{C_{1}}{Q}}$.
 Let $\pi_{1}$ be a proof of $\ihtriple{\mathweakestpre{C_{1}}{Q}}{C_{1}}{Q}$.
 Then, we have a proof of $\ihtriple{P}{C}{Q}$ as follows:
 \begin{center}
  \begin{inlineprooftree}
   \AxiomC{}
   \RightLabel{${\pi_{0}}$}
   \DeduceC{$\ihtriple{\mathweakestpre{C_{0}}{\mathweakestpre{C_{1}}{Q}}}{C_{0}}{\mathweakestpre{C_{1}}{Q}}$}

   \AxiomC{}
   \RightLabel{${\pi_{1}}$}
   \DeduceC{$\ihtriple{\mathweakestpre{C_{1}}{Q}}{C_{1}}{Q}$}
   \RightLabel{\seqILrule}
   \BinaryInfC{$\ihtriple{\mathweakestpre{C_{0}}{\mathweakestpre{C_{1}}{Q}}}{{C_{0}};{C_{1}}}{Q}$}
   \LeftLabel{(${\mathweakestpre{C_{0}}{\mathweakestpre{C_{1}}{Q}}}\models{P}$)}
   \RightLabel{\consILrule}
   \UnaryInfC{$\ihtriple{P}{C}{Q}$}
  \end{inlineprooftree}.
 \end{center}

 \proofcase{(${C}\equiv{\cOR{C_{0}}{C_{1}}}$)} 
 Since $\ihtriple{P}{C}{Q}$ is valid,
 \cref{lem:weakestness-assertion} and
 \cref{lem:weakest-pre-predicate} \cref{item:cor-lem-strongest-post-predicate}
 imply that ${{\mathweakestpre{C_{0}}{Q}}\lor{\mathweakestpre{C_{1}}{Q}}}\models{P}$.
 By \cref{lem:thm_rev_hoare_complete}, \linebreak[3]
 $\ihtriple{\mathweakestpre{C_{i}}{Q}}{C_{i}}{Q}$ is valid for ${i}={0, 1}$. \linebreak[3]
 By induction hypothesis, \linebreak[3]
 $\ihtriple{\mathweakestpre{C_{i}}{Q}}{C_{i}}{Q}$ is provable for ${i}={0, 1}$. 
 \linebreak[3]
 Let $\pi_{i}$ be a proof of $\ihtriple{\mathweakestpre{C_{i}}{Q}}{C_{i}}{Q}$ 
 for ${i}={0, 1}$. 
 Then, we have a proof of $\ihtriple{P}{C}{Q}$ as follows:
 \begin{center}
  \begin{inlineprooftree}
   \AxiomC{}
   \RightLabel{${\pi_{0}}$}
   \DeduceC{$\ihtriple{\mathweakestpre{C_{0}}{Q}}{C_{0}}{Q}$}
   \RightLabel{\consILrule}
   \UnaryInfC{$\ihtriple{{\mathweakestpre{C_{0}}{Q}}\lor{\mathweakestpre{C_{1}}{Q}}}{C_{0}}{Q}$}
   
   \AxiomC{}
   \RightLabel{${\pi_{1}}$}
   \DeduceC{$\ihtriple{\mathweakestpre{C_{1}}{Q}}{C_{1}}{Q}$}
   \RightLabel{\consILrule}
   \UnaryInfC{$\ihtriple{{\mathweakestpre{C_{0}}{Q}}\lor{\mathweakestpre{C_{1}}{Q}}}{C_{1}}{Q}$}
   \RightLabel{\orPILrule}
   \BinaryInfC{$\ihtriple{{\mathweakestpre{C_{0}}{Q}}\lor{\mathweakestpre{C_{1}}{Q}}}{\cOR{C_{0}}{C_{1}}}{Q}$}
   \LeftLabel{(${{\mathweakestpre{C_{0}}{Q}}\lor{\mathweakestpre{C_{1}}{Q}}}\models{P}$)}
   \RightLabel{\consILrule}
   \UnaryInfC{$\ihtriple{P}{C}{Q}$}
  \end{inlineprooftree}.
 \end{center}

 \proofcase{(${C}\equiv{\while{B}{C_{0}}}$)}
 To show that $\ihtriple{P}{C}{Q}$ is provable in \PIL,
 we show that \linebreak[3] 
 $\ihtriple{{B}\to{\mathweakestpre{C}{Q}}}{C_{0}}{\mathweakestpre{C}{Q}}$ 
 is valid, and ${Q}\models {{\lnot B} \to {\mathweakestpre{C}{Q}}}$ holds. 

 
 We show that 
 $\ihtriple{{B}\to{\mathweakestpre{C}{Q}}}{C_{0}}{\mathweakestpre{C}{Q}}$ is valid.
 Fix a state $\sigma'$ with ${\sigma'}\models{\mathweakestpre{C}{Q}}$.
 Fix a state $\sigma$ with 
 ${\config{C_{0}}{\sigma}} \evaluation^{*} {\config{\mathemptyword}{\sigma'}}$.
 We show ${\sigma}\models{{B}\to{\mathweakestpre{C}{Q}}}$. 
 Assume ${\sigma}\models{B}$.
 Then, we have
 \[
 {\config{\while{B}{C_{0}}}{\sigma}}\evaluation{\config{{C_{0}};{\while{B}{C_{0}}}}{\sigma}}.
 \]
 By ${\config{C_{0}}{\sigma}} \evaluation^{*} {\config{\mathemptyword}{\sigma'}}$, we have 
\[
 {\config{{C_{0}};{\while{B}{C_{0}}}}{\sigma}} \evaluation^{*} {\config{\while{B}{C_{0}}}{\sigma'}}.
\]
 Because of ${\sigma'}\models{\mathweakestpre{C}{Q}}$,
 there exists $\sigma''$ such that
 ${\config{C}{\sigma'}} \evaluation^{*} {\config{\mathemptyword}{\sigma''}}$
 and ${\sigma''}\models{Q}$ hold.
 Then, we have
 ${\config{C}{\sigma}} \evaluation^{*} {\config{C}{\sigma'}} \evaluation^{*} {\config{\mathemptyword}{\sigma''}}$
 and ${\sigma''}\models{Q}$.
 Hence, 
 we have
 ${\sigma}\models{\mathweakestpre{C}{Q}}$.
 Therefore, ${\sigma}\models{{B}\to{\mathweakestpre{C}{Q}}}$ holds. 
 Thus, $\ihtriple{{B}\to{\mathweakestpre{C}{Q}}}{C_{0}}{\mathweakestpre{C}{Q}}$ is valid.

 We show ${Q}\models {{\lnot B} \to {\mathweakestpre{C}{Q}}}$.
 Fix a state $\sigma$ with ${\sigma}\models{Q}$.
 Assume ${\sigma}\models{\lnot B}$.
 Then, we have
 \[
 {\config{\while{B}{C_{0}}}{\sigma}}\evaluation{\config{\mathemptyword}{\sigma}}.
 \]
 Hence, 
 we have ${\sigma}\models{\mathweakestpre{C}{Q}}$.
 Thus, ${Q}\models {{\lnot B} \to {\mathweakestpre{C}{Q}}}$.

 Because $\ihtriple{P}{C}{Q}$ is valid, 
 \cref{lem:weakestness-assertion} implies ${\mathweakestpre{C}{Q}}\models {P}$.
 Since $\ihtriple{{B}\to{\mathweakestpre{C}{Q}}}{C_{0}}{\mathweakestpre{C}{Q}}$ is valid,
 induction hypothesis implies that
 $\ihtriple{{B}\to{\mathweakestpre{C}{Q}}}{C_{0}}{\mathweakestpre{C}{Q}}$ is provable.
 Let $\pi$ be a proof of $\ihtriple{{B}\to{\mathweakestpre{C}{Q}}}{C_{0}}{\mathweakestpre{C}{Q}}$.
 Then, we have a proof of $\ihtriple{P}{C}{Q}$ as follows:
 \begin{center}
  \begin{inlineprooftree}
   \AxiomC{}
   \RightLabel{${\pi}$}
   \DeduceC{$\ihtriple{{B}\to{\mathweakestpre{C}{Q}}}{C_{0}}{\mathweakestpre{C}{Q}}$}

  \RightLabel{\whilePILrule}
  \UnaryInfC{$\ihtriple{\mathweakestpre{C}{Q}}{\while{B}{C_{0}}}{{\lnot B}\to{\mathweakestpre{C}{Q}}}$}
   \RightLabel{\consILrule}
   \UnaryInfC{$\ihtriple{P}{C}{Q}$}
  \end{inlineprooftree}.
 \end{center}
\end{proof}
\section{Cyclic proofs for partial incorrectness logic (partial reverse Hoare logic)}
\label{sec:cyclic-proofs-PRHL}
This section introduces cyclic proofs for partial reverse Hoare logic.

In our ordinary proof system, given in \cref{sec:partial-reverse-Hoare-logic},
we have to find a good loop invariant when \whilePILrule\ is applied.
However, it is challenged to find a suitable loop invariant \cite{Furia2014}.
In contrast, our cyclic proofs do not have to find any loop invariants 
when the rule for the while loop is applied.
This point is an advantage of cyclic proofs from the view of proof search.

\cref{fig:rules-for-cyclic-proofs-of-par-incor} shows 
the inference rules for cyclic proofs.
To contain cycles, the form of rules are changed from that of the ordinary proof system.
We note that $P$ and $Q$ in the rule for the while loop \whileCPILrule\ can be arbitrary.
In other words, we do not have to find any loop invariants when \whileCPILrule\ is applied.

\begin{figure*}[t]
 \centering
 \begin{inlineprooftree}
  \AxiomC{}
  \RightLabel{\axiomILrule}
  \UnaryInfC{$\ihtriple{Q}{\mathemptyword}{Q}$}
 \end{inlineprooftree}
 \begin{inlineprooftree}
  \AxiomC{$\ihtriple{P'}{C}{Q'}$}
  \LeftLabel{(${P'}\models{P}$, ${Q}\models{Q'}$)}
  \RightLabel{\consILrule}
  \UnaryInfC{$\ihtriple{P}{C}{Q}$}
 \end{inlineprooftree}
 \begin{inlineprooftree}
  \AxiomC{$\ihtriple{{{x'}={\mathsubstbox{E}{\mathsubst{x}{x'}}}}\land{\mathsubstbox{P}{\mathsubst{x}{x'}}}}{C}{Q}$}
  \RightLabel{\assignCPILrule}
  \UnaryInfC{$\ihtriple{P}{{\assign{x}{E}};{C}}{Q}$}
 \end{inlineprooftree}
 \begin{inlineprooftree}
  \AxiomC{$\ihtriple{P}{C_{0};C}{Q}$}
  \AxiomC{$\ihtriple{P}{C_{1};C}{Q}$}
  \RightLabel{\orCPILrule}
  \BinaryInfC{$\ihtriple{P}{\cOR{C_{0}}{C_{1}};C}{Q}$}
 \end{inlineprooftree}
 \begin{inlineprooftree}
  \AxiomC{$\ihtriple{{\lnot B}\to{P}}{C'}{Q}$}
  \AxiomC{$\ihtriple{{B}\to{P}}{C;{\while{B}{C}};{C'}}{Q}$}
  \RightLabel{\whileCPILrule}
  \BinaryInfC{$\ihtriple{P}{{\while{B}{C}};{C'}}{Q}$}
 \end{inlineprooftree}
 
 \caption{Rules for cyclic proofs of partial incorrectness logic (partial reverse Hoare logic)}
 \label{fig:rules-for-cyclic-proofs-of-par-incor}
\end{figure*}

\begin{definition}[Cyclic proofs for reverse Hoare logic  (\CPIL -proof)]
 \label[definition]{definition:CPIL_proof}
 A \emph{leaf} of a derivation tree constructed 
 according to the proof rules in \cref{fig:rules-for-cyclic-proofs-of-par-incor}
 is said to be \emph{open} 
 if it is not the conclusion of \axiomILrule.
 A \emph{companion} of a leaf in a derivation tree is defined as
 an inner node of the derivation tree labelled by the same triple as the leaf label.

 A \emph{\CPIL -pre-proof} is defined as a pair $\mathcal{P} = (\mathcal{D,L})$, 
 where $\mathcal{D}$ is a finite derivation tree constructed 
 according to the proof rules in \cref{fig:rules-for-cyclic-proofs-of-par-incor} and 
 $\mathcal{L}$ is a \emph{back-link function} that maps each open leaf of $\mathcal{D}$
 to one of its companions.

 A \CPIL -pre-proof $\mathprooffig{P}$ is called a \emph{\CPIL -proof} 
 if it satisfies the following \emph{global soundness condition}: 
 the rules except for \consILrule\ are applied infinitely many often 
 along each infinite path in $\mathprooffig{P}$.
 If there is a \CPIL -proof whose root is labelled by $\ihtriple{P}{C}{Q}$,
 we say that $\ihtriple{P}{C}{Q}$ is \emph{provable in \CPIL}.
\end{definition}

We note that some \CPIL -pre-proofs are not finite trees
because cycles are allowed in cyclic proofs.
However, each \CPIL -pre-proof can be understood 
as a \emph{regular} (possibly infinite) tree
whose subtrees are finitely many.

\begin{example}
The following is a \CPIL -proof:
\begin{center} 
 \begin{inlineprooftree}
  \AxiomC{}
  \RightLabel{\axiomILrule}
  \UnaryInfC{$\ihtriple{Q}{\mathemptyword}{Q}$}
  \RightLabel{\assignCPILrule}
  \UnaryInfC{$\ihtriple{{x=10}\land {i=4}}{{\assign{i}{i+1}}}{Q}$}
  \RightLabel{\assignCPILrule}
  \UnaryInfC{$\ihtriple{{x=6}\land {i=4}}{C_{0}}{Q}$}
  \RightLabel{\consILrule}
  \UnaryInfC{$\ihtriple{{\lnot (i<5)}\to {P}}{C_{0}}{Q}$}

  \AxiomC{$\ihtriple{P}{C}{Q}$ \raisebox{1ex}{\tikzmark{bud}}}
  \RightLabel{\consILrule}
  \UnaryInfC{$\ihtriple{{i<6}\to{{x+1=i}\land{i=1}}}{C}{Q}$}
  \RightLabel{\assignCPILrule}
  \UnaryInfC{$\ihtriple{{i<5}\to{{x=i}\land{i=0}}}{{\assign{i}{i+1}};C}{Q}$}
  \RightLabel{\assignCPILrule \tikzmark{mid}}
  \UnaryInfC{$\ihtriple{{i<5}\to{P}}{C_{0};C}{Q}$}

  \RightLabel{\whileCPILrule}
  \BinaryInfC{$\ihtriple{P}{C}{Q}$ \tikzmark{comp}}
 \end{inlineprooftree},
\end{center}
 \begin{tikzpicture}[remember picture, overlay, thick, relative, auto, rounded corners, line width=1.0pt]
    \coordinate (b) at ({pic cs:bud});
    \coordinate (root) at ({pic cs:comp});
    \coordinate (M) at ({pic cs:mid});
    
    \draw[->] let \p1=($(M)-(b)$), \p2=($(root)-(b)$) in
    (b) -- ++(\x1 + .25em,0) -- ++(0,\y2) -- (root);
 \end{tikzpicture}
 where
 ${C}\equiv{\while{i<5}{C_{0}}}$,
 ${C_{0}}\equiv{{\assign{x}{x+i}};{\assign{i}{i+1}}}$,
 ${P}\equiv{{x=0}\land{i=0}}$ and
 ${Q}\equiv{{x=10}\land {i=5}}$, and 
 the arrow indicates the pairing of the companion with the bud.
 We see that the global soundness condition holds, immediately.
 When we apply \whileCPILrule\ in the root,
 we do not find any loop invariant.
\end{example}

Now, we show the soundness of cyclic proofs.
To show the soundness, we show a lemma.
\begin{lemma}
 \label[lemma]{lem:local_soundness_CPIL}
 Each of the proof rules in \cref{fig:rules-for-cyclic-proofs-of-par-incor} 
 has the following property: 
 Suppose the conclusion of the rule $\ihtriple{P}{C}{Q}$ is not valid, 
 so that in particular
 there exist a natural number $n$ and states $\sigma$, $\sigma'$ such that
 $\sigma' \models Q$, $\config{C}{\sigma} \evaluation^{n} \config{\mathemptyword}{\sigma'}$, and
 ${\sigma} \not\models {P}$ hold. 
 Then, for some premise of the rule $\ihtriple{P'}{C'}{Q'}$,
 $\sigma' \models Q'$ holds and
 there exist a natural number $n'$ and a state $\sigma''$ such that $n' \leq n$,
 $\sigma'' \not\models P'$, and
 $\config{C'}{\sigma''} \evaluation^{n'} \config{\mathemptyword}{\sigma'}$ hold. 
 Moreover, for all rules except \consILrule, we have $n' < n$.
\end{lemma}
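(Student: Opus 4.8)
The plan is to prove \cref{lem:local_soundness_CPIL} by a case analysis on which of the five rules of \cref{fig:rules-for-cyclic-proofs-of-par-incor} produced the conclusion $\ihtriple{P}{C}{Q}$. In every case I am handed $n$, $\sigma$, $\sigma'$ with $\sigma'\models Q$, $\config{C}{\sigma}\evaluation^{n}\config{\mathemptyword}{\sigma'}$ and $\sigma\not\models P$, and I must name a premise $\ihtriple{P'}{C'}{Q'}$ and produce $n'\le n$ and $\sigma''$ with $\sigma'\models Q'$, $\sigma''\not\models P'$ and $\config{C'}{\sigma''}\evaluation^{n'}\config{\mathemptyword}{\sigma'}$, with $n'<n$ except for \consILrule.

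First the two trivial cases. For \axiomILrule\ the hypothesis is vacuously false: $\ihtriple{Q}{\mathemptyword}{Q}$ is valid, since the only execution of $\mathemptyword$ from $\sigma'$ is the empty one, so any $\sigma$ with $\config{\mathemptyword}{\sigma}\evaluation^{*}\config{\mathemptyword}{\sigma'}$ equals $\sigma'$ and thus satisfies $Q$. For \consILrule, the premise $\ihtriple{P'}{C}{Q'}$ has the same program $C$, so I take $\sigma'':=\sigma$ and $n':=n$; then $\sigma'\models Q'$ follows from $Q\models Q'$, and $\sigma\not\models P'$ follows by contraposition of $P'\models P$. This is precisely the rule for which the length is not forced to strictly decrease, matching the final clause of the lemma.

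For \orCPILrule\ and \whileCPILrule\ I would peel off the first step of the given execution. The conclusion programs $\cOR{C_{0}}{C_{1}};C$ and $\while{B}{C};C'$ are not $\mathemptyword$, so $n\ge 1$, and \cref{fig:prog_sem} pins down the first step. For \orCPILrule, $\config{\cOR{C_{0}}{C_{1}};C}{\sigma}\evaluation\config{C_{i};C}{\sigma}$ for some $i\in\{0,1\}$, so the $i$-th premise $\ihtriple{P}{C_{i};C}{Q}$ works with $\sigma'':=\sigma$, $n':=n-1<n$, the residual execution being exactly $\config{C_{i};C}{\sigma}\evaluation^{n-1}\config{\mathemptyword}{\sigma'}$. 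For \whileCPILrule\ I split on $\sem{B}\sigma$: if $\sem{B}\sigma=\false$ then $\config{\while{B}{C};C'}{\sigma}\evaluation\config{C'}{\sigma}$, and since $\sigma\models\lnot B$ while $\sigma\not\models P$ we get $\sigma\not\models(\lnot B\to P)$, so the first premise works with $\sigma'':=\sigma$, $n':=n-1$; if $\sem{B}\sigma=\true$ then $\config{\while{B}{C};C'}{\sigma}\evaluation\config{C;\while{B}{C};C'}{\sigma}$, and $\sigma\models B$ with $\sigma\not\models P$ gives $\sigma\not\models(B\to P)$, so the second premise works. In all of these $\sigma'\models Q$ is untouched.

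The only case that needs real care is \assignCPILrule. The first step is $\config{\assign{x}{E};C}{\sigma}\evaluation\config{C}{\mathsubstbox{\sigma}{\mathsubst{x}{\sem{E}\sigma}}}$, so I set $\sigma'':=\mathsubstbox{\sigma}{\mathsubst{x}{\sem{E}\sigma}}$ and $n':=n-1<n$; then $\config{C}{\sigma''}\evaluation^{n-1}\config{\mathemptyword}{\sigma'}$ and $\sigma'\models Q$ are immediate, and the whole case reduces to showing $\sigma''\not\models P'$, where $P'$ is the precondition of the premise obtained from $P$ by the variable substitution attached to \assignCPILrule. Here the Substitution Lemma (\cref{lem:assert_subst}) is the workhorse: it lets me push that substitution through the state update $\mathsubst{x}{\sem{E}\sigma}$ and so derive $\sigma''\not\models P'$ directly from the hypothesis $\sigma\not\models P$, using as needed that the auxiliary variable of the rule occurs in none of $P$, $Q$, $C$, $E$ and may therefore be taken to record the old value of $x$. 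I expect this matching of the assignment step against the substitution in the premise to be the main obstacle; once it is set up cleanly, the lemma follows, and the remaining cases are bookkeeping over the small-step semantics.
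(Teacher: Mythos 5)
Your \axiomILrule, \consILrule\ and \whileCPILrule\ cases coincide with the paper's proof (propagate the counterexample by peeling off the first small step; only \consILrule\ keeps $n'=n$), and you even treat \orCPILrule, which the paper's case analysis silently omits. The genuine problem is exactly where you yourself locate ``the real care'': \assignCPILrule. Under the fresh-variable reading of the rule that you adopt, the premise precondition is ${x'}={\mathsubstbox{E}{\mathsubst{x}{x'}}}\land\mathsubstbox{P}{\mathsubst{x}{x'}}$, and your witness state is forced: since the residual execution must end in the \emph{given} $\sigma'$, you must take $\sigma''\equiv\mathsubstbox{\sigma}{\mathsubst{x}{\sem{E}\sigma}}$, whose value at $x'$ is $\sigma(x')=\sigma'(x')$ --- a datum of the given counterexample, not something you ``may take to record the old value of $x$''. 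With that forced value the claim can simply fail: take $P\equiv(x=0)$, $E\equiv 0$, continuation $\mathemptyword$, $Q\equiv(x=0)$, and a counterexample with $\sigma(x)=5$, $\sigma'\equiv\mathsubstbox{\sigma}{\mathsubst{x}{0}}$ and $\sigma(x')=0$; the premise program is $\mathemptyword$, so the only admissible $\sigma''$ is $\sigma'$ itself, and $\sigma'$ satisfies ${x'}={0}\land{x'}={0}$. Hence no premise counterexample exists at all, and freshness plus \cref{lem:assert_subst} cannot close the case as you sketch it.

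The paper's proof takes the only workable reading of the assignment rule --- the same one it uses in the completeness argument (\cref{lem:CPIL_cyclic_proof}): premise $\ihtriple{P'}{C'}{Q}$, conclusion $\ihtriple{\mathsubstbox{P'}{\mathsubst{x}{E}}}{{\assign{x}{E}};{C'}}{Q}$, with no auxiliary variable. Then, exactly as in your setup, $\sigma''\equiv\mathsubstbox{\sigma}{\mathsubst{x}{\sem{E}\sigma}}$ and $n'=n-1$, and \cref{lem:assert_subst} yields $\sigma''\not\models P'$ directly from $\sigma\not\models\mathsubstbox{P'}{\mathsubst{x}{E}}$. So the repair is not a cleverer use of freshness but switching to the backward-substitution form of \assignCPILrule\ (the form in which the rule is actually applied throughout the paper); once you do that, your assignment case collapses to a single application of the substitution lemma and the rest of your argument stands.
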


\begin{proof}
 We show the statement for each rule. 
 
 \proofcase{\axiomILrule} 
 Since the conclusion of the rule \axiomILrule\ is always valid,
 the assumption does not hold in this case.

 \proofcase{\assignCPILrule}
 Let ${P}\equiv{\mathsubstbox{P'}{\mathsubst{x}{E}}}$ and
 ${C}\equiv{{\assign{x}{E}};{C'}}$.
 Assume that there exists a natural number $n$ and states $\sigma$, $\sigma'$ such that
 $\sigma' \models Q$, $\config{C}{\sigma} \evaluation^{n} \config{\mathemptyword}{\sigma'}$, and
 ${\sigma} \not\models {\mathsubstbox{P'}{\mathsubst{x}{E}}}$ hold.

 We show that
 there exist a natural number $n'$ and a state $\sigma''$ such that $n' < n$,
 $\sigma'' \not\models P'$, and
 $\config{C'}{\sigma''} \evaluation^{n'} \config{\mathemptyword}{\sigma'}$ hold. 

 Let ${\sigma''}\equiv{\mathsubstbox{\sigma}{\mathsubst{x}{\sem{E}}\sigma}}$. 
 Then, we have
 \[
 \config{C}{\sigma} \evaluation \config{C'}{\sigma''} \evaluation^{n-1} \config{\mathemptyword}{\sigma'}.
 \]
 Hence, $n'$ is $n-1$.
 Because of ${\sigma} \not\models {\mathsubstbox{P'}{\mathsubst{x}{E}}}$,
 we have ${\mathsubstbox{\sigma}{\mathsubst{x}{\sem{E}}\sigma}} \not\models {P'}$.
 Hence, $\sigma'' \not\models P'$ holds.

 \proofcase{\consILrule} 
 Assume ${P'}\models{P}$ and ${Q}\models{Q'}$.
 We also assume that 
 there exist a natural number $n$ and states $\sigma$, $\sigma'$ such that
 $\sigma' \models Q$, $\config{C}{\sigma} \evaluation^{n} \config{\mathemptyword}{\sigma'}$, and
 ${\sigma} \not\models {P}$ hold. 

 We show that
 $\sigma' \models Q'$ holds and
 there exists  a state $\sigma''$ such that
 $\sigma'' \not\models P'$ and
 $\config{C}{\sigma''} \evaluation^{n} \config{\mathemptyword}{\sigma'}$ hold. 

 Because of ${Q}\models{Q'}$ and $\sigma' \models Q$,
 we have $\sigma' \models Q'$.

 Let ${\sigma''}\equiv{\sigma}$.
 Then, we have
 \[
 \config{C}{\sigma''} \evaluation^{n} \config{\mathemptyword}{\sigma'}.
 \]
 Because of ${P'}\models{P}$ and ${\sigma''} \not\models {P}$,
 we have $\sigma'' \not\models P'$.

 \proofcase{\whileCPILrule} 
 Let ${C}\equiv{{\while{B}{C_{0}}};{C'}}$.
 Assume that there exist a natural number $n$ and states $\sigma$, $\sigma'$ such that
 $\sigma' \models Q$, $\config{C}{\sigma} \evaluation^{n} \config{\mathemptyword}{\sigma'}$, and
 ${\sigma} \not\models {P}$ hold. 

 We show that there exist a natural number $n'$ and a state $\sigma''$ such that $n' < n$,
 and
 either both $\sigma'' \not\models {{\lnot B}\to{P}}$ and
 $\config{C'}{\sigma''} \evaluation^{n'} \config{\mathemptyword}{\sigma'}$,
 or
 both $\sigma'' \not\models {{B}\to{P}}$ and
 $\config{{C_{0};{\while{B}{C_{0}}};{C'}}}{\sigma''} \evaluation^{n'} \config{\mathemptyword}{\sigma'}$
 hold. 
 Let ${\sigma''}\equiv{\sigma}$.

 Assume ${\sigma}\models{B}$.
 Then, we have
 \[
 \config{C}{\sigma} \evaluation \config{C_{0};{\while{B}{C_{0}}};{C'}}{\sigma''} \evaluation^{n-1} \config{\mathemptyword}{\sigma'}.
 \]
 Hence, $n'$ is $n-1$.
 Since ${\sigma} \not\models {P}$ and ${\sigma''}\models{B}$ holds,
 we have ${\sigma''} \not\models {{B}\to{P}}$.

 Assume ${\sigma}\models{\lnot B}$.
 Then, we have
 \[
 \config{C}{\sigma} \evaluation \config{C'}{\sigma''} \evaluation^{n-1} \config{\mathemptyword}{\sigma'}.
 \]
 Hence, $n'$ is $n-1$.
 Since ${\sigma} \not\models {P}$ and ${\sigma''}\models{\lnot B}$ holds,
 we have ${\sigma''} \not\models {{\lnot B}\to{P}}$.
\end{proof}

We show the soundness theorem.

\begin{theorem}[Soundness]
\label[theorem]{thm:par-incorrect_soundness}
 If there is a \CPIL -proof of $\ihtriple{P}{C}{Q}$, then it is valid.
\end{theorem}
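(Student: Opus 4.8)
The plan is to argue by contradiction, using \cref{lem:local_soundness_CPIL} to trace a counterexample along an infinite path of the proof, and then to invoke the global soundness condition to contradict the well-foundedness of $\mathnat$. Suppose $\mathprooffig{P} = (\mathcal{D}, \mathcal{L})$ is a \CPIL-proof whose root is labelled $\ihtriple{P}{C}{Q}$, but that $\ihtriple{P}{C}{Q}$ is not valid. First I would view $\mathprooffig{P}$ as its unfolding into a regular, possibly infinite tree: whenever an open leaf is reached, it carries the same triple as the companion $\mathcal{L}(\cdot)$ assigned to it, so the path may be continued through that companion.

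Next I would build, by induction, an infinite path $(N_i)_{i \ge 0}$ through this unfolded tree together with counterexample witnesses. At the root $N_0$, non-validity of $\ihtriple{P}{C}{Q}$ gives $n_0 \in \mathnat$ and states $\sigma_0, \sigma'$ with $\sigma' \models Q$, $\config{C}{\sigma_0} \evaluation^{n_0} \config{\mathemptyword}{\sigma'}$ and $\sigma_0 \not\models P$. Given $N_i$ labelled by a non-valid triple $\ihtriple{P_i}{C_i}{Q_i}$ with witnesses $n_i, \sigma_i, \sigma'$: if $N_i$ is an inner node, \cref{lem:local_soundness_CPIL} picks a premise of the rule applied at $N_i$ — take this as $N_{i+1}$ — whose triple is again non-valid, witnessed by some $n_{i+1} \le n_i$ and $\sigma_{i+1}$ with the same final state $\sigma'$, and moreover $n_{i+1} < n_i$ whenever the rule at $N_i$ is not \consILrule; if $N_i$ is an open leaf, its triple equals that of its companion, so the same witnesses serve and we continue through the companion. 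The path never terminates, because a leaf that is a conclusion of \axiomILrule has a valid label (see the \axiomILrule case of \cref{lem:local_soundness_CPIL}), which is incompatible with non-validity; hence every leaf on the path is open.

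Finally I would invoke the global soundness condition: along the infinite path $(N_i)_i$, rules other than \consILrule are applied infinitely often, and each such application strictly decreases the witness, $n_{i+1} < n_i$, while every other step leaves it non-increasing, $n_{i+1} \le n_i$. Thus $(n_i)_{i \ge 0}$ would be an infinite non-increasing sequence of natural numbers that strictly decreases infinitely often — impossible. This contradiction shows $\ihtriple{P}{C}{Q}$ is valid.

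I expect the main obstacle to be bookkeeping rather than any deep idea: making precise the unfolded (regular) tree and checking that following back-links preserves both triple labels and the positions of rule applications, so that the global soundness condition genuinely transfers to the path assembled via \cref{lem:local_soundness_CPIL}, and confirming that this path is truly infinite (which rests on the observation that \axiomILrule-leaves cannot lie on it). Everything after that is a routine infinite-descent argument in $\mathnat$.
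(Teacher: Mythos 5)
Your proposal is correct and follows essentially the same route as the paper: assume non-validity, use \cref{lem:local_soundness_CPIL} to propagate a counterexample (with non-increasing execution length, strictly decreasing at non-\consILrule\ steps) along an infinite path of the unfolded pre-proof, and contradict well-foundedness of $\mathnat$ via the global soundness condition. If anything, you are more explicit than the paper about the bookkeeping it glosses over --- the regular unfolding through back-links and the observation that \axiomILrule-leaves cannot carry a counterexample, so the path is genuinely infinite.
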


\begin{proof}
 Assume, for contradiction, there is a \CPIL -proof of $\ihtriple{P}{C}{Q}$, 
 but it is not valid.
 Then, there exist a natural number $n$ and states $\sigma$, $\sigma'$ such that
 $\sigma' \models Q$, $\config{C}{\sigma} \evaluation^{n} \config{\mathemptyword}{\sigma'}$, and
 ${\sigma} \not\models {P}$ hold. 

 We inductively define an infinite path 
 $\mathsequence{\ihtriple{P_{i}}{C_{i}}{Q_{i}}}{{i}\geq {0}}$ 
 in the \CPIL -proof of $\ihtriple{P}{C}{Q}$ and
 an infinitely non-increasing sequence of natural numbers $\mathsequence{n_{i}}{i\geq 0}$
 satisfying the following conditions:
 for ${\ihtriple{P_{i}}{C_{i}}{Q_{i}}}$,
 there exist states $\sigma_{i}$ and $\sigma'_{i}$ such that
 ${\sigma_{i}}\models{P_{i}}$, 
 $\config{C_{i}}{\sigma_{i}} \evaluation^{n_{i}} \config{\mathemptyword}{\sigma'_{i}}$, 
 and ${\sigma'_{i}}\not\models {Q_{i}}$.
 
 Define ${\ihtriple{P_{0}}{C_{0}}{Q_{0}}}$ as ${\ihtriple{P}{C}{Q}}$.

 Assume that ${\ihtriple{P_{i-1}}{C_{i-1}}{Q_{i-1}}}$ is defined.
 By the condition, ${\ihtriple{P_{i-1}}{C_{i-1}}{Q_{i-1}}}$ is not valid.
 By \cref{lem:local_soundness_CPIL},
 there exist states $\sigma_{i}$ and $\sigma'_{i}$ such that
 ${\sigma_{i}}\models{P_{i}}$, 
 $\config{C_{i}}{\sigma_{i}} \evaluation^{n_{i}} \config{\mathemptyword}{\sigma'_{i}}$, 
 and ${\sigma'_{i}}\not\models {Q_{i}}$.
 If it is a symbolic execution rule, then $n_{i} < n_{i-1}$ 
 i.e. the length of the computations is monotonously decreasing.

 By the global soundness condition on \CPIL -proofs, 
 every infinite path has rules except for \consILrule\ applied infinitely often. 
 By \cref{lem:local_soundness_CPIL}, 
 $\mathsequence{n_{i}}{i\geq 0}$ is an infinite descending sequence
 of natural numbers.
 This is a contradiction, and we conclude that $\ihtriple{P}{C}{Q}$ is valid after all.
\end{proof}

In the remainder of this section, we show the relative completeness of \CPIL, i.e.
the provability of cyclic proofs is the same as that of our ordinary proof system \PIL.
We show this statement by giving the way to transform each \PIL -proof into a \CPIL -proof.
The formal statement of the completeness is the following.
\begin{theorem}[Relative completeness of \CPIL]
 \label[theorem]{thm:cyclic_par_incor_complete}
 For any partial reverse Hoare triple $\ihtriple{P}{C}{Q}$, 
 the following statements are equivalent:
 \begin{enumerate}
  \item $\ihtriple{P}{C}{Q}$ is valid.  \label{item:validness-thm-cyclic_par_incor_complete}
  \item $\ihtriple{P}{C}{Q}$ is provable in \PIL. \label{item:provable-reverse_Hoare-thm-cyclic_par_incor_complete}
  \item $\ihtriple{P}{C}{Q}$ is provable in \CPIL. \label{item:provable-cyclic-proofs-thm-cyclic_par_incor_complete}
 \end{enumerate}
\end{theorem}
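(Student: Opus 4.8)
The plan is to prove the three-way equivalence by closing a cycle of implications. Two of the needed arrows are already available: validity and \PIL-provability are equivalent by Proposition~\ref{prop:rev_hoare_sound} together with \cref{thm:rev_hoare_complete}, and \CPIL-provability implies validity by \cref{thm:par-incorrect_soundness}. So it remains to supply a single implication \emph{into} \CPIL-provability, and I would prove that every \PIL-proof can be effectively transformed into a \CPIL-proof.

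The obstacle to a direct rule-by-rule translation is a mismatch of shapes: in \cref{fig:rules-for-cyclic-proofs-of-par-incor} the rules \assignCPILrule, \orCPILrule\ and \whileCPILrule\ each consume the leading command of the program and carry the rest as a trailing context, whereas \PIL\ uses the unrestricted composition rule \seqILrule. To bridge this I would prove a \emph{context-extension lemma}: if $\ihtriple{P}{C_{0}}{R}$ is provable in \CPIL, then for every program $C_{1}$ and assertion $Q$ there is a \CPIL-derivation of $\ihtriple{P}{C_{0};C_{1}}{Q}$ all of whose open leaves are labelled $\ihtriple{R}{C_{1}}{Q}$. None of the \CPIL\ rules uses the post-condition in an essential way, so this should follow by replacing every program $C$ occurring in the given derivation by $C;C_{1}$ and every post-condition by $Q$: each \axiomILrule-leaf $\ihtriple{S}{\mathemptyword}{S}$ thereby becomes an open leaf $\ihtriple{S}{C_{1}}{Q}$, every other rule instance stays legitimate (the side condition of \consILrule\ on the post becomes trivial once all posts are equal), and internal back-links are preserved; since the post-condition can only weaken upward along a branch one has $R \models S$, so a single \consILrule\ above each such leaf reshapes it to $\ihtriple{R}{C_{1}}{Q}$.

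With this lemma in hand I would transform a \PIL-proof by induction on its structure. The instances of \axiomILrule, \assignPILrule, \orPILrule\ and \consILrule\ map directly to their \CPIL\ analogues — taking the trailing context empty and, for \assignPILrule, applying \assignCPILrule\ above an \axiomILrule-leaf — so the work is concentrated in \seqILrule\ and \whilePILrule. For \seqILrule\ I would translate both premises, apply the context-extension lemma to the translation of $\ihtriple{P}{C_{0}}{R}$, and graft the translation of $\ihtriple{R}{C_{1}}{Q}$ onto the open leaves so produced. For \whilePILrule\ I would take the translated premise $\ihtriple{{B}\to{P}}{C}{P}$ and apply \whileCPILrule\ with empty trailing context, so that its conclusion $\ihtriple{P}{\while{B}{C}}{{\lnot B}\to{P}}$ becomes a \emph{companion}; its left premise is the \axiomILrule-instance $\ihtriple{{\lnot B}\to{P}}{\mathemptyword}{{\lnot B}\to{P}}$, and its right premise $\ihtriple{{B}\to{P}}{C;{\while{B}{C}}}{{\lnot B}\to{P}}$ is obtained from the context-extension lemma with trailing context $\while{B}{C}$, whose open leaves then all carry the companion's label and are back-linked to it. The global soundness condition holds because every infinite path eventually cycles through this \whileCPILrule-application and hence uses a rule other than \consILrule\ infinitely often.

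The step I expect to be the main obstacle is making the context-extension lemma and the grafting interact correctly with the cyclic structure when $C_{0}$ (or $C_{1}$) already contains loops: the textual rewriting must be checked to send a \CPIL-\emph{pre}-proof to a \CPIL-pre-proof, carrying companions to companions and buds to buds; and when a possibly-cyclic translated \seqILrule-premise is grafted onto several open leaves, one must verify that the global soundness condition is preserved — which it is, because a back-link never leads from a grafted copy back out of it, so every infinite path eventually stays inside a single copy or inside the extended prefix. Everything else — the base cases, the bookkeeping around \consILrule, and the descent argument for the global soundness condition — is routine.
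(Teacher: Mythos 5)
Your proposal is correct, and it reaches the theorem by the same three implications the paper uses (\cref{thm:rev_hoare_complete} for validity to \PIL, \cref{thm:par-incorrect_soundness} for \CPIL\ to validity, and a translation of \PIL-proofs into \CPIL-proofs for the remaining arrow), but the translation is factored differently. The paper's \cref{lem:CPIL_cyclic_proof} quantifies the trailing context $C'$ and the postcondition $R$ inside the induction on the \PIL-proof, so it only ever constructs \CPIL-proofs \emph{with open leaves} and never has to modify an already-built cyclic proof: the \axiomILrule\ case is literally the open leaf $\ihtriple{Q}{C'}{R}$, \seqILrule\ is handled by instantiating the induction hypothesis at the context $C_{1};C'$, and \whilePILrule\ instantiates it at the context $\while{B}{C_{0}};C'$ and back-links the resulting open leaves to the \whileCPILrule\ conclusion. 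You instead translate each \PIL-subproof into a \emph{closed} \CPIL-proof and recover the threading through a separate context-extension lemma that appends $;C_{1}$ to every program, overwrites every postcondition, and repairs former \axiomILrule-leaves by one \consILrule\ (correctly using that the root postcondition entails every leaf postcondition, since posts only weaken upward through \consILrule). That lemma does hold: every rule of \cref{fig:rules-for-cyclic-proofs-of-par-incor} is parametric in the trailing program and the postcondition, back-links are preserved by the rewriting, and your observation that back-links never escape a grafted copy settles the global soundness condition after grafting. So your route works; its cost is precisely this extra verification that a syntactic rewriting of an existing cyclic proof is again a pre-proof satisfying the global condition, which the paper's strengthened induction hypothesis sidesteps, while its benefit is a reusable weakening-by-continuation principle for \CPIL\ and a main induction whose remaining cases are close to a rule-by-rule translation. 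One detail to make explicit if you write this up: for the \assignPILrule\ base case you should apply \assignCPILrule\ in the form the paper itself uses inside \cref{lem:CPIL_cyclic_proof} (premise $\ihtriple{Q}{C'}{R}$, conclusion $\ihtriple{\mathsubstbox{Q}{\mathsubst{x}{E}}}{\assign{x}{E};C'}{R}$), because with the primed-variable formulation of \cref{fig:rules-for-cyclic-proofs-of-par-incor} taken literally, the premise obtained with empty trailing context is not an \axiomILrule-instance and is not recovered from one by \consILrule\ alone.
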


To show \cref{thm:cyclic_par_incor_complete},
we define some concepts and show a lemma.

\begin{definition}[\CPIL proof with open leaves]
 A \emph{\CPIL -pre-proof with open leaves} is a pair $\mathcal{P} = (\mathcal{D,L})$, 
 where $\mathcal{D}$ is a finite derivation tree constructed according to the proof rules 
 in \cref{fig:rules-for-cyclic-proofs-of-par-incor}
 and $\mathcal{L}$ is a \emph{back-link partial function} 
 assigning to some open leaf of $\mathcal{D}$ a companion.
 For a \CPIL -pre-proof with open leaves $\mathprooffig{P} = (\mathprooffig{D}, L)$, 
 we call an open leaf which is not in the domain of $L$ a \emph{proper open leaf}.

 We define a \emph{\CPIL -proof with open leaves} as a pre-proof with open leaves
 satisfying the following \emph{global soundness condition}: 
 the rules except for \consILrule\ applied infinitely often 
 along every infinite path in the pre-proof with open leaves.
\end{definition}

We note that a \CPIL -proof with open leaves, where there is no proper open leaf, 
is a \CPIL -proof. 

\begin{lemma}
 \label[lemma]{lem:CPIL_cyclic_proof}
 If $\ihtriple{P}{C}{Q}$ is provable in \PIL\ then, 
 for any program $C'$ and assertion $R$, 
 there is a \CPIL -proof with open leaves of $\ihtriple{P}{C;C'}{R}$ such that 
 every proper open leaf is assigned $\ihtriple{Q}{C'}{R}$.
\end{lemma}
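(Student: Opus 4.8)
The plan is to prove the lemma by induction on the structure of the \PIL -proof $\pi$ of $\ihtriple{P}{C}{Q}$, keeping the continuation $C'$ and the postcondition $R$ universally quantified throughout. In every case I would build the required \CPIL -proof with open leaves by applying one \CPIL -rule at the root --- all the program-shrinking \CPIL -rules peel the \emph{head} of $C;C'$ --- and then invoking the induction hypothesis on the immediate sub-derivation(s) of $\pi$ with suitably enlarged continuations, finally splicing the resulting fragments together.

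The leaf cases are immediate. For \axiomILrule\ we have $C\equiv\mathemptyword$ and $P\equiv Q$, so $\ihtriple{P}{C;C'}{R}$ is literally $\ihtriple{Q}{C'}{R}$ and a single proper open leaf with that label suffices. For \assignPILrule\ we have $C\equiv\assign{x}{E}$ and $P\equiv\mathsubstbox{Q}{\mathsubst{x}{E}}$, and one application of \assignCPILrule\ (followed, if needed, by a \consILrule\ step to normalise the precondition) reduces $\ihtriple{\mathsubstbox{Q}{\mathsubst{x}{E}}}{\assign{x}{E};C'}{R}$ to the single proper open leaf $\ihtriple{Q}{C'}{R}$. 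For \consILrule, deriving $\ihtriple{P'}{C}{Q'}$ from $\ihtriple{P}{C}{Q}$ with $P\models P'$ and $Q'\models Q$, I would take the fragment for $\ihtriple{P}{C;C'}{R}$ given by the induction hypothesis and attach one \consILrule\ below its root (turning $P$ into $P'$) and one \consILrule\ above each of its proper open leaves $\ihtriple{Q}{C'}{R}$ (turning $Q$ into $Q'$); the side conditions hold since $P\models P'$ and $Q'\models Q$. For \orPILrule\ I would apply \orCPILrule\ at the root $\ihtriple{P}{\cOR{C_0}{C_1};C'}{R}$ and feed in the two fragments for $\ihtriple{P}{C_i;C'}{R}$ obtained by the induction hypothesis on $\ihtriple{P}{C_i}{Q}$, $i=0,1$, with continuation $C'$.

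The sequencing case is where the continuation is genuinely threaded. If $C\equiv C_0;C_1$ is derived by \seqILrule\ from $\ihtriple{P}{C_0}{R_0}$ and $\ihtriple{R_0}{C_1}{Q}$, I would first apply the induction hypothesis to the left sub-derivation with continuation $C_1;C'$, obtaining (by associativity of $;$) a fragment of $\ihtriple{P}{(C_0;C_1);C'}{R}$ every proper open leaf of which is $\ihtriple{R_0}{C_1;C'}{R}$; then I would apply the induction hypothesis to the right sub-derivation with continuation $C'$, obtaining a fragment of $\ihtriple{R_0}{C_1;C'}{R}$ with proper open leaves $\ihtriple{Q}{C'}{R}$, and graft a fresh copy of it onto each proper open leaf of the first fragment. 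The back-link function of each fragment maps open leaves only to companions inside the same fragment, and we splice always at proper open leaves (which, being leaves, are never companions); hence every infinite path of the spliced pre-proof eventually stays within a single one of these fragments, so the global soundness condition is inherited. The same observation covers the splicing performed in the \orPILrule\ and \consILrule\ cases.

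The crucial case is \whilePILrule, and it is the only one introducing a true cycle. Here $C\equiv\while{B}{C_0}$ is derived from $\ihtriple{{B}\to{P}}{C_0}{P}$ with conclusion $\ihtriple{P}{\while{B}{C_0}}{{\lnot B}\to{P}}$, so I must produce a fragment of $\ihtriple{P}{\while{B}{C_0};C'}{R}$ whose proper open leaves are all $\ihtriple{{\lnot B}\to{P}}{C'}{R}$. I would apply \whileCPILrule\ at that root: its left premise is $\ihtriple{{\lnot B}\to{P}}{C'}{R}$, which is already a proper open leaf of exactly the required form, so I leave it as such; its right premise is $\ihtriple{{B}\to{P}}{C_0;\while{B}{C_0};C'}{R}$, which I would obtain by applying the induction hypothesis to the sub-derivation of $\ihtriple{{B}\to{P}}{C_0}{P}$ with the enlarged continuation $\while{B}{C_0};C'$. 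By the induction hypothesis, every proper open leaf of the fragment so obtained is labelled $\ihtriple{P}{\while{B}{C_0};C'}{R}$ --- the label of the root itself --- so I would extend the back-link function to send each of them to the root, closing the cycle and leaving $\ihtriple{{\lnot B}\to{P}}{C'}{R}$ as the only kind of proper open leaf. It then remains to check the global soundness condition: any infinite path either eventually remains inside one of the fragments supplied by the induction hypothesis (where it is admissible by that fragment's own global soundness) or traverses the newly added back-links infinitely often, and each traversal of the new cycle passes through the \whileCPILrule\ at the root; so along every infinite path a rule other than \consILrule\ is applied infinitely often. I expect the main obstacle to be precisely this bookkeeping --- making the grafting of fragments precise enough that no back-link is left dangling, keeping track of which open leaves are proper, and arguing that the global soundness condition survives both the ordinary splicing and the single place where a new cycle is introduced.
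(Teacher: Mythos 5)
Your proposal is correct and follows essentially the same route as the paper: induction on the \PIL-proof with the continuation $C'$ and postcondition $R$ kept arbitrary, peeling the head of $C;C'$ with one \CPIL-rule per case, splicing IH fragments at proper open leaves, and in the \whilePILrule\ case back-linking the leaves labelled $\ihtriple{P}{\while{B}{C_{0}};C'}{R}$ to the root created by \whileCPILrule. Your explicit check of the global soundness condition is if anything more careful than the paper's.
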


\begin{proof}
 We assume that $\ihtriple{P}{C}{Q}$ is provable in \PIL.
 We show the statement by induction on the proof of $\ihtriple{P}{C}{Q}$ in \PIL.
 We proceed by a case analysis on the last rule applied in the proof. 

 \proofcase{\axiomILrule} 
 In this case, ${C}\equiv{\mathemptyword}$.
 Then, ${P}\equiv{Q}$ holds and the proof of $\ihtriple{P}{C}{Q}$ is the following:
 \begin{center}
  \begin{inlineprooftree}
   \AxiomC{}
   \RightLabel{\axiomILrule}
   \UnaryInfC{$\htriple{Q}{\mathemptyword}{Q}$}
  \end{inlineprooftree}.
 \end{center}

 Noting that ${P}\equiv{Q}$, for arbitrary $C'$ and $R$,
 we have a \CPIL -proof with open leaves of $\htriple{P}{\mathemptyword ;C'}{R}$ 
 as follows:
 \begin{center}
  \begin{inlineprooftree}
   \AxiomC{$\htriple{Q}{C'}{R}$}
  \end{inlineprooftree}.
 \end{center}
 
 The only proper open leaf in this \CPIL -proof with open leaves is assigned
 $\htriple{Q}{C'}{R}$, as required.
 
 \proofcase{\assignPILrule}
 In this case, ${C}\equiv{\mathsubst{x}{E}}$.
 Then, ${P}\equiv{\mathsubstbox{Q}{\mathsubst{x}{E}}}$
 holds and the proof of $\ihtriple{P}{C}{Q}$ is the following:
 \begin{center}
  \begin{inlineprooftree}
   \AxiomC{\phantom{$\ihtriple{\mathsubstbox{P}{\mathsubst{x}{E}}}{\assign{x}{E}}{P}$}}
   \RightLabel{\assignPILrule}
   \UnaryInfC{$\ihtriple{\mathsubstbox{Q}{\mathsubst{x}{E}}}{\assign{x}{E}}{Q}$}
  \end{inlineprooftree}
 \end{center}

 For arbitrary $C'$ and $R$,
 we have a \CPIL -proof with open leaves of 
 $\htriple{\mathsubstbox{Q}{\mathsubst{x}{E}}}{{\mathsubst{x}{E}};C'}{R}$ 
 as follows:
 \begin{center}
  \begin{inlineprooftree}
   \AxiomC{$\ihtriple{Q}{C'}{R}$}
   \RightLabel{\assignCPILrule}
   \UnaryInfC{$\ihtriple{\mathsubstbox{Q}{\mathsubst{x}{E}}}{{\mathsubst{x}{E}};C'}{R}$}
  \end{inlineprooftree}.
 \end{center}

 The only proper open leaf in this \CPIL -proof with open leaves is assigned
 $\htriple{Q}{C'}{R}$, as required.

 \proofcase{\seqILrule}
 In this case, ${C}\equiv{C_{0};C_{1}}$.
 Then, the proof of $\ihtriple{P}{C}{Q}$ is the following:
 \begin{center}
  \begin{inlineprooftree}
   \AxiomC{}
   \DeduceC{$\ihtriple{P}{C_{0}}{R'}$}
   \AxiomC{}
   \DeduceC{$\ihtriple{R'}{C_{1}}{Q}$}
   \RightLabel{\seqILrule}
   \BinaryInfC{$\ihtriple{P}{C_{0};C_{1}}{Q}$}
  \end{inlineprooftree}.
 \end{center}
 Since $\ihtriple{P}{C}{R'}$ is provable in \PIL,
 induction hypothesis implies that, for any program $C''$ and assertion $R$, 
 there is a \CPIL -proof with open leaves of $\ihtriple{P}{C_{0};C''}{R}$ 
 such every open leaf is assigned $\ihtriple{R'}{C''}{R}$.
 Since $C''$ is arbitrary, 
 there is a \CPIL -proof with open leaves of $\ihtriple{P}{C_{0};C_{1};C'}{R}$
 such that every proper open leaf is assigned $\ihtriple{R'}{C_{1};C'}{R}$ 
 for any program $C'$.~(1)

 Then, since $\ihtriple{R'}{C_{1}}{Q}$ is provable in \PIL, 
 induction hypothesis implies that, for any program $C'$ and assertion $R$, 
 there is a \CPIL -proof with open leaves of $\ihtriple{R'}{C_{1};C'}{R}$ such that 
 every proper open leaf is assigned $\ihtriple{Q}{C'}{R}$.~(2)

 Putting (1) and (2) together gives us the \CPIL -proof with open leaves of 
 $\ihtriple{P}{C_{0}C_{1}C'}{R}$ such that 
 every proper open leaf is assigned $\ihtriple{Q}{C'}{R}$ as follows:
 \begin{center}
  \begin{inlineprooftree}
  \AxiomC{$\ihtriple{Q}{C'}{R}$}
   \RightLabel{(2)}
  \DeduceC{$\ihtriple{R'}{C_{1};C'}{R}$}
  \RightLabel{(1)}
  \DeduceC{$\ihtriple{P}{C_{0};C_{1}}{Q}$}
  \end{inlineprooftree}.
 \end{center}

 \proofcase{\consILrule}
 In this case, the proof of $\ihtriple{P}{C}{Q}$ is the following:
 \begin{center}
  \begin{inlineprooftree}
   \AxiomC{}
   \DeduceC{$\ihtriple{P'}{C}{Q'}$}
   \LeftLabel{(${P'}\models{P}$, ${Q}\models{Q'}$)}
   \RightLabel{\consILrule}
   \UnaryInfC{$\ihtriple{P}{C}{Q}$}
  \end{inlineprooftree}.
 \end{center}
 Since $\ihtriple{P'}{C}{Q'}$ is provable in \PIL,
 induction hypothesis implies that, for any program $C'$ and assertion $R$, 
 there is a \CPIL -proof with open leaves of $\ihtriple{P'}{C;C'}{R}$ 
 such every open leaf is assigned $\ihtriple{Q'}{C'}{R}$.

 Now, for arbitrary $C'$ and $R$,
 we have a \CPIL -proof with open leaves of 
 $\ihtriple{P}{C;C'}{R}$ as follows:
 \begin{center}
  \begin{inlineprooftree}
   \AxiomC{$\ihtriple{Q}{C'}{R}$}
   \LeftLabel{(${Q}\models{Q'}$)}
   \RightLabel{\consILrule}
   \UnaryInfC{$\ihtriple{Q'}{C'}{R}$}
   \RightLabel{(IH)}
   \DeduceC{$\ihtriple{P'}{C;C'}{R}$}
   \LeftLabel{(${P'}\models{P}$)}
   \RightLabel{\consILrule}
   \UnaryInfC{$\ihtriple{P}{C;C'}{R}$}
  \end{inlineprooftree}.
 \end{center}

 The proper open leaves in this \CPIL -proof with open leaves are assigned
 $\htriple{Q}{C'}{R}$, as required.

 \proofcase{\orPILrule}
 Assume that $\ihtriple{P}{\cOR{C_{0}}{C_{1}}}{Q}$ is provable in Hoare logic with 
 the following proof:
 \begin{center}
  \begin{inlineprooftree}
   \AxiomC{}
   \DeduceC{$\ihtriple{P}{C_{0}}{Q}$}
   \AxiomC{}
   \DeduceC{$\ihtriple{P}{C_{1}}{Q}$}
  \RightLabel{\orPILrule}
  \BinaryInfC{$\ihtriple{P}{\cOR{C_{0}}{C_{1}}}{Q}$}
\end{inlineprooftree}.
 \end{center}
 Since $\ihtriple{P}{C_{i}}{Q}$ is provable in \PIL\ for ${i}={0, 1}$,
 induction hypothesis implies that,
 for any program $C'$ and assertion $R$, 
 there is a \CPIL -proof with open leaves of $\ihtriple{P}{{C_{i}};{C'}}{R}$ 
 such every open leaf is assigned $\ihtriple{Q}{C'}{R}$ for ${i}={0, 1}$.
 We derive a \CPIL -proof with open leaves of $\ihtriple{Q}{C'}{R}$ as follows:
 \begin{center}
  \begin{inlineprooftree}
   \AxiomC{$\ihtriple{Q}{C'}{R}$}
   \RightLabel{(IH)}
   \DeduceC{$\ihtriple{P}{C_{0};C'}{Q}$}
   \AxiomC{$\ihtriple{Q}{C'}{R}$}
   \RightLabel{(IH)}
   \DeduceC{$\ihtriple{P}{C_{1};C'}{Q}$}
   \RightLabel{\orCPILrule}
   \BinaryInfC{$\ihtriple{P}{\cOR{C_{0}}{C_{1}};C'}{Q}$}
\end{inlineprooftree}.
 \end{center}

 \proofcase{\whilePILrule}
 In this case, ${C}\equiv{\while{B}{C_{0}}}$.
 Then, ${Q}\equiv{{\lnot B}\to{P}}$ holds and
 the proof of $\ihtriple{P}{C}{Q}$ is the following:
 \begin{center}
  \begin{inlineprooftree}
   \AxiomC{}
   \DeduceC{$\ihtriple{{B}\to{P}}{C_{0}}{P}$}
   \RightLabel{\whilePILrule}
   \UnaryInfC{$\ihtriple{P}{\while{B}{C_{0}}}{{\lnot B}\to{P}}$}
  \end{inlineprooftree}.
 \end{center} 

 Since $\ihtriple{{B}\to{P}}{C_{0}}{P}$ is provable in \PIL,
 the induction hypothesis implies that, for any program $C''$ and assertion $R$, 
 there is a \CPIL -proof with open leaves of $\ihtriple{{B}\to{P}}{C_{0};C''}{R}$
 such that every proper open leaf is assigned $\ihtriple{P}{C''}{R}$.
 Since $C''$ is arbitrary,
 for any program $C'$ and assertion $R$, 
 there is a \CPIL -proof with open leaves of 
 $\ihtriple{{B}\to{P}}{C_{0};{\while{B}{C_{0}}};C'}{R}$
 such that every proper open leaf is assigned $\ihtriple{P}{{\while{B}{C_{0}}};C'}{R}$.

 We derive a \CPIL -proof with open leaves of 
 $\htriple{P}{{\while{B}{C_{0}}};C'}{R}$ as follows:
 \begin{center}
  \begin{inlineprooftree}
   \AxiomC{$\ihtriple{{\lnot B}\to{P}}{C'}{R}$}
   
   \AxiomC{$\ihtriple{P}{{\while{B}{C_{0}}};C'}{R}$\tikzmark{bud5}}
   \RightLabel{(IH)}
   \DeduceC{$\ihtriple{{B}\to{P}}{C;{{\while{B}{C_{0}}};C'}}{R}$}
   \RightLabel{\whileCPILrule\tikzmark{mid5}}
   \BinaryInfC{$\ihtriple{P}{{\while{B}{C_{0}}};C'}{R}$ \tikzmark{comp5}}
  \end{inlineprooftree}.
   \begin{tikzpicture}[remember picture, overlay, thick, relative, auto, rounded corners, line width=1.0pt]
    \coordinate (b) at ({pic cs:bud5});
    \coordinate (root) at ({pic cs:comp5});
    \coordinate (M) at ({pic cs:mid5});
    
    \draw[->] let \p1=($(M)-(b)$), \p2=($(root)-(b)$) in
    (b) -- ++(\x1 + 0.25em,0) -- ++(0,\y2) -- (root);
   \end{tikzpicture}
 \end{center}
 In the \CPIL -proof with open leaves above,
 any occurrence of $\ihtriple{P}{{\while{B}{C_{0}}};C'}{R}$ as a leaf has a back-link 
 to the root.
 Then, each proper open leaf in the \CPIL -proof with open leaves above is
 assigned $\ihtriple{{\lnot B}\to{P}}{C'}{R}$.
\end{proof}

Now, we show the completeness theorem.

\begin{proof}[Proof of \cref{thm:cyclic_par_incor_complete}]
 Fix $\ihtriple{P}{C}{Q}$ be a partial reverse Hoare triple.

 \noindent
 \cref{item:validness-thm-cyclic_par_incor_complete}$\Rightarrow$\cref{item:provable-reverse_Hoare-thm-cyclic_par_incor_complete}:
 By \cref{thm:rev_hoare_complete}.

 \noindent
 \cref{item:provable-reverse_Hoare-thm-cyclic_par_incor_complete}$\Rightarrow$\cref{item:provable-cyclic-proofs-thm-cyclic_par_incor_complete}:
 Assume $\ihtriple{P}{C}{Q}$ is provable in \PIL.
 By \cref{lem:CPIL_cyclic_proof}, 
 there is a \CPIL -proof with open leaves of $\ihtriple{P}{C}{Q}$ such that 
 every proper open leaf is assigned $\ihtriple{Q}{\mathemptyword}{Q}$.
 Since $\ihtriple{Q}{\mathemptyword}{Q}$ can be the conclusion of \axiomILrule,
 there is a \CPIL -proof of $\ihtriple{P}{C}{Q}$.

 \noindent
 \cref{item:provable-cyclic-proofs-thm-cyclic_par_incor_complete}$\Rightarrow$\cref{item:validness-thm-cyclic_par_incor_complete}:
 By \cref{thm:par-incorrect_soundness}.
\end{proof}
\section{Conclusion}
\label{sec:conc}
We have given ordinary and cyclic proof systems for partial reverse Hoare logic.
Then, we have shown their soundness and relative completeness.
Although the semantics of partial reverse Hoare logic is the dual of ``total'' Hoare logic,
assertions in the rule for the while loop are the dual of these in ``partial'' Hoare logic.
Comparing cyclic proofs with ordinary proofs, we do not need to find loop invariants.
This is an advantage of cyclic proofs for proof search.

We wonder whether $\mathsetweakestpresy$-expressiveness is necessary 
for relative completeness. 
J.~A.~Bergstra and J.~V.~Tucker \cite{Bergstra1982} showed that
the expressiveness of the language of assertions, 
which means that the language can express the weakest liberal pre-conditions for
any assertion and any program, is not necessary 
for the relative completeness of partial Hoare logic.
We conjecture that a similar result holds in partial reverse Hoare logic.

Other future work would be 
(1) to extend partial reverse Hoare logic, for example, by separation logic,
(2) to define cyclic proof systems for other Hoare-style logics,
 and
(3) to study a method to find loop invariants from cyclic proofs.

\ifacm
\begin{acks}
\else
\section*{Acknowledgements}
\fi 
We would like to thank Quang Loc Le, James Brotherston, Koji Nakazawa, Daisuke Kimura, and Tatsuya Abe for their valuable comments.

\ifacm
\end{acks}
\fi

\ifacm
\bibliographystyle{ACM-Reference-Format}
\else
\bibliographystyle{plain}
\fi
\bibliography{refs}

\begin{thebibliography}{10}

\bibitem{Afshari2017}
Bahareh Afshari and Graham~E. Leigh.
\newblock Cut-free completeness for modal mu-calculus.
\newblock In {\em 32nd Annual {ACM/IEEE} Symposium on Logic in Computer
  Science, {LICS} 2017, Reykjavik, Iceland, June 20-23, 2017}, pages 1--12.
  {IEEE} Computer Society, 2017.

\bibitem{Apt1981}
Krzysztof~R. Apt.
\newblock Ten years of hoare's logic: {A} survey - part 1.
\newblock {\em {ACM} Trans. Program. Lang. Syst.}, 3(4):431--483, 1981.

\bibitem{Apt2O19}
Krzysztof~R. Apt and Ernst{-}R{\"{u}}diger Olderog.
\newblock Fifty years of hoare's logic.
\newblock {\em Formal Aspects Comput.}, 31(6):751--807, 2019.

\bibitem{Astrauskas2022}
Vytautas Astrauskas, Aurel B{\'{\i}}l{\'{y}}, Jon{\'{a}}s Fiala, Zachary
  Grannan, Christoph Matheja, Peter M{\"{u}}ller, Federico Poli, and
  Alexander~J. Summers.
\newblock The prusti project: Formal verification for rust.
\newblock In Jyotirmoy~V. Deshmukh, Klaus Havelund, and Ivan Perez, editors,
  {\em {NASA} Formal Methods - 14th International Symposium, {NFM} 2022,
  Pasadena, CA, USA, May 24-27, 2022, Proceedings}, volume 13260 of {\em
  Lecture Notes in Computer Science}, pages 88--108. Springer, 2022.

\bibitem{Berardi2019}
Stefano Berardi and Makoto Tatsuta.
\newblock Classical system of martin-lof's inductive definitions is not
  equivalent to cyclic proofs.
\newblock {\em Log. Methods Comput. Sci.}, 15(3), 2019.

\bibitem{Bergstra1982}
Jan~A. Bergstra and John~V. Tucker.
\newblock Expressiveness and the completeness of hoare's logic.
\newblock {\em J. Comput. Syst. Sci.}, 25(3):267--284, 1982.

\bibitem{Brotherston2008}
James Brotherston, Richard Bornat, and Cristiano Calcagno.
\newblock Cyclic proofs of program termination in separation logic.
\newblock In {\em Proceedings of {POPL}-35}. ACM, 2008.

\bibitem{BrotherstonA2011}
James Brotherston, Dino Distefano, and Rasmus~Lerchedahl Petersen.
\newblock Automated cyclic entailment proofs in separation logic.
\newblock In Nikolaj~S. Bj{\o}rner and Viorica Sofronie{-}Stokkermans, editors,
  {\em Automated Deduction - {CADE-23} - 23rd International Conference on
  Automated Deduction, Wroclaw, Poland, July 31 - August 5, 2011. Proceedings},
  volume 6803 of {\em Lecture Notes in Computer Science}, pages 131--146.
  Springer, 2011.

\bibitem{Brotherston2014}
James Brotherston and Nikos Gorogiannis.
\newblock Cyclic abduction of inductively defined safety and termination
  preconditions.
\newblock In Markus M{\"{u}}ller{-}Olm and Helmut Seidl, editors, {\em Static
  Analysis - 21st International Symposium, {SAS} 2014, Munich, Germany,
  September 11-13, 2014. Proceedings}, volume 8723 of {\em Lecture Notes in
  Computer Science}, pages 68--84. Springer, 2014.

\bibitem{Brotherston2012}
James Brotherston, Nikos Gorogiannis, and Rasmus~Lerchedahl Petersen.
\newblock A generic cyclic theorem prover.
\newblock In Ranjit Jhala and Atsushi Igarashi, editors, {\em Programming
  Languages and Systems - 10th Asian Symposium, {APLAS} 2012, Kyoto, Japan,
  December 11-13, 2012. Proceedings}, volume 7705 of {\em Lecture Notes in
  Computer Science}, pages 350--367. Springer, 2012.

\bibitem{BrotherstonS2011}
James Brotherston and Alex Simpson.
\newblock Sequent calculi for induction and infinite descent.
\newblock {\em J. Log. Comput.}, 21(6):1177--1216, 2011.

\bibitem{Calcagno2011}
Cristiano Calcagno and Dino Distefano.
\newblock Infer: An automatic program verifier for memory safety of {C}
  programs.
\newblock In Mihaela~Gheorghiu Bobaru, Klaus Havelund, Gerard~J. Holzmann, and
  Rajeev Joshi, editors, {\em {NASA} Formal Methods - Third International
  Symposium, {NFM} 2011, Pasadena, CA, USA, April 18-20, 2011. Proceedings},
  volume 6617 of {\em Lecture Notes in Computer Science}, pages 459--465.
  Springer, 2011.

\bibitem{Chu2015}
Duc{-}Hiep Chu, Joxan Jaffar, and Minh{-}Thai Trinh.
\newblock Automatic induction proofs of data-structures in imperative programs.
\newblock In David Grove and Stephen~M. Blackburn, editors, {\em Proceedings of
  the 36th {ACM} {SIGPLAN} Conference on Programming Language Design and
  Implementation, Portland, OR, USA, June 15-17, 2015}, pages 457--466. {ACM},
  2015.

\bibitem{Cook1978}
Stephen~A. Cook.
\newblock Soundness and completeness of an axiom system for program
  verification.
\newblock {\em {SIAM} J. Comput.}, 7(1):70--90, 1978.

\bibitem{Das2019}
Anupam Das.
\newblock On the logical complexity of cyclic arithmetic.
\newblock {\em Log. Methods Comput. Sci.}, 16(1), 2020.

\bibitem{Vries2011}
Edsko de~Vries and Vasileios Koutavas.
\newblock Reverse hoare logic.
\newblock In Gilles Barthe, Alberto Pardo, and Gerardo Schneider, editors, {\em
  Software Engineering and Formal Methods - 9th International Conference,
  {SEFM} 2011, Montevideo, Uruguay, November 14-18, 2011. Proceedings}, volume
  7041 of {\em Lecture Notes in Computer Science}, pages 155--171. Springer,
  2011.

\bibitem{Furia2014}
Carlo~A. Furia, Bertrand Meyer, and Sergey Velder.
\newblock Loop invariants: Analysis, classification, and examples.
\newblock {\em {ACM} Comput. Surv.}, 46(3):34:1--34:51, 2014.

\bibitem{Hoare1969}
C.~A.~R. Hoare.
\newblock An axiomatic basis for computer programming.
\newblock {\em Commun. {ACM}}, 12(10):576--580, 1969.

\bibitem{Jung2015}
Ralf Jung, David Swasey, Filip Sieczkowski, Kasper Svendsen, Aaron Turon, Lars
  Birkedal, and Derek Dreyer.
\newblock Iris: Monoids and invariants as an orthogonal basis for concurrent
  reasoning.
\newblock In Sriram~K. Rajamani and David Walker, editors, {\em Proceedings of
  the 42nd Annual {ACM} {SIGPLAN-SIGACT} Symposium on Principles of Programming
  Languages, {POPL} 2015, Mumbai, India, January 15-17, 2015}, pages 637--650.
  {ACM}, 2015.

\bibitem{Kimura2020}
Daisuke Kimura, Koji Nakazawa, Tachio Terauchi, and Hiroshi Unno.
\newblock Failure of cut-elimination in cyclic proofs of separation logic.
\newblock {\em Computer Software}, 37(1):39--52, 2020.

\bibitem{Lee2024}
Yeonseok Lee and Koji Nakazawa.
\newblock Relative completeness of incorrectness separation logic.
\newblock In Oleg Kiselyov, editor, {\em Programming Languages and Systems -
  22nd Asian Symposium, {APLAS} 2024, Kyoto, Japan, October 22-24, 2024,
  Proceedings}, volume 15194 of {\em Lecture Notes in Computer Science}, pages
  264--282. Springer, 2024.

\bibitem{Manna1974}
Zohar Manna and Amir Pnueli.
\newblock Axiomatic approach to total correctness of programs.
\newblock {\em Acta Informatica}, 3:243--263, 1974.

\bibitem{Oda2024}
Yukihiro Oda, James Brotherston, and Makoto Tatsuta.
\newblock The failure of cut-elimination in cyclic proof for first-order logic
  with inductive definitions.
\newblock {\em Journal of Logic and Computation}, page exad068, 12 2023.

\bibitem{OHearn2019}
Peter~W. O'Hearn.
\newblock Incorrectness logic.
\newblock {\em Proc. ACM Program. Lang.}, 4(POPL), dec 2019.

\bibitem{OHearn2001}
Peter~W. O'Hearn, John~C. Reynolds, and Hongseok Yang.
\newblock Local reasoning about programs that alter data structures.
\newblock In Laurent Fribourg, editor, {\em Computer Science Logic, 15th
  International Workshop, {CSL} 2001. 10th Annual Conference of the EACSL,
  Paris, France, September 10-13, 2001, Proceedings}, volume 2142 of {\em
  Lecture Notes in Computer Science}, pages 1--19. Springer, 2001.

\bibitem{Raad2020}
Azalea Raad, Josh Berdine, Hoang{-}Hai Dang, Derek Dreyer, Peter~W. O'Hearn,
  and Jules Villard.
\newblock Local reasoning about the presence of bugs: Incorrectness separation
  logic.
\newblock In Shuvendu~K. Lahiri and Chao Wang, editors, {\em Computer Aided
  Verification - 32nd International Conference, {CAV} 2020, Los Angeles, CA,
  USA, July 21-24, 2020, Proceedings, Part {II}}, volume 12225 of {\em Lecture
  Notes in Computer Science}, pages 225--252. Springer, 2020.

\bibitem{Raad2O22}
Azalea Raad, Josh Berdine, Derek Dreyer, and Peter~W. O'Hearn.
\newblock Concurrent incorrectness separation logic.
\newblock {\em Proc. {ACM} Program. Lang.}, 6({POPL}):1--29, 2022.

\bibitem{Reynolds2002}
John~C. Reynolds.
\newblock Separation logic: {A} logic for shared mutable data structures.
\newblock In {\em 17th {IEEE} Symposium on Logic in Computer Science {(LICS}
  2002), 22-25 July 2002, Copenhagen, Denmark, Proceedings}, pages 55--74.
  {IEEE} Computer Society, 2002.

\bibitem{Rowes-list}
Reuben N.~S. Rowe.
\newblock Non-well-founded and cyclic proof theory a bibliography, February
  2024.

\bibitem{Rowe2017}
Reuben N.~S. Rowe and James Brotherston.
\newblock Automatic cyclic termination proofs for recursive procedures in
  separation logic.
\newblock In Yves Bertot and Viktor Vafeiadis, editors, {\em Proceedings of the
  6th {ACM} {SIGPLAN} Conference on Certified Programs and Proofs, {CPP} 2017,
  Paris, France, January 16-17, 2017}, pages 53--65. {ACM}, 2017.

\bibitem{Saotome2020}
Kenji Saotome, Koji Nakazawa, and Daisuke Kimura.
\newblock Restriction on {Cut} in {Cyclic} {Proof} {System} for {Symbolic}
  {Heaps}.
\newblock In Keisuke Nakano and Konstantinos Sagonas, editors, {\em Functional
  and {Logic} {Programming}}, pages 88--105. Springer International Publishing,
  2020.

\bibitem{Saotome2024}
Kenji Saotome, Koji Nakazawa, and Daisuke Kimura.
\newblock Restriction on cut rule in cyclic-proof system for symbolic heaps.
\newblock {\em Theor. Comput. Sci.}, 1019:114854, 2024.

\bibitem{Shamkanov2014}
Daniyar~Salkarbekovich Shamkanov.
\newblock Circular proofs for the {G{\"o}del-L{\"o}b} provability logic.
\newblock {\em Mathematical Notes}, 96(3-4):575--585, 2014.

\bibitem{Simpson2017}
Alex Simpson.
\newblock Cyclic arithmetic is equivalent to peano arithmetic.
\newblock In Javier Esparza and Andrzej~S. Murawski, editors, {\em Foundations
  of Software Science and Computation Structures - 20th International
  Conference, {FOSSACS} 2017, Held as Part of the European Joint Conferences on
  Theory and Practice of Software, {ETAPS} 2017, Uppsala, Sweden, April 22-29,
  2017, Proceedings}, volume 10203 of {\em Lecture Notes in Computer Science},
  pages 283--300, 2017.

\bibitem{Ta2016}
Quang{-}Trung Ta, Ton~Chanh Le, Siau{-}Cheng Khoo, and Wei{-}Ngan Chin.
\newblock Automated mutual explicit induction proof in separation logic.
\newblock In John~S. Fitzgerald, Constance~L. Heitmeyer, Stefania Gnesi, and
  Anna Philippou, editors, {\em {FM} 2016: Formal Methods - 21st International
  Symposium, Limassol, Cyprus, November 9-11, 2016, Proceedings}, volume 9995
  of {\em Lecture Notes in Computer Science}, pages 659--676, 2016.

\bibitem{Ta2018}
Quang{-}Trung Ta, Ton~Chanh Le, Siau{-}Cheng Khoo, and Wei{-}Ngan Chin.
\newblock Automated lemma synthesis in symbolic-heap separation logic.
\newblock {\em Proc. {ACM} Program. Lang.}, 2({POPL}):9:1--9:29, 2018.

\bibitem{Tatsuta2019}
Makoto Tatsuta, Koji Nakazawa, and Daisuke Kimura.
\newblock Completeness of {Cyclic} {Proofs} for {Symbolic} {Heaps} with
  {Inductive} {Definitions}.
\newblock In Anthony~Widjaja Lin, editor, {\em Programming {Languages} and
  {Systems}}, pages 367--387. Springer International Publishing, 2019.

\bibitem{Tellez2020}
Gadi Tellez and James Brotherston.
\newblock Automatically verifying temporal properties of pointer programs with
  cyclic proof.
\newblock {\em J. Autom. Reason.}, 64(3):555--578, 2020.

\bibitem{Tsukada2022}
Takeshi Tsukada and Hiroshi Unno.
\newblock Software model-checking as cyclic-proof search.
\newblock {\em Proc. {ACM} Program. Lang.}, 6({POPL}):1--29, 2022.

\bibitem{Unno2017}
Hiroshi Unno, Sho Torii, and Hiroki Sakamoto.
\newblock Automating induction for solving horn clauses.
\newblock In Rupak Majumdar and Viktor Kuncak, editors, {\em Computer Aided
  Verification - 29th International Conference, {CAV} 2017, Heidelberg,
  Germany, July 24-28, 2017, Proceedings, Part {II}}, volume 10427 of {\em
  Lecture Notes in Computer Science}, pages 571--591. Springer, 2017.

\bibitem{Verscht2025T}
Lena Verscht and Benjamin~Lucien Kaminski.
\newblock A taxonomy of hoare-like logics: Towards a holistic view using
  predicate transformers and kleene algebras with top and tests.
\newblock {\em Proc. {ACM} Program. Lang.}, 9({POPL}):1782--1811, 2025.

\bibitem{Verscht2025P}
Lena Verscht, Ānrán Wáng, and Benjamin~Lucien Kaminski.
\newblock Partial incorrectness logic, 2025.

\bibitem{Winskel1993}
Glynn Winskel.
\newblock {\em The formal semantics of programming languages - an
  introduction}.
\newblock Foundation of computing series. {MIT} Press, 1993.

\bibitem{Zhang2022}
Linpeng Zhang and Benjamin~Lucien Kaminski.
\newblock Quantitative strongest post: a calculus for reasoning about the flow
  of quantitative information.
\newblock {\em Proc. ACM Program. Lang.}, 6(OOPSLA1), apr 2022.

\end{thebibliography}

\appendix
\section{Construction of weakest pre-condition assertion}
\label[appendix]{app:wpr}
We construct a weakest pre-condition assertion $\mathweakestpre{C}{Q}$.
In this appendix, we assume that our language includes 
addition operator $+$, subtraction operator $-$, multiplication operator $*$,
division operator $/$, and reminder operator $\%$.

We abbreviate ${x}={{a}\%{(1+(1+i)*b)}}$ to $\mathof{\beta}{a, b, i, x}$.
It is so-called \emph{G\"{o}del's predicate $\beta$} \cite{Winskel1993}.
For G\"{o}del's predicate $\beta$, the following statement holds.

\begin{fact}
 \label[fact]{fact:Godel-beta}
 For any finite sequence of natural numbers $a_{0}, \dots, a_{k}$, and
 any natural number $j$ (${0}\leq j \leq k$),
 there exists two natural numbers $n$ and $m$ such that
 ${x}={a_{j}}$ holds if and only if $\mathof{\beta}{n, m, j, x}$ holds.
\end{fact}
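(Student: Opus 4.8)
This is the classical lemma on G\"{o}del's $\beta$ function, and the plan is to prove it via the Chinese Remainder Theorem. Given the sequence $a_{0}, \dots, a_{k}$, I would set $s = \max\{k, a_{0}, \dots, a_{k}\}$ and take $m = (s+1)!$. This choice simultaneously guarantees two things: $m > a_{j}$ for every $j \leq k$, and $m$ is divisible by each of $1, \dots, k$. I then define the moduli $d_{i} = 1 + (1+i)\cdot m$ for $0 \leq i \leq k$; these are exactly the numbers occurring in the definition of $\mathof{\beta}{n, m, i, x}$, which unfolds to $x = n \% d_{i}$.

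First I would show that $d_{0}, \dots, d_{k}$ are pairwise coprime. Suppose a prime $p$ divides both $d_{i}$ and $d_{j}$ with $i < j \leq k$. Then $p$ divides $d_{j} - d_{i} = (j - i)\cdot m$, hence $p \mid j - i$ or $p \mid m$; in the former case $p \leq j - i \leq k$, so $p \mid m$ as well. Thus in all cases $p \mid m$, and then $p \mid d_{i} = 1 + (1+i)m$ forces $p \mid 1$, a contradiction. Next, since $0 \leq a_{j} < m \leq d_{j}$, the Chinese Remainder Theorem yields a natural number $n$ with $n \equiv a_{j} \pmod{d_{j}}$ for every $j$ (so in fact one $n$ works uniformly for all $j$, which is stronger than needed), and the bound $a_{j} < d_{j}$ gives $n \% d_{j} = a_{j}$. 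Consequently $\mathof{\beta}{n, m, j, x}$, i.e.\ $x = n \% d_{j}$, holds if and only if $x = a_{j}$, which is exactly the claim, with the pair $(n, m)$ just constructed as the witness.

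I expect the main obstacle to be the coprimality argument together with the accompanying choice of $m$: one must pick $m$ large enough that each $a_{j}$ is a genuine residue modulo $d_{j}$ (so that $n \% d_{j}$ recovers $a_{j}$ exactly) and at the same time divisible enough (by $1, \dots, k$) to rule out all common prime factors of the $d_{i}$ --- the factorial $(s+1)!$ does both at once. The remaining ingredients, namely unfolding the abbreviation $\mathof{\beta}{\cdot}{}$ and invoking the Chinese Remainder Theorem, are routine, so I would cite CRT as a standard number-theoretic fact rather than reprove it.
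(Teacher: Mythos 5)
Your proof is correct. The paper gives no proof of this Fact at all --- it is stated as a standard property of G\"{o}del's $\beta$ predicate with a citation to Winskel --- and your Chinese Remainder Theorem argument (the factorial choice of $m$ making the moduli $1+(1+i)m$ pairwise coprime and larger than every $a_{j}$, then CRT yielding an $n$ with ${n}\%{(1+(1+j)m)} = a_{j}$) is exactly the standard proof that the citation points to. The uniform version you establish, with a single pair $(n,m)$ serving all $j$ simultaneously, is moreover the form the paper actually relies on when it encodes the whole sequence of intermediate loop states in the appendix, so proving the stronger statement is the right choice even though the Fact as literally quantified lets $(n,m)$ depend on $j$.
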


The above fact means that any finite sequence of natural numbers can be encoded
as two natural numbers $n$ and $m$.


\begin{figure*}[t]
  \begin{align*}
  {\mathweakestpre{\mathemptyword}{Q}}&\equiv{Q}  \\
  {\mathweakestpre{\assign{x}{E}}{Q}}&\equiv{\mathsubstbox{Q}{\mathsubst{x}{E}}}   \\
  {\mathweakestpre{C_{1};C_{2}}{Q}}&\equiv{\mathweakestpre{C_{1}}{\mathweakestpre{C_{2}}{Q}}} \\
   {\mathweakestpre{\cOR{C_{1}}{C_{2}}}{Q}}&\equiv{{\mathweakestpre{C_{1}}{Q}}\lor{\mathweakestpre{C_{2}}{Q}}} 
 \end{align*}
 \begin{multline*}
  {\mathweakestpre{\while{B}{C}}{Q}}\equiv 
   \exists k \exists m \exists n \forall y_{1} \dots \forall y_{l} \forall y'_{1} \dots \forall y'_{l} \forall y''_{1} \dots \forall y''_{l} \forall y'''_{1} \dots \forall y'''_{l}  \\
  ({\mathof{F_{l}}{n,m}} \land 
  {\mathof{S_{l}}{k, m, n, y_{1}, \dots, y_{l}, y'_{1}, \dots, y'_{l}, y''_{1}, \dots, y''_{l}, B, C}} \land 
  {\mathof{T_{l}}{k, m, n, y'''_{1}, \dots, y'''_{l}, B, Q}} ), 
 \end{multline*}
  where ${{\mathFVof{P}}\cup{\mathVarof{\while{B}{C}}}}={\mathsetextension{x_{1}, \dots, x_{l}}}$,
  \[
  {\mathof{F_{l}}{n, m}}\equiv{\mleft(\mathof{\beta}{n, m, 0, x_{1}} \land \dots \land \mathof{\beta}{n, m, l-1, x_{l}}\mright)},
  \]
  \begin{multline*}
  {\mathof{S_{l}}{k, m, n, y'_{1}, \dots, y'_{l}, y''_{1}, \dots, y''_{l}, y'''_{1}, \dots, y'''_{l}, B, C}}\equiv  
   (0 < k \to %
   (%
   \forall i ( %
   (0 \leq i \land i < k) \to \\ %
   (\mathof{\beta}{n, m, l*i, y_{1}} \land \dots \land \mathof{\beta}{n, m, l*(i+1)-1, y_{l}})%
  \land \\%
 ( \mathof{\beta}{n, m, l*(i+1), y'_{1}} \land \dots \land \mathof{\beta}{n, m, l*(i+2)-1, y'_{l}}) \\ 
  \to 
  ( {\mathsubstbox{B}{{\mathsubst{x_{1}}{y_{1}}}, \dots, {\mathsubst{x_{l}}{y_{l}}}}}
  \land \\
   \mathsubstbox{
   ({\mathweakestpre{C}{x_{1}=y'_{1} \land \dots \land x_{l}=y'_{l}}} %
   \to (x_{1}=y_{1} \land \dots \land x_{l}=y_{l})
  )}{{\mathsubst{x_{1}}{y''_{1}}}, \dots, {\mathsubst{x_{l}}{y''_{l}}}}
  )%
  ) %
  )
   ),
 \end{multline*}
 and
 \begin{multline*}
 {\mathof{T_{l}}{k, m, n, y'''_{1}, \dots, y'''_{l}, B, Q}}\equiv
  (\mathof{\beta}{n, m, l*k, y'''_{1}} \land \dots \land \mathof{\beta}{n, m, l*(k+1)-1, y'''_{l}} \to \\
  ((\lnot \mathsubstbox{B}{{\mathsubst{x_{1}}{y'''_{1}}}, \dots, {\mathsubst{x_{l}}{y'''_{l}}}}) 
  \land \\
 (\mathsubstbox{Q}{{\mathsubst{x_{1}}{y'''_{1}}}, \dots, {\mathsubst{x_{l}}{y'''_{l}}}})
) )  
 \end{multline*}

 \caption{Weakest pre-condition}
 \label{fig:weakest-precondition}
\end{figure*}

\begin{definition}[Weakest pre-condition assertion]
 \label[definition]{definition:weakest-pre-condition-assertion}
 For an assertion $P$ and a program $C$,
 we inductively define an assertion $\mathweakestpre{C}{Q}$ 
 in \cref{fig:weakest-precondition}.
\end{definition}

\begin{proposition}
 \label[proposition]{lem:eq-between-weakestpre-and-setweakestpre}
 ${\sigma}\in{\mathsetweakestpre{C}{Q}}$
 if and only if
 ${\sigma}\models{\mathweakestpre{C}{Q}}$
 for any state $\sigma$.
\end{proposition}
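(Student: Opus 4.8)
The plan is to prove the biconditional for all states $\sigma$ by structural induction on the program $C$, following the recursive definition of $\mathweakestpre{C}{Q}$ given in \cref{fig:weakest-precondition}.

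For ${C}\equiv{\mathemptyword}$, ${C}\equiv{\assign{x}{E}}$, ${C}\equiv{C_{0};C_{1}}$ and ${C}\equiv{\cOR{C_{0}}{C_{1}}}$, the defining clauses for $\mathweakestpre{C}{Q}$ mirror exactly the recursive structure already exploited in \cref{lem:weakest-pre-predicate}, so these cases go through by the same reasoning: unfold \cref{definition:weakest-pre-condition}, use \cref{lem:assert_subst} for the assignment clause, \cref{lem:property-config} \cref{item:sequencing-lem-property-config} for sequencing, and the analogous one-step decomposition of $\evaluation^{*}$ for $\cOR{C_{0}}{C_{1}}$, invoking the induction hypothesis on the strictly smaller subprograms. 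Each time the characterising assertion turns out to be precisely the right-hand side of the corresponding clause in \cref{fig:weakest-precondition}.

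The real work, and the step I expect to be the main obstacle, is the case ${C}\equiv{\while{B}{C_{0}}}$, where the candidate assertion is the $\beta$-coded formula $\exists k\exists m\exists n\,\forall\dots\,({\mathof{F_{l}}{n,m}}\land S_{l}\land T_{l})$. I would start from \cref{lem:property-config} \cref{item:while-lem-property-config}, which rephrases ${\sigma}\in{\mathsetweakestpre{\while{B}{C_{0}}}{Q}}$ as the existence of states $\sigma_{0},\dots,\sigma_{k}$ with ${\sigma_{0}}\equiv{\sigma}$, ${\sigma_{k}}\models{{\lnot B}\land{Q}}$, and, for each $i<k$, ${\sigma_{i}}\models{B}$ together with ${\config{C_{0}}{\sigma_{i}}}\evaluation^{*}{\config{\mathemptyword}{\sigma_{i+1}}}$. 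Writing $x_{1},\dots,x_{l}$ for the variables fixed in \cref{fig:weakest-precondition} (which include all of $\mathVarof{\while{B}{C_{0}}}$ and all free variables of $Q$), and noting via \cref{lem:property-config} \cref{item:not-in-lem-property-config} that the run only touches these variables, the relevant data of the run is the finite list $\sigma_{0}(x_{1}),\dots,\sigma_{0}(x_{l}),\sigma_{1}(x_{1}),\dots,\sigma_{k}(x_{l})$ of length $l(k+1)$; by \cref{fact:Godel-beta} this list is coded by some pair $n,m$, the $i$-th block being retrieved by $\mathof{\beta}{n,m,l i,\,\cdot\,},\dots,\mathof{\beta}{n,m,l(i+1)-1,\,\cdot\,}$. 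It then remains to unwind $F_{l}$, $S_{l}$ and $T_{l}$ and check that $\sigma$ satisfies the formula for such $k,n,m$ exactly when the decoded blocks form a run of the shape above: $F_{l}$ fixes block $0$ to $\sigma$, $T_{l}$ fixes block $k$ to a state satisfying ${\lnot B}\land{Q}$, and $S_{l}$ expresses, for each $i<k$, that $B$ holds at block $i$ and that ${\config{C_{0}}{\text{block}_{i}}}\evaluation^{*}{\config{\mathemptyword}{\text{block}_{i+1}}}$; it is precisely here that the induction hypothesis on the loop body $C_{0}$ enters, translating the syntactic subformula $\mathweakestpre{C_{0}}{x_{1}=y'_{1}\land\dots\land x_{l}=y'_{l}}$ into the semantic statement about $\mathsetweakestpre{C_{0}}{\cdot}$.

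I expect the genuinely delicate points to be essentially bookkeeping: lining up the $\beta$-indexing $l i+j$ with the intended block structure, keeping the several groups of universally quantified auxiliary variables synchronised with the roles they play inside $S_{l}$ and $T_{l}$, and checking that coding one step ${\config{C_{0}}{\text{block}_{i}}}\evaluation^{*}{\config{\mathemptyword}{\text{block}_{i+1}}}$ through $\mathweakestpre{C_{0}}{\cdot}$ is tight, i.e.\ that pinning down the values of $x_{1},\dots,x_{l}$ in the post-state records enough information, which again rests on \cref{lem:property-config} \cref{item:not-in-lem-property-config}. Once the correspondence between coded runs and satisfying instances is set up, both directions of the biconditional follow by reading it forwards (extract a terminating run, code it, verify the formula) and backwards (decode $k,n,m$, rebuild a run, and appeal once more to \cref{lem:property-config} \cref{item:while-lem-property-config}).
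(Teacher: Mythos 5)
Your proposal follows essentially the same route as the paper's own proof: structural induction on $C$, with the $\mathemptyword$, assignment, sequencing and choice cases handled just as in \cref{lem:weakest-pre-predicate} (now with the induction hypothesis), and the $\while{B}{C_{0}}$ case settled by combining \cref{lem:property-config} \cref{item:while-lem-property-config} with the $\beta$-coding of the run's variable values via \cref{fact:Godel-beta} and then unwinding $F_{l}$, $S_{l}$, $T_{l}$, using the induction hypothesis on the loop body to interpret $\mathweakestpre{C_{0}}{\cdot}$. The only difference is cosmetic---you additionally appeal to \cref{lem:property-config} \cref{item:not-in-lem-property-config} to justify that recording only $x_{1},\dots,x_{l}$ is enough, a point the paper leaves implicit---so the plan is correct and matches the paper's argument.
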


\begin{proof}
 We show the statement by induction on construction of $C$.
 We consider cases according to the form of $C$.

 \proofcase{(${C}\equiv{\mathemptyword}$)} 
 In this case, ${\mathweakestpre{C}{Q}}\equiv{Q}$.
 
 If ${\sigma}\in{\mathsetweakestpre{C}{Q}}$, 
 then we have ${\sigma}\models{Q}\equiv{\mathweakestpre{C}{Q}}$.

 Assume ${\sigma}\models{\mathweakestpre{C}{Q}}\equiv{Q}$.
 Then, we have
 ${\config{\mathemptyword}{\sigma} \evaluation^{*} \config{\mathemptyword}{\sigma}}$ and
 ${{\sigma} \models {Q}}$.
 We have ${\sigma}\in{\mathsetweakestpre{C}{Q}}$.

 \proofcase{(${C}\equiv{\assign{x}{E}}$)} 
 In this case, ${\mathweakestpre{C}{Q}}\equiv{\mathsubstbox{Q}{\mathsubst{x}{E}}}$.

 Assume ${\sigma}\in{\mathsetweakestpre{C}{Q}}$.
 Then, we have
 ${\config{\assign{x}{E}}{\sigma} \evaluation \config{\mathemptyword}{\mathsubstbox{\sigma}{\mathsubst{x}{\sem{E}\sigma}}}}$ and
 ${{\mathsubstbox{\sigma}{\mathsubst{x}{\sem{E}\sigma}}} \models {Q}}$.
 By \cref{lem:assert_subst},
 ${\sigma} \models {\mathsubstbox{Q}{\mathsubst{x}{E}}}$ holds.
 Hence, we have ${\sigma}\models{\mathweakestpre{C}{Q}}$.

 Assume ${\sigma}\models{\mathweakestpre{C}{Q}}\equiv{\mathsubstbox{Q}{\mathsubst{x}{E}}}$.
 Then, we have 
 ${\config{\assign{x}{E}}{\sigma} \evaluation \config{\mathemptyword}{\mathsubstbox{\sigma}{\mathsubst{x}{\sem{E}\sigma}}}}$ and
 ${{\mathsubstbox{\sigma}{\mathsubst{x}{\sem{E}\sigma}}} \models {Q}}$.
 Thus, ${\sigma}\in{\mathsetweakestpre{C}{Q}}$ holds.

 \proofcase{(${C}\equiv{C_{1}; C_{2}}$)} 
 In this case, ${\mathweakestpre{C}{Q}}\equiv{\mathweakestpre{C_{1}}{\mathweakestpre{C_{2}}{Q}}}$.

 Assume ${\sigma}\in{\mathsetweakestpre{C}{Q}}$.
 Then, there exists a state $\sigma'$ such that
 ${\config{C_{1}; C_{2}}{\sigma} \evaluation^{*} \config{\mathemptyword}{\sigma'}}$ and
 ${{\sigma'} \models {Q}}$ hold.
 By \cref{lem:property-config} \cref{item:sequencing-lem-property-config},
 there exists $\sigma''$ such that 
 ${\config{C_{1}}{\sigma}} \evaluation^{*} {\config{\mathemptyword}{\sigma''}}$ and 
 ${\config{C_{2}}{\sigma''}} \evaluation^{*} {\config{\mathemptyword}{\sigma'}}$ hold.
 Since ${\config{C_{2}}{\sigma''}} \evaluation^{*} {\config{\mathemptyword}{\sigma'}}$ and
 ${{\sigma'} \models {Q}}$ hold,
 we have ${\sigma''}\in{\mathsetweakestpre{C_{2}}{Q}}$.
 By induction hypothesis, we have ${\sigma''}\models{\mathweakestpre{C_{2}}{Q}}$.
 Because of ${\config{C_{1}}{\sigma}} \evaluation^{*} {\config{\mathemptyword}{\sigma''}}$,
 we see ${\sigma}\in{\mathsetweakestpre{C_{1}}{\mathweakestpre{C_{2}}{Q}}}$.
 By induction hypothesis,
 we have ${\sigma}\models{\mathweakestpre{C_{1}}{\mathweakestpre{C_{2}}{Q}}}$.

 Assume ${\sigma}\models{\mathweakestpre{C}{Q}}$.
 Then, we have ${\sigma}\models{\mathweakestpre{C_{1}}{\mathweakestpre{C_{2}}{Q}}}$.
 By induction hypothesis,
 we have ${\sigma}\in{\mathsetweakestpre{C_{1}}{\mathweakestpre{C_{2}}{Q}}}$.
 Hence, there exists $\sigma''$ such that 
 ${\config{C_{1}}{\sigma}} \evaluation^{*} {\config{\mathemptyword}{\sigma''}}$ and 
 ${\sigma''}\models{\mathweakestpre{C_{2}}{Q}}$ hold.
 By induction hypothesis, we have ${\sigma''}\in{\mathsetweakestpre{C_{2}}{Q}}$.
 Therefore, there exists a state $\sigma'$ such that
 ${\config{C_{2}}{\sigma''}} \evaluation^{*} {\config{\mathemptyword}{\sigma'}}$ and
 ${{\sigma'} \models {Q}}$.
 By \cref{lem:property-config} \cref{item:sequencing-lem-property-config},
 ${\config{C_{1}; C_{2}}{\sigma} \evaluation^{*} \config{\mathemptyword}{\sigma'}}$ holds.
 Thus, ${\sigma}\in{\mathsetweakestpre{C}{Q}}$ holds.

 \proofcase{(${C}\equiv{\cOR{C_{1}}{C_{2}}}$)} 
  In this case, ${\mathweakestpre{C}{Q}}\equiv{{\mathweakestpre{C_{1}}{Q}}\lor{\mathweakestpre{C_{2}}{Q}}}$.

 Assume ${\sigma}\in{\mathsetweakestpre{C}{Q}}$.
 Then, there exists a state $\sigma'$ such that
 ${\config{\cOR{C_{1}}{C_{2}}}{\sigma} \evaluation^{*} \config{\mathemptyword}{\sigma'}}$ and
 ${{\sigma'} \models {Q}}$ hold.
 We see either 
 $\config{\cOR{C_{0}}{C_{1}}}{\sigma} \evaluation \config{C_{0}}{\sigma} \evaluation^{*} \config{\mathemptyword}{\sigma'}$
 or 
 $\config{\cOR{C_{0}}{C_{1}}}{\sigma} \evaluation \config{C_{1}}{\sigma} \evaluation^{*} \config{\mathemptyword}{\sigma'}$ 
 holds.
 Then, we have either ${\sigma}\in{\mathsetweakestpre{C_{0}}{Q}}$ or 
 ${\sigma}\models{\mathsetweakestpre{C_{1}}{Q}}$.
 By induction hypothesis, we have either ${\sigma}\models{\mathweakestpre{C_{0}}{Q}}$ or 
 ${\sigma}\models{\mathweakestpre{C_{1}}{Q}}$.
 Thus, ${\sigma}\models{{\mathweakestpre{C_{0}}{Q}}\lor{\mathweakestpre{C_{1}}{Q}}}\equiv{\mathweakestpre{C}{Q}}$ holds.

 Assume ${\sigma}\models{\mathweakestpre{C}{Q}}$.
 Then, ${\sigma}\models{{\mathweakestpre{C_{0}}{Q}}\lor{\mathweakestpre{C_{1}}{Q}}}$.
 We have ${\sigma}\models{\mathweakestpre{C_{0}}{Q}}$ or 
 ${\sigma}\models{\mathweakestpre{C_{1}}{Q}}$ holds.
 By induction hypothesis,
 we have either ${\sigma}\in{\mathsetweakestpre{C_{0}}{Q}}$ or 
 ${\sigma}\models{\mathsetweakestpre{C_{1}}{Q}}$.

 Assume ${\sigma}\in{\mathsetweakestpre{C_{0}}{Q}}$.
 Then, there exists a state $\sigma'$ such that
 $\config{C_{0}}{\sigma} \evaluation^{*} \config{\mathemptyword}{\sigma'}$ and
 ${{\sigma'} \models {Q}}$ hold.
 Because $\config{\cOR{C_{0}}{C_{1}}}{\sigma} \evaluation \config{C_{0}}{\sigma}$ holds,
 we have $\config{\cOR{C_{0}}{C_{1}}}{\sigma} \evaluation^{*} \config{\mathemptyword}{\sigma'}$.
 Hence, we have ${\sigma}\in{\mathsetweakestpre{\cOR{C_{0}}{C_{1}}}{Q}}$.

 In the similar way, we have ${\sigma}\in{\mathsetweakestpre{\cOR{C_{0}}{C_{1}}}{Q}}$
 if ${\sigma}\in{\mathsetweakestpre{C_{1}}{Q}}$ holds.

 Thus, ${\sigma}\in{\mathsetweakestpre{C}{Q}}$ holds.

 \proofcase{(${C}\equiv{\while{B}{C_{0}}}$)} 
 Let ${{\mathFVof{P}}\cup{\mathVarof{\while{B}{C}}}}={\mathsetextension{x_{1}, \dots, x_{l}}}$.

 Assume ${\sigma}\in{\mathsetweakestpre{C}{Q}}$.
 Then, there exists a state $\sigma'$ such that 
 ${\config{\while{B}{C_{0}}}{\sigma} \evaluation^{*} \config{\mathemptyword}{\sigma'}}$ and
 ${{\sigma'} \models {Q}}$ hold.
 By \cref{lem:property-config} \cref{item:while-lem-property-config},
 there exist states $\sigma_{0}, \dots, \sigma_{k}$ 
 such that ${\sigma_{0}}\equiv{\sigma}$, ${\sigma_{k}}\equiv{\sigma'}$, 
 ${\sigma_{k}}\models{\lnot B}$ hold, and
 $k>0$ implies that
 ${\config{C}{\sigma_{i}}} \evaluation^{*} {\config{\mathemptyword}{\sigma_{i+1}}}$ and
 ${\sigma_{i}}\models{B}$
 for each $i=0, \dots, k-1$.
 Let ${s_{(i*l)+(j-1)}}={\mathof{\sigma_{i}}{x_{j}}}$ 
 for each $i=0, \dots, k$ and each $j=1, \dots, l$.
  By \cref{fact:Godel-beta},
 $\mathsequence{s_{h}}{{0} \leq {h} \leq {k*(l+1)-1}}$
 can be encoded as two natural numbers $n$ and $m$.
 Then,
 \[
  {\sigma}\models{\mleft(\mathof{\beta}{n, m, 0, x_{1}} \land \dots \land \mathof{\beta}{n, m, l-1, x_{l}}\mright)}
 \]
 holds.
 
 Assume $k>0$. For each $i=0, \dots, k-1$,
 \[
  {}\models  (\mathof{\beta}{n, m, l*i, y_{1}} \land \dots \land \mathof{\beta}{n, m, l*(i+1)-1, y_{l}}) \to  ({\mathsubstbox{B}{{\mathsubst{x_{1}}{y_{1}}}, \dots, {\mathsubst{x_{l}}{y_{l}}}}})
 \]
 and
 \begin{multline*}
  {}\models
  (\mathof{\beta}{n, m, l*i, y_{1}} \land \dots \land \mathof{\beta}{n, m, l*(i+1)-1, y_{l}}) \land \\
  ( \mathof{\beta}{n, m, l*(i+1), y'_{1}} \land \dots \land \mathof{\beta}{n, m, l*(i+2)-1, y'_{l}})
  \to \\
  ({\mathweakestpre{C}{x_{1}=y'_{1} \land \dots \land x_{l}=y'_{l}}} %
  \to \\
  \mathsubstbox{
   (x_{1}=y_{1} \land \dots \land x_{l}=y_{l})
  )}{{\mathsubst{x_{1}}{y''_{1}}}, \dots, {\mathsubst{x_{l}}{y''_{l}}}}
 \end{multline*}.
 Hence, 
 \[
 {}
 \models 
 \mathof{S_{l}}{k, m, n, y'_{1}, \dots, y'_{l}, y''_{1}, \dots, y''_{l}, y'''_{1}, \dots, y'''_{l}, B, C}
 \]
 By ${\sigma_{k}}\models{\lnot B}$ and ${\sigma_{k}}\models{Q}$, we have
 \begin{multline*}
  {}\models
  (\mathof{\beta}{n, m, l*k, y'''_{1}} \land \dots \land \mathof{\beta}{n, m, l*(k+1)-1, y'''_{l}} \to \\
  ((\lnot \mathsubstbox{B}{{\mathsubst{x_{1}}{y'''_{1}}}, \dots, {\mathsubst{x_{l}}{y'''_{l}}}}) 
  \land \\
 (\mathsubstbox{Q}{{\mathsubst{x_{1}}{y'''_{1}}}, \dots, {\mathsubst{x_{l}}{y'''_{l}}}})
)).
 \end{multline*}
 Thus,
 \begin{multline*}
 {\sigma}\models \exists k \exists m \exists n \forall y_{1} \dots \forall y_{l} \forall y'_{1} \dots \forall y'_{l} \forall y''_{1} \dots \forall y''_{l} \forall y'''_{1} \dots \forall y'''_{l}  \\
  ({\mathof{F_{l}}{n,m}} \land 
  {\mathof{S_{l}}{k, m, n, y_{1}, \dots, y_{l}, y'_{1}, \dots, y'_{l}, y''_{1}, \dots, y''_{l}, B, C}} \land \\
  {\mathof{T_{l}}{k, m, n, y'''_{1}, \dots, y'''_{l}, B, Q}} )
 \end{multline*}

 Assume ${\sigma}\models{\mathweakestpre{C}{Q}}$.
 Then, there exist natural numbers $\bar{k}$, $\bar{m}$ and $\bar{n}$ such that
 \begin{align}
  \mathsubstbox{\sigma}{
  \begin{aligned}
   &\mathsubst{k}{\bar{k}}, \mathsubst{m}{\bar{m}}, \mathsubst{n}{\bar{n}}, \\
   &{\mathsubst{\mathvect{y}}{\mathvect{c}}}, {\mathsubst{\mathvect{y'}}{\mathvect{c'}}}, 
{\mathsubst{\mathvect{y''}}{\mathvect{c''}}}, {\mathsubst{\mathvect{y'''}}{\mathvect{c'''}}}
  \end{aligned}
  }  &\models  {\mathof{F_{l}}{n,m}}, \label{eq:first-valid} \\
 \mathsubstbox{\sigma}{
 \begin{aligned}
   &\mathsubst{k}{\bar{k}}, \mathsubst{m}{\bar{m}}, \mathsubst{n}{\bar{n}}, \\ 
   &{\mathsubst{\mathvect{y}}{\mathvect{c}}}, {\mathsubst{\mathvect{y'}}{\mathvect{c'}}}, 
   {\mathsubst{\mathvect{y''}}{\mathvect{c''}}}, {\mathsubst{\mathvect{y'''}}{\mathvect{c'''}}}
 \end{aligned} 
  }  
  &\models 
  {\mathof{S_{l}}{k, m, n, y_{1}, \dots, y_{l}, y'_{1}, \dots, y'_{l}, y''_{1}, \dots, y''_{l}, B, C_{0}}}, \label{eq:second-valid}   \\
 \intertext{and}
  \mathsubstbox{\sigma}{
  \begin{aligned}
   &\mathsubst{k}{\bar{k}}, \mathsubst{m}{\bar{m}}, \mathsubst{n}{\bar{n}}, \\ 
   &{\mathsubst{\mathvect{y}}{\mathvect{c}}}, {\mathsubst{\mathvect{y'}}{\mathvect{c'}}}, 
   {\mathsubst{\mathvect{y''}}{\mathvect{c''}}}, {\mathsubst{\mathvect{y'''}}{\mathvect{c'''}}}
  \end{aligned} 
  } 
  &\models 
  {\mathof{T_{l}}{k, m, n, y'''_{1}, \dots, y'''_{l}, B, Q}}, \label{eq:third-valid} 
 \end{align}
 where 
 ${\mathsubst{\mathvect{y}}{\mathvect{c}}}\equiv{\mathsubst{y_{1}}{c_{1}}, \dots, \mathsubst{y_{l}}{c_{l}}}$,
 ${\mathsubst{\mathvect{y'}}{\mathvect{c'}}}\equiv{\mathsubst{y'_{1}}{c'_{1}}, \dots, \mathsubst{y'_{l}}{c'_{l}}}$,
 ${\mathsubst{\mathvect{y''}}{\mathvect{c''}}}\equiv{\mathsubst{y''_{1}}{c''_{1}}, \dots, \mathsubst{y''_{l}}{c''_{l}}}$, and
 ${\mathsubst{\mathvect{y'''}}{\mathvect{c'''}}}\equiv{\mathsubst{y'''_{1}}{c'''_{1}}, \dots, \mathsubst{y'''_{l}}{c'''_{l}}}$,
 for all natural numbers
 $c_{1}, \dots, c_{l}, c'_{1}, \dots, c'_{l}, c''_{1}, \dots, c''_{l}, c'''_{1}, \dots, c'''_{l}$.
 By \cref{fact:Godel-beta},
 two natural numbers $n$ and $m$ encode
 some finite sequence of natural numbers $\mathsequence{s_{h}}{{0} \leq {h} \leq }$.
 By \cref{eq:first-valid} and ${\mathFVof{P}}={\mathsetextension{x_{1}, \dots, x_{l}}}$,
 we have 
 \[
 {\sigma}\models{\mleft(\mathof{\beta}{n, m, 0, x_{1}} \land \dots \land \mathof{\beta}{n, m, l-1, x_{l}}\mright)}.
 \]
 By \cref{eq:second-valid},
  \begin{multline}
  {\sigma}\models    
   \mathsubstbox{\mathweakestpre{C_{0}}{x_{1}=s_{i*l} \land \dots \land x_{l}=s_{i*(l+1)-1}}}{\mathsubst{\mathvect{x}}{\mathvect{c''}}} \\
   \to 
   (c''_{1}=s_{i*(l+1)} \land \dots \land c''_{l}=s_{i*(l+2)-1}),
   \label{eq:step-case-instance}
 \end{multline}
 where ${\mathsubst{\mathvect{x}}{\mathvect{c''}}}\equiv{\mathsubst{x_{1}}{c''_{1}}, \dots, \mathsubst{x_{l}}{c''_{l}}}$,
 for each $i=0, \dots, \bar{k}-1$.
 By \cref{eq:third-valid}, we have
\[
 {\sigma}\models 
 \lnot \mathsubstbox{B}{{\mathsubst{x_{1}}{s_{l*\bar{k}}}}, \dots, {\mathsubst{x_{l}}{s_{l*(\bar{k}+1)-1}}}}
 \land 
 \mathsubstbox{Q}{{\mathsubst{x_{1}}{s_{l*\bar{k}}}}, \dots, {\mathsubst{x_{l}}{s_{l*(\bar{k}+1)-1}}}}.   
 \]
 Let $\sigma_{i}$ be a state for $i=0, \dots, \bar{k}$, where
 ${\mathof{\sigma_{i}}{x_{j}}}={s_{i*l+j-1}}$.
 Then, ${\sigma_{0}}\equiv{\sigma}$, ${\sigma_{k}}\equiv{\sigma'}$, and
 ${\sigma_{k}}\models{\lnot B}$ hold.
 By \cref{eq:step-case-instance},
 we have
 ${\config{C_{0}}{\sigma_{i}}} \evaluation^{*} {\config{\mathemptyword}{\sigma_{i+1}}}$, and
 ${\sigma_{i}}\models{B}$
 hold for each $i=0, \dots, \bar{k}-1$.
 By \cref{lem:property-config} \cref{item:while-lem-property-config},
 $\config{\while{B}{C_{0}}}{\sigma} \evaluation^{*} \config{\mathemptyword}{\sigma_{k}}$ hold.
 Because of ${\sigma_{k}}\models{Q}$, we have ${\sigma}\in{\mathsetweakestpre{C}{Q}}$.
\end{proof}

\end{document}